\newcommand{\ta}{\ensuremath{\mathtt{a}}}
\newcommand{\tb}{\ensuremath{\mathtt{b}}}
\newcommand{\tc}{\ensuremath{\mathtt{c}}}
\newcommand{\td}{\ensuremath{\mathtt{d}}}
\DeclareMathOperator{\eword}{\varepsilon}
\newcommand{\sign}{\textsf{sign}}
\newcommand{\powerset}[1]{\mathcal{P}(#1)}
\newcommand{\colmerge}{\curlyvee}
\newcommand{\colrename}{\rho}
\newcommand{\reg}{\mathsf{REG}}
\newcommand{\cfl}{\mathsf{CFL}}
\newcommand{\mso}{\mathsf{MSO}}
\newcommand{\slice}[1]{\mathsf{sl}_{#1}}
\newcommand{\encoding}{\mathsf{enc}}
\theoremstyle{plain}
\newtheorem{theorem}{Theorem}[section]
\newtheorem{proposition}[theorem]{Proposition}
\newtheorem{lemma}[theorem]{Lemma}
\newtheorem{definition}[theorem]{Definition}
\newtheorem{example}[theorem]{Example}
\theoremstyle{remark}
\newtheorem{observation}{Observation}
\begin{document}

\begin{frontmatter}

\title{A General Information Extraction Framework\\ Based on Formal Languages}

\author{Markus L.\ Schmid} 
\ead{MLSchmid@MLSchmid.de}

\affiliation{organization={Humboldt University Berlin},
             addressline={Unter den Linden 6},
             city={Berlin},
             postcode={D-10099},
             state={Berlin},
             country={Germany}}

\begin{abstract}
For a terminal alphabet $\Sigma$ and an attribute alphabet $\Gamma$, a $(\Sigma, \Gamma)$-extractor is a function that maps every string over $\Sigma$ to a table with a column per attribute and with sets of positions of $w$ as cell entries. This rather general information extraction framework extends the well-known document spanner framework, which has intensively been investigated in the database theory community over the last decade. Moreover, our framework is based on formal language theory in a particularly clean and simple way. In addition to this conceptual contribution, we investigate closure properties, different representation formalisms and the complexity of natural decision problems for extractors.
\end{abstract}

\begin{keyword}
Information Extraction \sep Regular Languages \sep Context-Free Languages \sep Database Theory



\end{keyword}

\end{frontmatter}

\section{Introduction}

Over roughly the last decade, the data query paradigm of so-called \emph{information extraction} has received a lot of attention in database theory. In a nutshell, information extraction is the task to extract from a text document (string, word, sequence, etc.) a relational table of structured data. The most famous instance of information extraction are so-called \emph{document spanners} (introduced in~\cite{FaginEtAl2015}), which are based on \emph{spans}, e.\,g., $(3, 6)$ is a span of $w = \ta \tb \ta \ta \tb \tc \tb \ta$ referring to the factor $w[3..6] = \ta \ta \tb \tc$. A document spanner has a fixed set of variables and from any string $w$ it extracts a table with a column per variable and with spans of $w$ as cell entries (every row of the table is considered to be a result tuple of the spanner). 
We refer to the papers~\cite{DBLP:journals/sigmod/AmarilliBMN20,Schmid2024,SchmidSchweikardt2022} for general information about document spanners, and~\cite{FreydenbergerEtAl2018,PeterfreundEtAl2019,AmarilliEtAl2021,MaturanaEtAl2018,FlorenzanoEtAl2020,PeterfreundEtAl2019_2,Freydenberger2019,FreydenbergerHolldack2018,FreydenbergerThompson2020,FreydenbergerThompson2022,Peterfreund2019PhD,Peterfreund2023,DoleschalEtAl2020,SchmidSchweikardt2024,SchmidSchweikardtPODS2021,SchmidSchweikardtPODS2022,MunozRiveros2025,DoleschalEtAl2023,AmarilliEtAl2022,GawrychowskiEtAl2024} for recent results. Document spanners are attractive from a formal languages point of view, since representations and algorithmic techniques are based on classical concepts from formal languages and automata theory. 

We propose a more general information extraction framework, which properly extends that of document spanners.
Our motivation is as follows:

\begin{enumerate}
\item Several deep algorithmic techniques developed for document spanners also work without additional effort in a more general setting (e.\,g.,~\cite{AmarilliEtAl2021,AmarilliEtAl2022,GawrychowskiEtAl2024,MunozRiveros2025}), so it makes sense to formally define and investigate this setting. 
\item Our general framework embeds into classical formal language theory in an even cleaner way, i.e., the equivalence of classes of extractors and classes of formal languages is more explicit. Hence, our framework may serve as an interface especially tailored to formal language and automata theorists. 
\item Our information extraction framework seems to occupy an interesting area between language descriptors and transducers; this is worth investigating in pure terms of formal language theory (in contrast to the work in database theory, which focuses on solving data management tasks).
\end{enumerate}

\subsection{Intuitive Explanation}\label{sec:intuition}

We consider strings over a finite \emph{terminal alphabet} $\Sigma$ as data objects that we want to query. Our queries -- called \emph{extractors} -- will extract a table whose cells contain sets of positions of the terminal string:

\begin{center}
$w = \ta \tb \ta \tc \ta \td \tc \td \td$ \hspace{1cm} $\implies$ \hspace{1cm}
{\scriptsize
\begin{tabular}{|l|l|l|}\hline
$x$ & $y$ & $z$\\\hline
$\{4, 7, 9\}$ & $\{8, 9\}$ & $\{1, 3\}$\\\hline
$\{4, 7\}$ & $\{6\}$ & $\{1, 5\}$\\\hline
$\emptyset$ & $\{2, 5\}$ & $\{5\}$\\\hline
$\{7\}$ & $\emptyset$ & $\{7\}$\\\hline
\end{tabular}
}
\end{center}

Here, $\Gamma = \{x, y, z\}$ is the set of attributes that label the columns. The table can contain the empty set as entries (as illustrated above) and it can also be completely empty (i.e., it has no rows). 

This extends the setting of document spanners by replacing spans (i.e., exactly two positions of the string) by arbitrary sets of positions (thus, the setting obviously still covers document spanners).

\subsection{Contributions of this Work} 

Our main conceptual contribution is the introduction of the general information extraction framework (Section~\ref{sec:formalFramework}). We define several operators on extractors in Section~\ref{sec:operations}. In Section~\ref{sec:extractorsFormalLanguages}, we show that our extractors have a convenient formulation as formal languages, and their operators translate into natural language operations. In the rest of the paper, we focus on classes of extractors that can be described by finite automata and context-free grammars. We investigate closure properties, different representation formalisms (Section~\ref{sec:regCFExtractors}), and the complexity of several natural decision problems (Section~\ref{sec:compProblems}).

\subsection{Basic Definitions}

For an alphabet $A$, we denote by $A^*$ the set of words over $A$, and $A^+ = A^* \setminus \{\eword\}$, where the symbol $\eword$ is used to denote the empty word. Let $\reg_A$ and $\cfl_A$ denote the classes of regular and context-free languages over alphabet $A$, respectively. We use nondeterministic finite automata (NFA), deterministic finite automata (DFA) and context-free grammars (CFG) as commonly defined (see, e.\,g.,~\cite{HopcroftEtAlBook2007}). By $\powerset{A}$ we denote the power set of a set $A$. For a string $w$, we use $w[i]$ for $i \in \{1, 2, \ldots, |w|\}$ to denote the $i^{\text{th}}$ letter of $w$, and $w[i..j]$ for $i, j \in \{1, 2, \ldots, |w|\}$ with $i \leq j$ to denote the factor $w[i] w[i+1] \ldots w[j]$. We will generally use the symbol $\bot$ for signifying ``undefined''.

\section{Formal Definition of the Framework}\label{sec:formalFramework}

Our extractor framework is particularly simple and requires only a few basic definitions to be presented next. We shall discuss a brief example at the end of this section.

Let $\Sigma$ be a finite \emph{terminal alphabet} and let $\Gamma$ be a finite \emph{attribute alphabet}, and every $x \in \Gamma$ is called an \emph{attribute symbol}. For complexity considerations, we let $\Sigma$ be constant. The set $\Gamma$ will play the role of attributes of the extracted tables as explained above, i.e., the columns will be labelled by the attribute symbols from $\Gamma$. Formally, we represent the row of a table extracted from $w \in \Sigma^*$ as a \emph{$\Gamma$-tuple} (\emph{for $w$}), which is a function $t : \Gamma \to \powerset{\{1, 2, \ldots, |w|\}}$. For convenience, we also call $t(x)$ the \emph{$x$-entry} of $t$, and for the sake of presentation, we sometimes assume a fixed order $\preceq$ on $\Gamma$ and then represent $t$ in tuple notation, i.e., as $(t(x), t(y), t(z))$, where $x \preceq y \preceq z$. Moreover, we denote by $t^{\emptyset}_{\Gamma}$ the \emph{empty $\Gamma$-tuple}, i.e., $t^{\emptyset}_{\Gamma}(x) = \emptyset$ for every $x \in \Gamma$; note that $t^{\emptyset}_{\Gamma}$ is the only $\Gamma$-tuple for the empty word $\eword$. A \emph{$\Gamma$-table} (\emph{for $w$}) is any (possibly empty) set of $\Gamma$-tuples for $w$, and a \emph{$(\Sigma, \Gamma)$-extractor} is a total function $E$ that maps every terminal string $w \in \Sigma^*$ to a $\Gamma$-table for $w$. For convenience, we represent a $\Gamma$-table $T$ by listing its $\Gamma$-tuples (in any order) in tuple notation. The \emph{support} of a $(\Sigma, \Gamma)$-extractor $E$ is $\{w \in \Sigma^* \mid E(w) \neq \emptyset\}$, and we say that $E$ has \emph{finite support} if the support is a finite set. The \emph{empty $\Gamma$-extractor} $E_{\Gamma}^{\emptyset}$ is defined by $E_{\Gamma}^{\emptyset}(w) = \emptyset$ for every $w \in \Sigma^+$ and $E_{\Gamma}^{\emptyset}(w)(\eword) = \{t^{\emptyset}_{\Gamma}\}$; note that $E_{\Gamma}^{\emptyset}$ is different from the $\Gamma$-extractor that maps every string (including $\eword$) to $\emptyset$.

\begin{figure}
\begin{center}
{\scriptsize
\begin{tabular}{|l|l|l|}\hline
\multicolumn{3}{|l|}{$E_1(w)$} \\\hline\hline
$x$ & $y$ & $z$ \\\hline
$\{2, 8\}$ & $\{2, 3\}$ & $\{7, 9, 10\}$ \\\hline
$\{2, 8\}$ & $\{2, 3\}$ & $\{9, 10\}$ \\\hline
$\{2, 8\}$ & $\{2, 3\}$ & $\{7, 10\}$ \\\hline
$\{2, 8\}$ & $\{2, 3\}$ & $\{7, 9\}$ \\\hline
$\{2, 8\}$ & $\{2, 3\}$ & $\{10\}$ \\\hline
$\{2, 8\}$ & $\{2, 3\}$ & $\{9\}$ \\\hline
$\{2, 8\}$ & $\{2, 3\}$ & $\{7\}$ \\\hline
$\{2, 8\}$ & $\{2, 3\}$ & $\emptyset$ \\\hline
\end{tabular}
\hspace{1cm}
\begin{tabular}{|l|l|}\hline
\multicolumn{2}{|l|}{$E_2(w)$} \\\hline\hline
$A$ & $B$ \\\hline
$\{1, 7\}$ & $\{7, 10\}$  \\\hline
$\{1, 10\}$ & $\{7, 10\}$  \\\hline
$\{5, 7\}$ & $\{7, 10\}$  \\\hline
$\{5, 10\}$ & $\{7, 10\}$  \\\hline
$\{1, 7\}$ & $\{7\}$  \\\hline
$\{1, 10\}$ & $\{7\}$ \\\hline 
$\{5, 7\}$ & $\{7\}$ \\\hline
$\{5, 10\}$ & $\{7\}$  \\\hline
$\{1, 7\}$ & $\{10\}$\\\hline
$\{1, 10\}$ & $\{10\}$\\\hline
$\{5, 7\}$ & $\{10\}$\\\hline
$\{5, 10\}$ & $\{10\}$\\\hline
$\{1, 7\}$ & $\emptyset$\\\hline
$\{1, 10\}$ & $\emptyset$\\\hline
$\{5, 7\}$ & $\emptyset$\\\hline
$\{5, 10\}$ & $\emptyset$\\\hline
\end{tabular}
}

\end{center}
\caption{The tables extracted from $w = \tb \ta \ta \ta \tb \ta \tc \ta \td \tc \tb$ by $E_1$ and $E_2$ from Example~\ref{ExtractorExample}.}
\label{fig:extractorExample}
\end{figure}

Note that every $\Gamma$-tuple for $w$ and every $\Gamma$-table for $w$ depend on $\Gamma$ and $|w|$, but not on the actual content of $w$, i.e., not on the terminal alphabet $\Sigma$. On the other hand, $(\Sigma, \Gamma)$-extractors depend on $\Gamma$ \emph{and} $\Sigma$, because $\Sigma^*$ is their domain. 

We shall also talk about tuples and tables instead of $\Gamma$-tuples and $\Gamma$-tables whenever $\Gamma$ is clear from the context or not important. With respect to extractors, we will also talk about $\Sigma$-extractors or $\Gamma$-extractors whenever $\Gamma$ or $\Sigma$, respectively, is clear from the context or negligible, or just about extractors when both alphabets are clear from the context or negligible. 
The term $(\Sigma, \Gamma_{\downarrow})$-extractor refers to any $(\Sigma, \Gamma')$-extractor with $\Gamma' \subseteq \Gamma$.

\begin{example}\label{ExtractorExample}
Let $\Sigma = \{\ta, \tb, \tc, \td\}$, let $\Gamma_1 = \{x, y, z\}$ and let $\Gamma_2 = \{A, B\}$. Let $E_1$ be the $(\Sigma, \Gamma_1)$-extractor that maps every $w \in \Sigma^*$ to the set of $\Gamma_1$-tuples with an $x$-entry $\{i, j\}$, where $i$ is the first and $j$ the last occurrence of $\ta$ in $w$ (or $\emptyset$ if $w$ does not contain any occurrences of $\ta$), a $y$-entry that contains all starting positions of factors $\ta^\ell$ with $\ell \geq 2$, and a $z$-entry that contains some occurrences of $\tc$ and $\td$. Let $E_2$ be the $(\Sigma, \Gamma_2)$-extractor that maps every $w \in \Sigma^*$ to the set of $\Gamma_2$-tuples with an $A$-entry $\{i, j\}$, where $i < j$, $w[i] = b$ and $w[j] = c$, and a $B$-entry that contains occurrences of $c$ (i.e., it is a subset of the set of all occurrences of $\tc$). For $w = \tb \ta \ta \ta \tb \ta \tc \ta \td \tc \tb$, the $\Gamma_1$-table $E_1(w)$ and the $\Gamma_2$-table $E_2(w)$ is shown in Figure~\ref{fig:extractorExample}.
\end{example}

We summarise that for a fixed terminal alphabet $\Sigma$ and a fixed attribute alphabet $\Gamma$, we have defined the class of $(\Sigma, \Gamma)$-extractors. Moreover, as discussed in the introduction, such extractors are a formalisation of information extraction queries as investigated in database theory.

\section{Operations on Extractors}\label{sec:operations}

In this section, we define several operators on extractors based on typical operators for manipulating relational tables (i.e., operators from relational algebra), but we also consider the concatenation and Kleene star, which is motivated by our formulation of the framework purely in terms of formal languages (to be discussed in Section~\ref{sec:extractorsFormalLanguages}). Some of our binary operations require that the two extractors are defined over the same attribute alphabets, others are defined for arbitrary attribute alphabets. Note that in this section, we assume that $\Sigma$ is a fixed terminal alphabet and all our considered extractors are $\Sigma$-extractors. However, we consider extractors over different attribute alphabets.

\subsection{Set Operations} 

If $E_1$ and $E_2$ are $\Gamma$-extractors and $\odot \in \{\cup, \cap, \setminus\}$, then $E_1 \odot E_2$ is the $\Gamma$-extractor defined by $(E_1 \odot E_2)(w) = E_1(w) \odot E_2(w)$ for every $w \in \Sigma^*$, and $\neg E_1$ is a $\Gamma$-extractor defined by $\neg E_1(w) = \overline{E_1(w)}$ for every $w \in \Sigma^*$. Note that for the complement $\overline{T_1}$ and the set difference $T_1 \setminus T_2$ for $\Gamma$-tables $T_1, T_2$ for $w$, we consider the set of all $\Gamma$-tuples for $w$ as the universe.

\subsection{Join Variants}\label{sec:joinvariants}

Let $E_1$ and $E_2$ be $\Gamma_1$- and $\Gamma_2$-extractors, respectively, let $T_1$ and $T_2$ be $\Gamma_1$- and $\Gamma_2$-tables for $w_1$ and $w_2$, respectively, and let $t_1$ and $t_2$ be $\Gamma_1$- and $\Gamma_2$-tuples for $w_1$ and $w_2$, respectively. 

The product $\times$ and the (natural) join operation $\bowtie$ are well-known operators on tables, so they can be lifted to operators on extractors, i.e., for every $\circ \in \{\times, \bowtie\}$, we have $(E_1 \circ E_2)(w) = E_1(w) \circ E_2(w)$ for every $w \in \Sigma^*$ (note that $E_1 \times E_2$ is only defined if $\Gamma_1 \cap \Gamma_2 = \emptyset$).\footnote{If $\Gamma_1 \cap \Gamma_2 = \emptyset$, then product and join are the same, so we shall not consider the product in the rest of the paper.}

Recall that the join $t_1 \bowtie t_2$ is only defined if the tuples agree on their common attributes from $\Gamma_1 \cap \Gamma_2$, i.e., $t_1(x) = t_2(x)$ for every $x \in \Gamma_1 \cap \Gamma_2$. However, in our setting of information extraction, the entries of tuples are all subsets of $\mathbb{N}$. Thus, instead of letting $t_1 \bowtie t_2$ be undefined if there is some $x \in \Gamma_1 \cap \Gamma_2$ with $t_1(x) \neq t_2(x)$, we could also define $(t_1 \bowtie t_2)(x) = t_1(x) \odot t_2(x)$ for some $\odot \in \{\cup, \cap, \setminus\}$ in this case, which motivates the following generalised join variants. 

For $\odot \in \{\cup, \cap, \setminus\}$, the \emph{$\odot$-join} $t_1 \bowtie_{\odot} t_2$ is a $(\Gamma_1 \cup \Gamma_2)$-tuple defined by 

\begin{equation*}
(t_1 \bowtie_{\odot} t_2)(x) = 
\begin{cases}
t_1(x)& \text{if $x \in \Gamma_1 \setminus \Gamma_2$},\\
t_2(x)& \text{if $x \in \Gamma_2 \setminus \Gamma_1$},\\
t_1(x) \odot t_2(x)& \text{if $x \in \Gamma_1 \cap \Gamma_2$}.
\end{cases}
\end{equation*} 

Observe that the normal join is therefore given by $t_1 \bowtie t_2 = \bot$ if $t_1(x) \neq t_2(x)$ for some $x \in \Gamma_1 \cap \Gamma_2$, and $t_1 \bowtie t_2 = t_1 \bowtie_{\cup} t_2$ otherwise. 

Each $\circ \in \{\bowtie_{\cup}, \bowtie_{\cap}, \bowtie_{\setminus}, \bowtie)$ extends to tables and extractors in the obvious way, i.e., the $(\Gamma_1 \cup \Gamma_2)$-table $T_1 \circ T_2$ is defined by $T_1 \circ T_2 = \{t_1 \circ t_2 \mid t_1 \in T_1, t_2 \in T_2\}$, and the $(\Gamma_1 \cup \Gamma_2)$-extractor $E_1 \circ E_2$ is defined by $(E_1 \circ E_2)(w) = E_1(w) \circ E_2(w)$ for every $w \in \Sigma$. 

Let us discuss some examples of these join operations on tables. The following is an example for the normal join $\bowtie$:

{\scriptsize
\begin{center}
\begin{tabular}{|l|l|l|}\hline
$u$ & $v$ & $w$\\\hline
$\{1, 4\}$ & $\{2, 3\}$ & $\{7\}$\\\hline
$\emptyset$ & $\{4, 6, 7\}$ & $\{1\}$\\\hline
$\{5\}$ & $\{2\}$ & $\{4, 5\}$\\\hline
\end{tabular}
$\bowtie$
\begin{tabular}{|l|l|l|}\hline
$u$ & $v$ & $x$\\\hline
$\{5\}$ & $\{2\}$ & $\emptyset$\\\hline
$\emptyset$ & $\{4, 6, 7\}$ & $\{4, 5\}$\\\hline
$\{5\}$ & $\{2\}$ & $\{1, 2, 7\}$\\\hline
$\{1, 4\}$ & $\{1, 2, 3\}$ & $\{7\}$\\\hline
\end{tabular}
$=$
\begin{tabular}{|l|l|l|l|}\hline
$u$ & $v$ & $w$ & $x$\\\hline
$\emptyset$ & $\{4, 6, 7\}$ & $\{1\}$ & \{4, 5\}\\\hline
$\{5\}$ & $\{2\}$ & $\{4, 5\}$ & $\emptyset$\\\hline
$\{5\}$ & $\{2\}$ & $\{4, 5\}$ & $\{1, 2, 7\}$\\\hline
\end{tabular}
\end{center}
}

As an example for the generalised join operations, let $T_1$ be an $\{x, y\}$-table defined by $T_1 = ((\{1, 2\}, \{4\}), (\emptyset, \{1\}))$ and let $T_2$ be an $\{x, z\}$-table with $T_2 = ((\{2, 3\}, \{3, 7\}),(\{4\}, \{1, 3\}))$ (we assume that $x \preceq y \preceq z$). Then we have:

{\scriptsize
\begin{center}
\begin{tabular}{|l|l|l|}\hline
\multicolumn{3}{|l|}{$T_1 \bowtie_{\cup} T_2$} \\\hline\hline
$x$ & $y$ & $z$\\\hline
$\{1, 2, 3\}$ & $\{4\}$ & $\{3, 7\}$\\\hline
$\{1, 2, 4\}$ & $\{4\}$ & $\{1, 3\}$\\\hline
$\{2, 3\}$ & $\{1\}$ & $\{3, 7\}$\\\hline
$\{4\}$ & $\{1\}$ & $\{1, 3\}$\\\hline
\end{tabular}
\hspace{0.5cm}
\begin{tabular}{|l|l|l|}\hline
\multicolumn{3}{|l|}{$T_1 \bowtie_{\cap} T_2$} \\\hline\hline
$x$ & $y$ & $z$\\\hline
$\{2\}$ & $\{4\}$ & $\{3, 7\}$\\\hline
$\emptyset$ & $\{4\}$ & $\{1, 3\}$\\\hline
$\emptyset$ & $\{1\}$ & $\{3, 7\}$\\\hline
$\emptyset$ & $\{1\}$ & $\{1, 3\}$\\\hline
\end{tabular}
\hspace{0.5cm}
\begin{tabular}{|l|l|l|}\hline
\multicolumn{3}{|l|}{$T_1 \bowtie_{\setminus} T_2$} \\\hline\hline
$x$ & $y$ & $z$\\\hline
$\{1\}$ & $\{4\}$ & $\{3, 7\}$\\\hline
$\{1, 2\}$ & $\{4\}$ & $\{1, 3\}$\\\hline
$\emptyset$ & $\{1\}$ & $\{3, 7\}$\\\hline
$\emptyset$ & $\{1\}$ & $\{1, 3\}$\\\hline
\end{tabular}
\end{center}
}

\subsection{Concatenation and Kleene-Star}\label{sec:concatKleene}

Again, let $E_1$ and $E_2$ be $\Gamma_1$- and $\Gamma_2$-extractors, respectively, let $T_1$ and $T_2$ be $\Gamma_1$- and $\Gamma_2$-tables for $w_1$ and $w_2$, respectively, and let $t_1$ and $t_2$ be $\Gamma_1$- and $\Gamma_2$-tuples for $w_1$ and $w_2$, respectively. The $(\Gamma_1 \cup \Gamma_2)$-tuple $t_1 \cdot t_2$ for $w_1 \cdot w_2$ is defined by

\begin{equation*}
(t_1 \cdot t_2)(x) = 
\begin{cases}
t_1(x)& \text{if $x \in \Gamma_1 \setminus \Gamma_2$},\\
\{i + |w_1| \mid i \in t_2(x)\}& \text{if $x \in \Gamma_2 \setminus \Gamma_1$},\\
t_1(x) \cup \{i + |w_1| \mid i \in t_2(x)\}& \text{if $x \in \Gamma_1 \cap \Gamma_2$}.
\end{cases}
\end{equation*}

As an example, let $\Gamma_1 = \{u, v, w\}$ and $\Gamma_2 = \{u, v, x\}$ (with assumed order $u \preceq v \preceq w \preceq x$ for the tuple notation), let $w_1 = \ta \tb \ta$ and $w_2 = \tb \ta \tc \td$, and let $t_1 = (\{1\}, \{1, 2\}, \emptyset)$ be a $\Gamma_1$- and $t_2 = (\{1, 3\}, \{2\}, \{4\})$ a $\Gamma_2$-tuple for $w_1$ and $w_2$, respectively. Then, intuitively speaking, the concatenation operation first shifts all values of $t_2$ by $|w_1| = 3$, i.e., we get an intermediate $\Gamma_2$-tuple $t'_2 = (\{4, 6\}, \{5\}, \{7\})$, which is then combined with $t_1$ in the same way as the $\cup$-join operation works, i.e., $t_1 \cdot t_2 = t_1 \bowtie_{\cup} t'_2 = (\{1, 4, 6\}, \{1, 2, 5\}, \emptyset, \{7\})$. Also note that if $t_1 = t^{\emptyset}_{\Gamma_1}$ and $w_1 = \eword$, then $t_1 \cdot t_2 = t_2$.

The $(\Gamma_1 \cup \Gamma_2)$-table $T_1 \cdot T_2$ for $w_1 \cdot w_2$ is defined by $T_1 \cdot T_2 = \{t_1 \cdot t_2 \mid t_1 \in T_1, t_2 \in T_2\}$, and the $(\Gamma_1 \cup \Gamma_2)$-extractor $E_1 \cdot E_2$ is defined by $(E_1 \cdot E_2)(w) = \bigcup_{w = w_1 \cdot w_2} E_1(w_1) \cdot E_2(w_2)$ for every $w \in \Sigma^*$. Note that the empty $\Gamma$-extractor $E^{\emptyset}_{\Gamma}$ satisfies $E^{\emptyset}_{\Gamma} \cdot E = E$ for any $\Gamma$-extractor $E$.

We next observe that the concatenation on tuples, tables and extractors is associative, which can be easily shown (a full proof is provided in~\ref{sec:operationsAppendix}).

\begin{proposition}\label{associativityProp}
For $i \in \{1, 2, 3\}$, let $t_i$ be a $\Gamma_i$-tuple for $w_i$, let $T_i$ be a $\Gamma_i$-table for $w_i$, and let $E_i$ be a $\Gamma_i$-extractor. Then we have $(t_1 \cdot t_2) \cdot t_3 = t_1 \cdot (t_2 \cdot t_3)$, $(T_1 \cdot T_2) \cdot T_3 = T_1 \cdot (T_2 \cdot T_3)$ and $(E_1 \cdot E_2) \cdot E_3 = E_1 \cdot (E_2 \cdot E_3)$.
\end{proposition}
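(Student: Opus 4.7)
The plan is to prove the three identities in cascade --- tuples first, then tables, then extractors --- each step relying only on the previous one.

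For tuples, the main idea is to give a position-symmetric description of the three-fold concatenation. For any $x \in \Gamma_1 \cup \Gamma_2 \cup \Gamma_3$, I would define the shifted contribution of $t_i$ at $x$ by $c_i(x) = \{p + \sum_{j < i} |w_j| \mid p \in t_i(x)\}$ if $x \in \Gamma_i$, and $c_i(x) = \emptyset$ otherwise. A direct case analysis on the defining clauses of $\cdot$ then shows that $((t_1 \cdot t_2) \cdot t_3)(x) = c_1(x) \cup c_2(x) \cup c_3(x)$: the inner concatenation already realises $c_1(x) \cup c_2(x)$ (at depth two the same argument applies, with intermediate ``word'' $w_1 w_2$ of length $|w_1| + |w_2|$), and the outer concatenation then shifts $t_3(x)$ by $|w_1 \cdot w_2| = |w_1| + |w_2|$, which is exactly what $c_3(x)$ prescribes. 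The analogous computation on $t_1 \cdot (t_2 \cdot t_3)$ produces the same union, so tuple associativity follows.

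Associativity for tables is then immediate from the set-builder definition $T \cdot T' = \{t \cdot t' \mid t \in T, t' \in T'\}$: both $(T_1 \cdot T_2) \cdot T_3$ and $T_1 \cdot (T_2 \cdot T_3)$ unfold to $\{(t_1 \cdot t_2) \cdot t_3 \mid t_i \in T_i\}$ and $\{t_1 \cdot (t_2 \cdot t_3) \mid t_i \in T_i\}$, respectively, which agree by the tuple case.

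For extractors, the only additional ingredient is that table concatenation distributes over arbitrary unions, i.\,e., $(\bigcup_{k} T_k) \cdot T = \bigcup_k (T_k \cdot T)$ and symmetrically on the right; this is immediate from the set-builder definition. Combining this with $(E \cdot E')(w) = \bigcup_{w = uv} E(u) \cdot E'(v)$ and reindexing the resulting nested union over all three-factor decompositions $w = w_1 w_2 w_3$, both $((E_1 \cdot E_2) \cdot E_3)(w)$ and $(E_1 \cdot (E_2 \cdot E_3))(w)$ rewrite as $\bigcup_{w = w_1 w_2 w_3} (E_1(w_1) \cdot E_2(w_2)) \cdot E_3(w_3)$ and $\bigcup_{w = w_1 w_2 w_3} E_1(w_1) \cdot (E_2(w_2) \cdot E_3(w_3))$, respectively; applying table associativity pointwise concludes the proof. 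The only delicate step is the tuple case analysis: one must check all seven nonempty membership patterns of $x$ in $(\Gamma_1, \Gamma_2, \Gamma_3)$, being particularly careful for $x \in \Gamma_1 \cap \Gamma_3$ with $x \notin \Gamma_2$, where the shifts from the two different parenthesisations must be shown to coincide. Once the contribution description $c_i(x)$ is in place, each subcase collapses to a one-line verification.
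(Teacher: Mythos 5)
Your proof is correct and follows essentially the same route as the paper's: tuple associativity first, then tables via the set-builder definition, then extractors by reindexing the nested unions over two-factor decompositions into a single union over three-factor decompositions (with the distributivity of table concatenation over unions, which the paper uses implicitly, made explicit). Your position-shift description $c_i(x)$ for the tuple case is a welcome elaboration of the step the paper dismisses as following ``directly from the definition,'' and it correctly handles the delicate subcase $x \in \Gamma_1 \cap \Gamma_3$, $x \notin \Gamma_2$.
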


Due to this associativity, we can lift the concatenation to a Kleene-star operator in the usual way. For a $\Gamma$-extractor $E$, we define $(E)^0 = E^{\emptyset}_{\Gamma}$ and $(E)^k = (E)^{k-1} \cdot E$ for every $k \geq 2$. Finally, we set $E^* = \bigcup_{k \geq 1} (E)^k$. Note that we always have $(E^*)(\eword) = \{t^{\emptyset}_{\Gamma}\}$ for every $\Gamma$-extractor $E$, and if $E$ is a $\Gamma$-extractor, then $E^*$ is also a $\Gamma$-extractor.

\subsection{Other Unary Operators}\label{sec:otherOperators}

There are several other natural unary operators that could be defined. We define and discuss some of those that are natural.

Let $\Gamma$ be an attribute alphabet and let $\Gamma' \subseteq \Gamma$. The \emph{$\Gamma'$-projection} of a $\Gamma$-tuple $t$ is the $\Gamma'$-tuple $\pi_{\Gamma'}(t)$ obtained from $t$ by restricting its domain to $\Gamma'$, i.e., $\pi_{\Gamma'}(t)(x) = t(x)$ for every $x \in \Gamma'$. Related to the projection is the \emph{merge} operation. Let $t$ be a $\Gamma$-tuple, let $x, y \in \Gamma$ and let $\odot \in \{\cup, \cap, \setminus\}$. Then the \emph{$\odot$-merge} $\colmerge_{x, y, \odot}(t)$ of $t$ is a $\Gamma \setminus \{y\}$-tuple defined by $\colmerge_{x, y, \odot}(t)(z) = t(z)$ for every $z \in \Gamma \setminus \{x, y\}$ and $\colmerge_{x, y, \odot}(t)(x) = t(x) \odot t(y)$ for every $z \in \Gamma \setminus \{x, y\}$. The \emph{attribute renaming} is an operation that renames a column of a tuple. Let $t$ be a $\Gamma$-tuple, let $z \in \Gamma$ and let $y \notin \Gamma$, then $\colrename_{z \to y}(t)$ is a $((\Gamma \setminus \{z\}) \cup \{y\})$-tuple defined by $(\colrename_{z \to y}(t))(x) = t(x)$ for every $x \in \Gamma \setminus \{z\}$ and $(\colrename_{z \to y}(t))(y) = t(z)$. 

We lift these operations to tables and to extractors in the obvious way. More precisely, for a $\Gamma$-table $T$, a $\Gamma$-extractor $E$ and for every $f \in \{\pi_{\Gamma'}, \colmerge_{y, y', \cup}$, $\colmerge_{y, y', \cap}$, $\colmerge_{y, y', \setminus}$, $\colrename_{y \to z}\}$, we set $f(T) = \{f(t) \mid t \in T\}$ and $f(E)(w) = f(E(w))$ for every $w \in \Sigma^*$.

Intuitively speaking, the $\Gamma'$-projection $\pi_{\Gamma'}(\cdot)$ can be interpreted as removing all columns from the tables that are labelled by an attribute from $\Gamma \setminus \Gamma'$, the $\odot$-merge $\colmerge_{x, y, \odot}(t)$ can be seen as a $(\Gamma \setminus \{y\})$-projection, but the removed column with attribute $y$ is merged with column $x$ via the set operation $\odot$, and the attribute renaming $\colrename_{z \to y}$ simply relabels column $z$ to $y$.

\section{Extractors as Formal Languages}\label{sec:extractorsFormalLanguages}

We now define the framework of information extraction introduced in Section~\ref{sec:formalFramework} purely in terms of formal languages. To this end, we will define a special finite alphabet $\Delta_{\Sigma, \Gamma}$ such that we can encode $(\Sigma, \Gamma)$-extractors as languages over $\Delta_{\Sigma, \Gamma}$.

Let $\Sigma$ and $\Gamma$ be fixed. For every set $X \subseteq \Gamma$ of attributes and $b \in \Sigma$, we call $X_b$ a \emph{$\Sigma$-signed $\Gamma$-marker}, where $\sign(X_b) := b$ is the \emph{sign} of $X_b$ and $X$ is the \emph{marker set} of $X_b$. We let $\Delta_{\Sigma, \Gamma}$ be the finite alphabet of all $\Sigma$-signed $\Gamma$-markers. By a slight abuse of notation, we also use the typical set notations directly on $\Sigma$-signed $\Gamma$-markers. In particular, we write $x \in X_b$ or $x \notin X_b$ to express $x \in X$ or $x \notin X$, respectively, or, for any $\odot \in \{\cup, \cap, \setminus\}$, we write $X_b \odot Y_b$ to denote $(X \odot Y)_b$. Strings over $\Delta_{\Sigma, \Gamma}$ are called \emph{$\Sigma$-signed $\Gamma$-marker strings}. For a $\Sigma$-signed $\Gamma$-marker string $W$ of length $n$, the \emph{sign} of $W$ is defined by $\sign(W) = \sign(W[1]) \sign(W[2]) \ldots \sign(W[n])$. For example, if $\Sigma = \{\ta, \tb, \tc, \td\}$ and $\Gamma = \{x, y, z\}$, then $\{x, z\}_{\ta} \: \emptyset_{\tb} \: \{x, y, z\}_{\tb} \: \{y\}_{\td}$ is a $\Sigma$-signed $\Gamma$-marker string with sign $\ta \tb \tb \td$. For better readability, we also represented $\Sigma$-signed $\Gamma$-marker strings in a simplified way: we represent $\Sigma$-signed $\Gamma$-markers $\{x_1, \ldots, x_k\}_b$ by $[x_1, \ldots, x_k]_b$, and $\Sigma$-signed $\Gamma$-markers $\emptyset_b$ by the symbol $b$. For example, we write $\emptyset_{\ta} \: \emptyset_{\tb} \: \{x, z\}_{\ta} \: \{y\}_{\td} \: \emptyset_{\tc}$ as $\ta \: \tb \: [x, z]_{\ta} \: [y]_{\td} \: \tc$.

A set $L \subseteq (\Delta_{\Sigma, \Gamma})^*$ of $\Sigma$-signed $\Gamma$-marker strings is a \emph{$\Sigma$-signed $\Gamma$-marker} language. For a $\Sigma$-signed $\Gamma$-marker language $L$ and some $w \in \Sigma^*$, the set $\slice{w}(L) = \{W \in L \mid \sign(W) = w\}$ is called the \emph{$w$-signed slice of $L$}.

Next, we define a mapping $\encoding_{\Sigma, \Gamma}$ that encodes any pair $(w, t)$, where $w \in \Sigma^*$ and $t$ is a $\Gamma$-tuple for $w$, as the $\Sigma$-signed $\Gamma$-marker string of length $|w|$, where, for every $i \in \{1, 2, \ldots, |w|\}$, $\encoding_{\Sigma, \Gamma}(w, t)[i] = X_{w[i]}$ with $X = \{x \in \Gamma \mid i \in t(x)\}$. Note that this means that $\sign(\encoding_{\Sigma, \Gamma}(w, t)) = w$. We also lift $\encoding_{\Sigma, \Gamma}$ to a mapping that maps any $\Gamma$-table $T$ for some $w \in \Sigma^*$ to the finite set $\{\encoding_{\Sigma, \Gamma}(w, t) \mid t \in T\}$ of $\Sigma$-signed $\Gamma$-marker strings, and to a mapping that maps any $(\Sigma, \Gamma)$-extractor $E$ to the $\Sigma$-signed $\Gamma$-marker language $\{\encoding_{\Sigma, \Gamma}(w, t) \mid w \in \Sigma^*, t \in E(w)\}$. Note that for every $w \in \Sigma^*$, the $w$-signed slice of $\encoding_{\Sigma, \Gamma}(E)$ corresponds to the $\Gamma$-table $E(w)$, i.e., $\slice{w}(\encoding_{\Sigma, \Gamma}(E)) = \encoding_{\Sigma, \Gamma}(E(w))$. Moreover, we could also write $\encoding_{\Sigma, \Gamma}(E) = \bigcup_{w \in \Sigma^*} \encoding(E(w))$.

\begin{example}\label{ExtractorExample}
Let $\Sigma = \{\ta, \tb, \tc, \td\}$ and $\Gamma = \{x, y\}$. Let $E$ be the $\Sigma$-signed $\Gamma$-extractor that maps every $w \in \Sigma^*$ to the set of $\Gamma$-tuples with an $x$-entry $\{i\}$, where $w[i, i+1] = \ta \tb$ and an $y$-entry $\{j\}$, where $w[j, j+1] = \tc \tc$. For $w = \ta \tb \tc \ta \tb \tc \tc \tc$, we have that 
\begin{align*}
\encoding_{\Sigma, \Gamma}(E(w)) = \{\encoding_{\Sigma, \Gamma}(w, t) \mid t \in E(w)\} = \{\:& [x]_{\ta} \: \tb \: \tc \: \ta \: \tb \: [y]_{\tc} \: \tc \: \tc\\
&\ta \: \tb \: \tc \: [x]_{\ta} \: \tb \: [y]_{\tc} \: \tc \: \tc\\
&[x]_{\ta} \: \tb \: \tc \: \ta \: \tb \: \tc \: [y]_{\tc} \: \tc\\
&\ta \: \tb \: \tc \: [x]_{\ta} \: \tb \: \tc \: [y]_{\tc} \: \tc\}\,.
\end{align*}
Note that we use the simplified way of writing $\Sigma$-signed $\Gamma$-marker strings defined above.

The $\Sigma$-signed $\Gamma$-marker language $\encoding_{\Sigma, \Gamma}(E)$ is the following regular language 
\begin{equation*}
\{u \: [x]_{\ta} \: \tb \: v \: [x]_{\tc} \: \tc \: w \mid u, v, w \in \Sigma^*\} \cup \{u \: [x]_{\tc} \: \tc \: v \: [x]_{\ta} \: \tb \: w \mid u, v, w \in \Sigma^*\}\,.
\end{equation*}
\end{example}

The next proposition follows directly from the definitions.

\begin{proposition}
The mapping $\encoding_{\Sigma, \Gamma}$ is a bijection from 
\begin{itemize}
\item the set of pairs $(w, t)$ with $w \in \Sigma^*$ and $t$ a $\Gamma$-tuple for $w$ to the set $(\Delta_{\Sigma, \Gamma})^*$ of $\Sigma$-signed $\Gamma$-marker strings.
\item the set of $\Gamma$-tables for some $w \in \Sigma^*$ to the set of finite sets of $\Sigma$-signed $\Gamma$-marker strings with the same sign.
\item the set of $(\Sigma, \Gamma)$-extractors to the set of $\Sigma$-signed $\Gamma$-marker languages.
\end{itemize}
\end{proposition}

Technically, we could also consider extractors over arbitrary alphabets $\Sigma$ and $\Gamma$ (instead of fixed ones) and then define $\encoding(E) = \encoding_{\Sigma, \Gamma}(E)$ for every $(\Sigma, \Gamma)$-extractor $E$. However, this mapping $\encoding$ is not a bijection, because any $\Sigma$-signed $\Gamma$-marker language $L$ is also a $\Sigma'$-signed $\Gamma'$-marker language for every $\Sigma' \supseteq \Sigma$ and $\Gamma' \supseteq \Gamma$, and therefore $\encoding$ is not injective.

Let us next introduce some more convenient notations that shall be helpful in the remainder of this paper. For any signed marker string $W \in (\Delta_{\Sigma, \Gamma})^*$, we use $\llbracket W \rrbracket_{\Sigma, \Gamma}$ in order to denote the $\Gamma$-tuple described by $W$, i.e., $\encoding_{\Sigma, \Gamma}(\sign(W), \llbracket W \rrbracket_{\Sigma, \Gamma}) = W$. Likewise, for any $\Sigma$-signed $\Gamma$-marker language $L$, we use $\llbracket L \rrbracket_{\Sigma, \Gamma}$ to denote the $(\Sigma, \Gamma)$-extractor represented by $L$, i.e., $\encoding_{\Sigma, \Gamma}(E) = L$. In particular, this means that $\llbracket \slice{w}(L) \rrbracket_{\Sigma, \Gamma} = E(w)$ for every $w \in \Sigma^*$.

We can now conveniently define classes of extractors via the corresponding classes of signed marker language.

\begin{definition}\label{extractorClassDefinition}
Let $\mathcal{L}$ be a language class, let $\Sigma$ be a terminal alphabet and let $\Gamma$ be an attribute alphabet. Then $\mathcal{E}^{\Sigma, \Gamma}_{\mathcal{L}} = \{\llbracket L \rrbracket_{\Sigma, \Gamma} \mid L \in \mathcal{L} \wedge L \subseteq (\Delta_{\Sigma, \Gamma})^*\}$ is the set of $(\Sigma, \Gamma)$-extractors represented by $\Sigma$-signed $\Gamma$-marker languages from $\mathcal{L}$. We also define $\mathcal{E}^{\Sigma, \Gamma_{\downarrow}}_{\mathcal{L}} = \bigcup_{\Gamma' \subseteq \Gamma} \mathcal{E}^{\Sigma, \Gamma'}_{\mathcal{L}}$.
\end{definition}

For example, $\mathcal{E}^{\Sigma, \Gamma}_{\mathcal{L}}$ with $\mathcal{L} \in \{\reg, \cfl, \mathsf{CSL}, \mathsf{RE}\}$ are all well-defined classes of extractors, or we could consider the class of extractors represented by marker languages from the complexity classes NL, P, NP, etc. Extractors represented by undecidable marker languages are also covered by our definition.

\subsection{Operations on Signed Marker Languages}\label{sec:languageOperations}

We now define several operations on signed marker languages. More precisely, for a fixed terminal alphabet $\Sigma$, we define these operations for the class of $\Sigma$-signed $\Gamma$-marker languages for some attribute alphabet $\Gamma$. 

Let us first note that since signed marker languages are sets, their union, intersection and set difference are well-defined. Moreover, for a signed $\Gamma$-marker language $L$, we define $\overline{L}$ as its complement with respect to the universe $(\Delta_{\Sigma, \Gamma})^*$, i.e., $\overline{L} = (\Delta_{\Sigma, \Gamma})^* \setminus L$.

Next, we define the join variants (from Section~\ref{sec:joinvariants}) and the concatenation and the Kleene star (from Section~\ref{sec:concatKleene}). 

For a $\Gamma_1$-marker string $W_1$ and a $\Gamma_2$-marker string $W_2$, their concatenation $W_1 \cdot W_2$ is a $(\Gamma_1 \cup \Gamma_2)$-marker string with $\sign(W_1 \cdot W_2) = \sign(W_1) \cdot \sign(W_2)$. In order to define the different join variants for maker-strings, we first do this on the level of markers. Let $X_b \in \Delta_{\Gamma_1}$ and $Y_b \in \Delta_{\Gamma_2}$ (note that $\sign(X_b) = \sign(Y_b) = b$). Then, for every $\odot \in \{\cup, \cap, \setminus\}$, $X_b \bowtie_{\odot} Y_b$ is the $(\Gamma_1 \cup \Gamma_2)$-marker with sign $b$ and marker set $Z$, where,
\begin{itemize}
\item for every $x \in \Gamma_1 \setminus \Gamma_2$, $x \in Z \Leftrightarrow x \in X$,
\item for every $x \in \Gamma_2 \setminus \Gamma_1$, $x \in Z \Leftrightarrow x \in Y$,
\item for every $x \in \Gamma_1 \cap \Gamma_2$, $x \in Z \Leftrightarrow x \in X \odot Y$.
\end{itemize}

Next, let $W_1$ be a $\Gamma_1$-marker string, let $W_2$ be a $\Gamma_2$-marker string, and let $\sign(W_1) = \sign(W_2) = w \in \Sigma^*$. For every $\odot \in \{\cup, \cap, \setminus\}$, $W_1 \bowtie_{\odot} W_2$ is the $(\Gamma_1 \cup \Gamma_2)$-marker string with sign $w$, where, for every $i \in \{1, 2, \ldots, |w|\}$, $(W_1 \bowtie_{\odot} W_2)[i] = W_1[i] \bowtie_{\odot} W_2[i]$. 
Moreover, we define $W_1 \bowtie W_2$ as $W_1 \bowtie W_2 = \bot$ if there is some $i \in \{1, 2, \ldots, |w|\}$ such that $W_1[i] \cap \Gamma_1 \cap \Gamma_2 \neq W_2[i] \cap \Gamma_1 \cap \Gamma_2$(observe that this is equivalent to $\llbracket W_1 \rrbracket_{\Sigma, \Gamma_1}(x) \neq \llbracket W_2 \rrbracket_{\Sigma, \Gamma_2}(x)$ for some $x \in \Gamma_1 \cap \Gamma_2$), and $W_1 \bowtie W_2 = W_1 \bowtie_{\cup} W_2$ otherwise. Observe that $W_1 \circ W_2$ with $\circ \in \{\bowtie, \bowtie_{\cup}, \bowtie_{\cap}, \bowtie_{\setminus}\}$ is only defined if $\sign(W_1) = \sign(W_2)$, whereas $W_1 \cdot W_2$ is also defined in the case that $\sign(W_1) \neq \sign(W_2)$.

We can now lift these operations to marker languages. For any $\Gamma_1$-marker language $L_1$, any $\Gamma_2$-marker language $L_2$ and every $\circ \in \{\bowtie, \bowtie_{\cup}, \bowtie_{\cap}, \bowtie_{\setminus}\}$, let $L_1 \circ L_2 = \bigcup_{w \in \Sigma^*} \{W_1 \circ W_2 \mid W_1 \in \slice{w}(L_1), W_2 \in \slice{w}(L_2)\}$, and let $L_1 \cdot L_2 = \{W_1 \cdot W_2 \mid W_1 \in L_1, W_2 \in L_2\}$. Note that $L_1 \circ L_2$ and $L_1 \cdot L_2$ are signed $(\Gamma_1 \cup \Gamma_2)$-marker languages. Moreover, let the Kleene star be defined as the usual language operation, i.e., $L_1^* = \bigcup_{k \geq 0} L_1^k$.

Next, we will define the operations on marker strings and marker languages that correspond to the unary operations of Section~\ref{sec:otherOperators}. Let $W$ be some $\Gamma$-marker string with sign $w \in \Sigma^*$, let $\Gamma' \subseteq \Gamma$, let $y, y' \in \Gamma$ and let $z \notin \Gamma$. Then $\pi_{\Gamma'}(W)$ is the $\Gamma'$-marker string with sign $w$, where, for every $i \in \{1, 2, \ldots, |w|\}$, $(\pi_{\Gamma'}(W))[i] = W[i] \cap \Gamma'$.

For every $\odot \in \{\cup, \cap, \setminus\}$, $\colmerge_{y, y', \odot}(W)$ is the $(\Gamma \setminus \{y'\})$-marker string with sign $w$, where, for every $i \in \{1, 2, \ldots, |w|\}$ and $x \in \Gamma \setminus \{y, y'\}$, $x \in (\colmerge_{y, y', \odot}(W))[i] \Leftrightarrow x \in W[i]$ and
\begin{itemize}
\item $y \in (\colmerge_{y, y', \cup}(W))[i] \Leftrightarrow y \in W[i] \vee y' \in W(i)$,
\item $y \in (\colmerge_{y, y', \cap}(W))[i] \Leftrightarrow y \in W[i] \wedge y' \in W(i)$,
\item $y \in (\colmerge_{y, y', \setminus}(W))[i] \Leftrightarrow y \in W[i] \wedge y' \notin W(i)$.
\end{itemize}

Finally, $\colrename_{y \to z}(W)$ is the $((\Gamma \setminus \{y\}) \cup \{z\})$-marker string with sign $w$, where, for every $i \in \{1, 2, \ldots, n\}$ and $x \in \Gamma \setminus \{y\}$, $x \in (\colrename_{y \to z}(W))[i] \Leftrightarrow x \in W[i]$, and $z \in (\colrename_{y \to z}(W))[i] \Leftrightarrow y \in W[i]$.

Next, we lift these operations to marker languages. For any marker language $L$ and every $f \in \{\pi_{\Gamma'}, \colmerge_{y, y', \cup}, \colmerge_{y, y', \cap}, \colmerge_{y, y', \setminus}, \colrename_{y \to z}\}$, let $f(L) = \{f(W) \mid W \in L\}$. 

\subsection{Homomorphism Between Marker Languages and Extractors}\label{sec:isomorphism}

In Section~\ref{sec:operations}, we have defined several operations on extractors and in Section~\ref{sec:languageOperations}, we have defined the same operations on marker languages. It follows from the definitions of these operations that the interpretation $\llbracket \cdot \rrbracket_{\Sigma, \Gamma}$ of $\Sigma$-signed $\Gamma$-marker languages as $(\Sigma, \Gamma)$-extractors is a homomorphism with respect to these operations. This means that in order to compose extractors by our operations or to use them in order to define more complex extractors from basic extractors, we can as well work on the level of signed marker languages.

Let us first consider the Boolean set operations.

\begin{theorem}\label{homoSetOpsTheorem}
Let $\Sigma$ be a terminal alphabet and let $\Gamma$ be an attribute alphabet. For every $L_1, L_2 \in (\Delta_{\Sigma, \Gamma})^*$ and $\odot \in \{\cup, \cap, \setminus\}$, we have that $\llbracket L_1 \odot L_2 \rrbracket_{\Sigma, \Gamma} = \llbracket L_1 \rrbracket_{\Sigma, \Gamma} \odot \llbracket L_2 \rrbracket_{\Sigma, \Gamma}$; moreover, $\llbracket \overline{L} \rrbracket_{\Sigma, \Gamma} = \neg \llbracket L \rrbracket_{\Sigma, \Gamma}$ and $\llbracket L^* \rrbracket_{\Sigma, \Gamma} = (\llbracket L \rrbracket_{\Sigma, \Gamma})^*$.
\end{theorem}

Next, we observe the analogous statement for the concatenation and the join variants. Note that these operations are defined for extractors and signed marker languages over possibly different attribute alphabets.

\begin{theorem}\label{homoJoinTheorem}
Let $\Sigma$ be a terminal alphabet and let $\Gamma_1$ and $\Gamma_2$ be attribute alphabets. For every $L_1 \in (\Delta_{\Sigma, \Gamma_1})^*$, $L_2 \in (\Delta_{\Sigma, \Gamma_2})^*$ and $\odot \in \{\cdot, \bowtie, \bowtie_{\cup}, \bowtie_{\cap}, \bowtie_{\setminus}\}$, we have that $\llbracket L_1 \odot L_2 \rrbracket_{\Sigma, \Gamma_1 \cup \Gamma_2} = \llbracket L_1 \rrbracket_{\Sigma, \Gamma_1} \odot \llbracket L_2 \rrbracket_{\Sigma, \Gamma_2}$.
\end{theorem}

Finally, we consider the special unary operations.

\begin{theorem}\label{homoUnaryTheorem}
Let $\Sigma$ be a terminal alphabet and let $\Gamma$ be an attribute alphabet. For every $\Gamma' \subseteq \Gamma$, $y, y' \in \Gamma$, and $z \notin \Gamma$, we have that $\pi_{\Gamma'}(L)$ is a $\Sigma$-signed $\Gamma'$-marker language with $\llbracket \pi_{\Gamma'}(L) \rrbracket_{\Sigma, \Gamma'} = \pi_{\Gamma'}(\llbracket L \rrbracket_{\Sigma, \Gamma})$, $f(L)$ is a $\Sigma$-signed $(\Gamma \setminus \{y'\})$-marker language with $\llbracket f(L) \rrbracket_{\Sigma, \Gamma \setminus \{y'\}} = f(\llbracket L \rrbracket)_{\Sigma, \Gamma}$ for every $f \in \{\colmerge_{y, y', \cup}, \colmerge_{y, y', \cap}, \colmerge_{y, y', \setminus}\}$, and $\colrename_{y \to y'}(L)$ is a $\Sigma$-signed $((\Gamma \setminus \{y\}) \cup \{y'\})$-marker language with $\llbracket \colrename_{y \to y'}(L) \rrbracket_{\Sigma, (\Gamma \setminus \{y\}) \cup \{y'\}} = \colrename_{y \to y'}(\llbracket L \rrbracket_{\Sigma, \Gamma'})$.
\end{theorem}

These results follow more or less directly from the definitions and the proofs are straightforward. Therefore, we shall give them in~\ref{sec:IsomorphismAppendix}.

\section{Regular and Context-Free Extractors}\label{sec:regCFExtractors}

In this section (as well as in Section~\ref{sec:compProblems}), we consider the classes $\mathcal{E}^{\Sigma, \Gamma}_{\reg}$ and $\mathcal{E}^{\Sigma, \Gamma}_{\cfl}$ of \emph{regular} and \emph{context-free $(\Sigma, \Gamma)$-extractors} (see Definition~\ref{extractorClassDefinition}), i.e., extractors represented by regular and context-free signed marker languages, which can therefore be represented by NFAs and CFGs (and other equivalent description mechanisms). Those classes inherit several nice properties from the classes of regular and context-free languages. 

\subsection{Closure Properties of Regular Extractors}\label{sec:closurePropRegExtractors}

Regular extractors are closed under the extractor operations mentioned in Section~\ref{sec:operations}, which can be shown by using the interpretation of extractor operations as operations on signed marker languages (see Section~\ref{sec:languageOperations}) and then showing that regular signed marker languages are closed under these operations. 

Recall that for any alphabet $A$, we denote the class of regular and context-free languages over $A$ by $\reg_A$ and $\cfl_A$, respectively. In particular, $\reg_{\Delta_{\Sigma, \Gamma}}$ and $\cfl_{\Delta_{\Sigma, \Gamma}}$ are the classes of regular and context-free $\Sigma$-signed $\Gamma$-marker languages, respectively. According to Definition~\ref{extractorClassDefinition}, $\mathcal{E}^{\Sigma, \Gamma}_{\reg} = \{\llbracket L \rrbracket_{\Sigma, \Gamma} \mid L \in \reg_{\Delta_{\Sigma, \Gamma}}\}$ and $\mathcal{E}^{\Sigma, \Gamma}_{\cfl} = \{\llbracket L \rrbracket_{\Sigma, \Gamma} \mid L \in \cfl_{\Delta_{\Sigma, \Gamma}}\}$.

The closure properties of regular marker languages under Boolean operations, concatenation and Kleene star directly follow from the known closure properties of regular languages:

\begin{proposition}\label{simpleLanguageClosurePropertiesProposition}
If $L_1, L_2 \in \reg_{\Delta_{\Sigma, \Gamma}}$, then the languages $L_1 \cup L_2$, $L_1 \cap L_2$, $L_1 \setminus L_2$, $\overline{L_1}$, $L_1 \cdot L_2$ and $(L_1)^*$ are in $\reg_{\Delta_{\Sigma, \Gamma}}$.
\end{proposition}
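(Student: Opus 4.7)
The plan is to reduce each of the six claims to the corresponding well-known closure property of the class $\reg$ of regular languages. The only subtlety is that $L_1$ and $L_2$ are regular languages over \emph{different} alphabets, namely $\Delta_{\Sigma, \Gamma_1}$ and $\Delta_{\Sigma, \Gamma_2}$, whereas the result of each operation is supposed to live in the class of regular languages over the larger alphabet $\Delta_{\Sigma, \Gamma_1 \cup \Gamma_2}$.

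The key observation is the trivial inclusion $\Delta_{\Sigma, \Gamma_i} \subseteq \Delta_{\Sigma, \Gamma_1 \cup \Gamma_2}$ for $i \in \{1, 2\}$: every $\Gamma_i$-marker $X_b$ with $X \subseteq \Gamma_i$ is a fortiori a $(\Gamma_1 \cup \Gamma_2)$-marker. Consequently, any NFA $A_i$ accepting $L_i$ over the alphabet $\Delta_{\Sigma, \Gamma_i}$ can be viewed, without any modification of its states or transition function, as an NFA over the larger alphabet $\Delta_{\Sigma, \Gamma_1 \cup \Gamma_2}$ (symbols in $\Delta_{\Sigma, \Gamma_1 \cup \Gamma_2} \setminus \Delta_{\Sigma, \Gamma_i}$ simply never fire a transition), and it still accepts exactly $L_i$. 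After this embedding step, both $L_1$ and $L_2$ are regular languages over a common alphabet, and so the standard closure constructions for regular languages — product construction for $\cup$, $\cap$ and $\setminus$; Thompson-style constructions for $\cdot$ and ${}^*$ — yield NFAs over $\Delta_{\Sigma, \Gamma_1 \cup \Gamma_2}$ for $L_1 \cup L_2$, $L_1 \cap L_2$, $L_1 \setminus L_2$, $L_1 \cdot L_2$ and $(L_1)^*$, proving that each of them is a regular $(\Gamma_1 \cup \Gamma_2)$-marker language.

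The complement $\overline{L_1}$ requires one extra remark. As stressed in the discussion preceding Proposition~\ref{languageOperationsBooleanProposition}, the universe inside which the complement is taken is the set of all $\Gamma_1$-marker strings, i.e., $\overline{L_1} = \Delta_{\Sigma, \Gamma_1}^* \setminus L_1$. Since $L_1$ is regular over $\Delta_{\Sigma, \Gamma_1}$, its complement taken inside $\Delta_{\Sigma, \Gamma_1}^*$ is regular over $\Delta_{\Sigma, \Gamma_1}$ by the classical closure of $\reg$ under complement; applying the same alphabet-embedding observation once more shows $\overline{L_1}$ to be a regular $(\Gamma_1 \cup \Gamma_2)$-marker language.

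I do not anticipate any genuine obstacle: the whole proof is essentially a one-line appeal to the classical closure properties of $\reg$, bracketed by the trivial observation that enlarging the input alphabet of an NFA preserves both its regularity and the language it accepts. The only thing to be careful about is to state the complement claim with the correct universe, so that no stray symbols of $\Delta_{\Sigma, \Gamma_1 \cup \Gamma_2} \setminus \Delta_{\Sigma, \Gamma_1}$ accidentally end up inside $\overline{L_1}$.
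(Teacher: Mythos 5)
Your proposal is correct and matches the paper's approach: the paper states this proposition without proof, asserting only that it ``directly follows from the known closure properties of regular languages,'' which is precisely the reduction you carry out. Your two added remarks --- that an NFA over $\Delta_{\Sigma, \Gamma_i}$ can be read unchanged as an NFA over $\Delta_{\Sigma, \Gamma_1 \cup \Gamma_2}$, and that the complement must be taken relative to the universe $\Delta_{\Sigma, \Gamma_1}^*$ so that $\llbracket \overline{L_1} \rrbracket = \neg \llbracket L_1 \rrbracket$ remains consistent with Proposition~\ref{languageOperationsBooleanProposition} --- are exactly the details the paper leaves implicit, and both are handled correctly.
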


Proving this for the different join variants is also not difficult, if we argue on the level of NFAs.

\begin{proposition}\label{joinVariantsLanguageClosurePropertiesProposition}
Let $L_1 \in \reg_{\Delta_{\Sigma, \Gamma_1}}$, $L_2 \in \reg_{\Delta_{\Sigma, \Gamma_2}}$ and $\circ \in \{\bowtie, \bowtie_{\cup}, \bowtie_{\cap}, \bowtie_{\setminus}\}$. Then $L_1 \circ L_2 \in \reg_{\Delta_{\Sigma, \Gamma_1 \cup \Gamma_2}}$. 
\end{proposition}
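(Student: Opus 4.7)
My plan is to prove this by a synchronised product construction on NFAs, arguing for all four variants at once with only a small adjustment for the case $\circ = \bowtie$. Let $A_i = (Q_i, \Delta_{\Sigma, \Gamma_i}, \delta_i, q_{0,i}, F_i)$ be an NFA (without $\eword$-transitions) recognising $L_i$ for $i \in \{1, 2\}$. I will construct an NFA $A$ over the alphabet $\Delta_{\Sigma, \Gamma_1 \cup \Gamma_2}$ with state set $Q_1 \times Q_2$, initial state $(q_{0,1}, q_{0,2})$, final states $F_1 \times F_2$, and transitions defined as follows: $((q_1, q_2), Z_b, (q_1', q_2')) \in \delta$ iff there exist markers $X_b \in \Delta_{\Sigma, \Gamma_1}$ and $Y_b \in \Delta_{\Sigma, \Gamma_2}$ (note the common sign $b$) with $(q_i, X_b, q_1') \in \delta_1$, $(q_2, Y_b, q_2') \in \delta_2$, and $Z_b = X_b \circ Y_b$ in the marker sense defined just before Proposition~\ref{binaryLanguageOperationsStringsProposition}. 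For the case $\circ = \bowtie$, I additionally require $X \cap \Gamma_1 \cap \Gamma_2 = Y \cap \Gamma_1 \cap \Gamma_2$ (so that $X_b \bowtie Y_b$ is defined) and set $Z_b = X_b \bowtie_{\cup} Y_b$. Since $\Delta_{\Sigma, \Gamma_i}$ is finite, this transition relation is effectively computable and $A$ has at most $|Q_1| \cdot |Q_2|$ states.

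Next, I would verify $L(A) = L_1 \circ L_2$ by a straightforward induction on the length of accepting runs. The key point is that a run of $A$ reading a marker string $W$ of length $n$ projects onto runs of $A_1$ and $A_2$ reading marker strings $W_1, W_2$ of length $n$; because every pair of transitions used in a single step shares the same sign $b$, the resulting $W_1, W_2$ satisfy $\sign(W_1) = \sign(W_2) = \sign(W)$, and by construction $W[i] = W_1[i] \circ W_2[i]$ for each $i$, i.e.\ $W = W_1 \circ W_2$. Conversely, any $W_1 \in \slice{w}(L_1)$ and $W_2 \in \slice{w}(L_2)$ yield a run of $A$ on $W_1 \circ W_2$ by pairing the two accepting runs step by step, which is possible exactly because $|W_1| = |W_2| = |w|$ and matching signs. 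This gives $W \in L(A) \Leftrightarrow W \in \bigcup_{w} \{W_1 \circ W_2 \mid W_1 \in \slice{w}(L_1), W_2 \in \slice{w}(L_2)\} = L_1 \circ L_2$.

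I do not expect a genuine obstacle here; the only subtlety is conceptual rather than technical, namely that the sign-synchronisation inherent in the definition of $\circ$ is precisely what makes a synchronous product (as opposed to a more complicated construction with asynchronous moves) sufficient: since $\sign(W_1) = \sign(W_2)$ is forced, both automata must consume one input symbol per step of the combined run, so the length issue disappears. The only case-by-case bookkeeping is choosing the correct marker-level operation ($\cup$, $\cap$, $\setminus$, or $\bowtie$ with the extra equality constraint on $\Gamma_1 \cap \Gamma_2$) when deciding which triples $(X_b, Y_b, Z_b)$ induce a transition; this is a finite, local check.
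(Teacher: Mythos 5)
Your proposal is correct and matches the paper's own proof essentially verbatim: both use a synchronised product NFA on $Q_1 \times Q_2$ whose $Z_b$-transitions are induced by pairs of equally-signed transitions $X_b$, $Y_b$ with $Z_b = X_b \circ Y_b$, with the extra constraint $X \cap \Gamma_1 \cap \Gamma_2 = Y \cap \Gamma_1 \cap \Gamma_2$ for the case $\circ = \bowtie$. The only (harmless) differences are that you spell out the correctness argument by induction on run length where the paper merely observes it, and a small typo ($(q_i, X_b, q_1')$ should read $(q_1, X_b, q_1')$).
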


\begin{proof}
Let $M_1$ and $M_2$ be NFAs that accept $L_1$ and $L_2$, respectively. We will define an NFA that accepts $L_1 \circ L_2$, which proves the claim of the proposition.

Recall that for every $\odot \in \{\cup, \cap, \setminus\}$ we have defined $X_b \bowtie_{\odot} Y_b$ for every $\Sigma$-signed $\Gamma_1$-marker $X_b$ and every $\Sigma$-signed $\Gamma_2$-marker $Y_b$ (see Section~\ref{sec:languageOperations}). Moreover, $W \in L_1 \circ L_2$ if and only if there is a $\Gamma_1$-marker string $W_1$ and a $\Gamma_2$-marker string $W_2$ with $\sign(W_1) = \sign(W_2) = \sign(W)$ such that $W[i] = W_1[i] \bowtie_{\odot} W_2[i]$ for every $i \in \{1, 2, \ldots, |W|\}$. 

Let $M_{\bowtie_{\odot}}$ be an NFA that has $Q_1 \times Q_2$ as state set, where $Q_1$ and $Q_2$ are the state sets of $M_1$ and $M_2$, respectively. The transition function of $M_{\bowtie_{\odot}}$ is defined as follows. For every states $p_1, p_2 \in Q_1$ and $q_1, q_2 \in Q_2$, $M_1$ has an $X_b$-labelled transition from state $p_1$ to $p_2$ and $M_2$ has a $Y_b$-labelled transition from state $q_1$ to $q_2$ if and only if $M_{\bowtie_{\odot}}$ has a $X_b \bowtie_{\odot} Y_b$-labelled transition from state $(p_1, q_1)$ to $(p_2, q_2)$. We let $(q_{0, 1}, q_{0, 2})$ be the initial state of $M_{\bowtie_{\odot}}$ (where $q_{0, 1}$ and $q_{0, 2}$ are the initial states of $M_1$ and $M_2$, respectively) and we let a state $(p, q)$ be accepting in $M_{\bowtie_{\odot}}$ if $p$ is accepting in $M_1$ and $q$ is accepting in $M_2$.

We observe that a $\Sigma$-signed $(\Gamma_1 \cup \Gamma_2)$-marker string $W$ is accepted by $M_{\bowtie_{\odot}}$ if and only if $M_1$ accepts a $\Gamma_1$-marker string $W_1$, $M_2$ accepts a $\Gamma_2$-marker string $W_2$ with $\sign(W_1) = \sign(W_2) = \sign(W)$ and such that $W[i] = W_1[i] \bowtie_{\odot} W_2[i]$ for every $i \in \{1, 2, \ldots, |W|\}$. Consequently, $M_{\bowtie_{\odot}}$ accepts $L_1 \circ L_2$. 

It remains to prove that $L_1 \bowtie L_2 \in \reg_{\Delta_{\Sigma, \Gamma_1 \cup \Gamma_2}}$. A NFA $M_{\bowtie}$ that accepts $L_1 \bowtie L_2$ can be constructed quite similar to the automaton $M_{\bowtie_{\cup}}$. The transition function of $M_{\bowtie}$ is defined as for $M_{\bowtie_{\cup}}$, with the only difference that a $X_b \bowtie_{\odot} Y_b$-labelled transition from state $(p_1, q_1)$ to $(p_2, q_2)$ only exists if the corresponding $X_b$- and $Y_b$-labelled transitions from $M_1$ and $M_2$ additionally satisfy that $X_b \cap \Gamma_1 \cap \Gamma_2 = Y_b \cap \Gamma_1 \cap \Gamma_2$.
\end{proof}

Next, we consider the unary operators from Section~\ref{sec:otherOperators}.

\begin{proposition}\label{unaryLanguageClosurePropertiesProposition}
Let $L \in \reg_{\Delta_{\Sigma, \Gamma}}$, let $\Gamma' \subseteq \Gamma$, let $y, y' \in \Gamma$, and let $z \notin \Gamma$. Then we have that $\pi_{\Gamma'}(L) \in \reg_{\Delta_{\Sigma, \Gamma'}}$, $f(L) \in \reg_{\Delta_{\Sigma, \Gamma \setminus \{y'\}}}$ for every $f \in \{\colmerge_{y, y', \cup}$, $\colmerge_{y, y', \cap}$, $\colmerge_{y, y', \setminus}\}$, and $\colrename_{y \to z}(L) \in \reg_{\Delta_{\Sigma, (\Gamma \setminus \{y\}) \cup \{z\}}}$.
\end{proposition}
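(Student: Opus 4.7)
The plan is to observe that each of the five operators $\pi_{\Gamma'}$, $\colmerge_{y,y',\cup}$, $\colmerge_{y,y',\cap}$, $\colmerge_{y,y',\setminus}$ and $\colrename_{y \to y'}$ acts \emph{letter by letter} on marker strings: the $i$-th symbol of the output depends only on the $i$-th symbol of the input (never on surrounding context), and the length of the string is preserved. Hence each operator $f$ can be realised as a letter-to-letter homomorphism $h_f \colon \Delta_{\Sigma, \Gamma}^* \to \Delta_{\Sigma, \Gamma^\star}^*$, where $\Gamma^\star$ is the appropriate target attribute alphabet (namely $\Gamma'$ for projection, $\Gamma \setminus \{y'\}$ for merge, and $(\Gamma \setminus \{y\}) \cup \{y'\}$ for renaming). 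Since $\reg$ is closed under string homomorphisms, it will follow that $f(L) = h_f(L)$ is regular, and the image will lie in $\Delta_{\Sigma, \Gamma^\star}^*$ by construction.

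Concretely, I would define $h_f$ on a single marker $X_b \in \Delta_{\Sigma, \Gamma}$ as follows. For projection, set $h_{\pi_{\Gamma'}}(X_b) := (X \cap \Gamma')_b$. For the merge variants, set $h_{\colmerge_{y, y', \odot}}(X_b) := Y_b$, where $Y$ contains every $x \in \Gamma \setminus \{y, y'\}$ exactly when $x \in X$, and contains $y$ exactly when the Boolean condition from Section~\ref{sec:languageOperations} holds ($y \in X \vee y' \in X$ for $\odot = \cup$, $y \in X \wedge y' \in X$ for $\odot = \cap$, and $y \in X \wedge y' \notin X$ for $\odot = \setminus$). For renaming, set $h_{\colrename_{y \to y'}}(X_b) := Z_b$, where $Z = (X \setminus \{y\}) \cup \{y'\}$ if $y \in X$, and $Z = X$ otherwise. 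Extending $h_f$ to marker strings via $h_f(W[1] \cdots W[n]) := h_f(W[1]) \cdots h_f(W[n])$ and comparing with the position-by-position definitions of $f$ on marker strings yields $h_f(W) = f(W)$ for every $W \in \Delta_{\Sigma, \Gamma}^*$, and therefore $h_f(L) = f(L)$.

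From this the proposition will follow by the standard closure of $\reg$ under homomorphisms. If one prefers an NFA-level argument in the style of the proof of Proposition~\ref{joinVariantsLanguageClosurePropertiesProposition}, the same thing can be done by starting from an NFA $M$ accepting $L$ and constructing an NFA $M'$ with the same state set, initial state, and accepting states as $M$, but in which every transition $p \xrightarrow{X_b} q$ of $M$ is replaced by the transition $p \xrightarrow{h_f(X_b)} q$. The runs of $M'$ are in bijection with those of $M$, with the label sequence transformed pointwise by $h_f$, so $M'$ accepts precisely $h_f(L) = f(L)$.

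There is no serious obstacle here: because each operator is determined locally at each position, the whole argument reduces to a routine letter-to-letter relabelling, and the only thing worth explicitly verifying is that each image $h_f(X_b)$ is indeed a marker over the correct target alphabet, which is immediate from the definitions in Section~\ref{sec:otherOperators} (projection trims $X$ to $\Gamma'$, merge drops $y'$ from the attribute set, and renaming replaces $y$ by $y'$).
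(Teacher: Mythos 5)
Your proposal is correct and takes essentially the same approach as the paper: the paper's proof also observes that each operator acts marker-by-marker and transforms an NFA for $L$ by relabelling its transitions accordingly (it just states this informally, whereas you make the underlying letter-to-letter homomorphism explicit). Your explicit definitions of $h_f$ on individual markers match the position-wise definitions from Section~\ref{sec:languageOperations}, so the argument goes through as you describe.
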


\begin{proof}
By definition, the operation $\pi_{\Gamma'}(\cdot)$ simply removes all occurrences of markers from $\Gamma \setminus \Gamma'$ from the $\Sigma$-signed $\Gamma$-markers of a $\Sigma$-signed $\Gamma$-marker language. Thus, for any NFA $M$ that accepts a $\Sigma$-signed $\Gamma$-marker language, we can easily construct an NFA $M'$ with $L(M') = \pi_{\Gamma'}(L(M))$, which shows that $\pi_{\Gamma'}(L(M) \in \reg_{\Delta_{\Sigma, \Gamma'}}$.

By definition, the operation $f(\cdot)$ for every $f \in \{\colmerge_{y, y', \cup}, \colmerge_{y, y', \cap}, \colmerge_{y, y', \setminus}\}$ changes a $\Gamma$-marker language only insofar that $y'$ is removed from every $\Gamma$-marker, and $y$ stays in any $\Gamma$-marker $X_b$ only if $y \in X$ or $y' \in X$ (case $f = \colmerge_{y, y', \cup}$), or only if $y \in X$ and $y' \in X$ (case $f = \colmerge_{y, y', \cap}$), or only if $y \in X$ and $y' \notin X$ (case $f = \colmerge_{y, y', \setminus}$). Again, it can be easily seen that any NFA $M$ that accepts a $\Sigma$-signed $\Gamma$-marker language can be transformed into an NFA $M'$ with $L(M') = f(L(M))$ for every $f \in \{\colmerge_{y, y', \cup}, \colmerge_{y, y', \cap}, \colmerge_{y, y', \setminus}\}$.

Finally, the operation $\colrename_{y \to z}(\cdot)$ simply renames every $y$ in some $\Sigma$-signed $\Gamma$-marker of some $\Sigma$-signed $\Gamma$-marker language into $z$. Again, we can directly conclude that any NFA $M$ that accepts a $\Sigma$-signed $\Gamma$-marker language can be transformed into an NFA $M'$ with $L(M') = \colrename_{y \to z}(L(M))$. 
\end{proof}

Since the representation of extractors by signed marker languages is a homomorphism with respect to the considered operations (see Section~\ref{sec:isomorphism}), the closure properties above with respect to regular signed marker languages directly imply the following closure properties for regular extractors.

\begin{theorem}\label{closurePropForRegExtractorsProposition}
Let $E_1, E_2 \in \mathcal{E}^{\Sigma, \Gamma}_{\reg}$. Then $E_1 \odot E_2 \in \mathcal{E}^{\Sigma, \Gamma_1 \cup \Gamma_2}_{\reg}$, for every $\odot \in \{\cup, \cap, \setminus\}$. Moreover, $\neg{E_1}, (E_1)^* \in \mathcal{E}^{\Sigma, \Gamma}_{\reg}$.
\end{theorem}

\begin{theorem}\label{closurePropForRegExtractorsProposition}
Let $E_1 \in \mathcal{E}^{\Sigma, \Gamma_1}_{\reg}$ and $E_2 \in \mathcal{E}^{\Sigma, \Gamma_2}_{\reg}$. Then $E_1 \odot E_2 \in \mathcal{E}^{\Sigma, \Gamma_1 \cup \Gamma_2}_{\reg}$, for every $\odot \in \{\cdot, \bowtie, \bowtie_{\cup}, \bowtie_{\cap}, \bowtie_{\setminus}\}$. 
\end{theorem}

\begin{theorem}\label{closurePropForRegExtractorsPropositionUnaryOps}
Let $E \in \mathcal{E}^{\Sigma, \Gamma}_{\reg}$, let $\Gamma' \subseteq \Gamma$, let $y, y' \in \Gamma$, and let $z \notin \Gamma$. Then $\pi_{\Gamma'}(E) \in \mathcal{E}^{\Sigma, \Gamma'}_{\reg}$, $f(E) \in \mathcal{E}^{\Sigma, \Gamma \setminus \{y'\}}_{\reg}$ for every $f \in \{\colmerge_{y, y', \cup}, \colmerge_{y, y', \cap}, \colmerge_{y, y', \setminus}\}$, and $\colrename_{y \to z}(E) \in \mathcal{E}^{\Sigma, (\Gamma \setminus \{y\}) \cup \{z\}}_{\reg}$.
\end{theorem}

\subsection{Representations of Regular Extractors}\label{sec:regularRepresentations}

Regular extractors can be represented by any kind of regular language description mechanisms (since we only have to represent a regular signed marker language). However, there are two other natural ways of representing regular extractors -- as algebraic expressions over atomic extractors, and as MSO-formulas over $\Sigma$-strings.

Let us recall that for a terminal alphabet $\Sigma$ and an attribute alphabet $\Gamma$, the term $(\Sigma, \Gamma_{\downarrow})$-extractor refers to any $(\Sigma, \Gamma')$-extractor with $\Gamma' \subseteq \Gamma$. Moreover, the set $\mathcal{E}^{\Sigma, \Gamma_{\downarrow}}_{\mathcal{L}}$ contains all $(\Sigma, \Gamma_{\downarrow})$-extractors that can be represented by a regular $\Sigma$-signed $\Gamma'$-marker language with $\Gamma' \subseteq \Gamma$, i.e., $\mathcal{E}^{\Sigma, \Gamma_{\downarrow}}_{\mathcal{L}} = \bigcup_{\Gamma' \subseteq \Gamma} \mathcal{E}^{\Sigma, \Gamma'}_{\mathcal{L}}$. 

For an arbitrary set of $(\Sigma, \Gamma_{\downarrow})$-extractors $\mathcal{E}$ and a subset $\Lambda$ of the extractor operations defined in Section~\ref{sec:operations}, a \emph{$\Lambda$-expression over atoms from $\mathcal{E}$} is a valid algebraic expression that uses operations from $\Lambda$ and atoms from $\mathcal{E}$ (here ``valid'' means that every operation in the expression is well-defined for its arguments, and its result is a $(\Sigma, \Gamma_{\downarrow})$-extractor). By $\mathcal{E}^{\Lambda}$ we denote the set of all $(\Sigma, \Gamma_{\downarrow})$-extractors that can be described by a $\Lambda$-expression over atoms from $\mathcal{E}$. Let $\Lambda_{\textsf{full}}$ be the set of all the extractor-operators discussed in Section~\ref{sec:operations}, i.e., the set operations $\cup, \cap, \setminus, \neg$, the join variants $\bowtie, \bowtie_{\cup}, \bowtie_{\cap}, \bowtie_{\setminus}$, concatenation and Kleene star, and projection.

Due to the closure properties discussed in Section~\ref{sec:closurePropRegExtractors}, we know that:

\begin{proposition}
$(\mathcal{E}^{\Sigma, \Gamma_{\downarrow}}_{\reg})^{\Lambda_{\textsf{full}}} = \mathcal{E}^{\Sigma, \Gamma_{\downarrow}}_{\reg}$.
\end{proposition}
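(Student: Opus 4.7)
The plan is to prove the two containments separately; both reduce directly to the closure properties already established in Propositions~\ref{closurePropForRegExtractorsProposition}~and~\ref{closurePropForRegExtractorsPropositionUnaryOps}, together with the very definition of $\mathcal{E}^{\Lambda}$.

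For the inclusion $\mathcal{E}^{\Sigma, \Gamma}_{\reg} \subseteq (\mathcal{E}^{\Sigma, \Gamma}_{\reg})^{\Lambda_{\textsf{full}}}$, I would simply observe that every $E \in \mathcal{E}^{\Sigma, \Gamma}_{\reg}$ is already a (trivial) $\Lambda_{\textsf{full}}$-expression over atoms from $\mathcal{E}^{\Sigma, \Gamma}_{\reg}$, namely the expression consisting of $E$ alone. Hence $E$ lies on the right-hand side by the definition of $\mathcal{E}^{\Lambda_{\textsf{full}}}$.

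For the reverse inclusion $(\mathcal{E}^{\Sigma, \Gamma}_{\reg})^{\Lambda_{\textsf{full}}} \subseteq \mathcal{E}^{\Sigma, \Gamma}_{\reg}$, I would proceed by structural induction on a $\Lambda_{\textsf{full}}$-expression $\phi$, proving that the extractor $\llbracket\phi\rrbracket$ denoted by $\phi$ is regular (over whatever attribute alphabet the subexpression actually produces). The base case, where $\phi$ is a single atom, holds by hypothesis since the atoms are drawn from $\mathcal{E}^{\Sigma, \Gamma}_{\reg}$. For the inductive step, every operator of $\Lambda_{\textsf{full}}$ has an associated closure property already proven: the binary set operations $\cup, \cap, \setminus$, the four join variants $\bowtie, \bowtie_{\cup}, \bowtie_{\cap}, \bowtie_{\setminus}$, the concatenation $\cdot$, and the unary operators $\neg$ and Kleene star are all covered by Proposition~\ref{closurePropForRegExtractorsProposition}, while the projection $\pi_{\Gamma'}$ is handled by Proposition~\ref{closurePropForRegExtractorsPropositionUnaryOps}. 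Applying the appropriate closure result to the regular extractors supplied by the induction hypothesis yields regularity of $\llbracket\phi\rrbracket$.

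There is no substantive obstacle in the argument; it is essentially a case analysis following the shape of $\phi$. The only point that requires mild care is the bookkeeping of attribute alphabets, since different operators interact with $\Gamma$ in different ways: binary operators like $\cup$, $\bowtie_{\odot}$, and $\cdot$ yield extractors over $\Gamma_1 \cup \Gamma_2$, whereas $\pi_{\Gamma'}$ restricts to a subset. Provided that each inductive step invokes the closure propositions with the attribute-alphabet parameters matching those of the immediate subexpressions, the induction closes cleanly and, together with the constraint that the top-level expression denotes a $(\Sigma,\Gamma)$-extractor, yields the desired equality.
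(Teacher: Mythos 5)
Your proposal is correct and follows essentially the same route as the paper, which simply asserts the proposition as a consequence of the closure properties in Propositions~\ref{closurePropForRegExtractorsProposition} and~\ref{closurePropForRegExtractorsPropositionUnaryOps}; you merely make explicit the trivial containment and the structural induction that the paper leaves implicit.
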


For every signed marker $X_b \in \Delta_{\Sigma, \Gamma}$ with $b \in \Sigma$, we consider $\llbracket \{X_b\} \rrbracket_{\Sigma, \Gamma}$ as an \emph{atomic $(\Sigma, \Gamma)$-extractors}. Recall that $(\llbracket \{X_b\} \rrbracket_{\Sigma, \Gamma})(b) = t$ with $t(x) = \{1\}$ for every $x \in X$ and $t(y) = \emptyset$ for every $y \in \Gamma \setminus X$. Let $\mathbb{A}_{\Sigma, \Gamma} = \{\llbracket \{X_b\} \rrbracket_{\Sigma, \Gamma} \mid X_b \in \Delta_{\Sigma, \Gamma}\}$ be the set of all atomic $(\Sigma, \Gamma)$-extractors.

By definition, every regular $(\Sigma, \Gamma)$-extractor can be represented by a regular $\Sigma$-signed $\Gamma$-marker language, which can be represented by a regular expression over the alphabet $\Delta_{\Sigma, \Gamma}$. Moreover, such a regular expression over $\Delta_{\Sigma, \Gamma}$ translates into a valid algebraic expression that uses extractor-operations $\cup$, $\cdot$ and $*$ and that has atomic $(\Sigma, \Gamma)$-extractors as atoms. This yields the following lemma.

\begin{lemma}\label{algebraicCharacterisationRegularExtractorsLemma}
$(\mathbb{A}_{\Sigma, \Gamma})^{\{\cup, \cdot, *\}} = \mathcal{E}^{\Sigma, \Gamma}_{\reg}$.
\end{lemma}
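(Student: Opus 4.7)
The plan is to establish the two inclusions separately, using the one-to-one correspondence between the marker alphabet $\Delta_{\Sigma, \Gamma}$ and the atomic extractors. The key identity underpinning everything is $E_{X, b} = \llbracket \{X_b\} \rrbracket$, which says that a single-letter marker language corresponds exactly to an atom of $\mathbb{A}_{\Sigma, \Gamma}$. Via this identification, a $\{\cup, \cdot, *\}$-expression over atoms is structurally the same object as a classical regular expression over the finite alphabet $\Delta_{\Sigma, \Gamma}$, so the statement is essentially Kleene's theorem transported through the interpretation $\llbracket \cdot \rrbracket$.

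For the inclusion $(\mathbb{A}_{\Sigma, \Gamma})^{\{\cup, \cdot, *\}} \subseteq \mathcal{E}^{\Sigma, \Gamma}_{\reg}$, I would first observe that each atom $E_{X, b}$ is regular, since $\{X_b\}$ is a (singleton, hence regular) marker language with $\llbracket \{X_b\} \rrbracket = E_{X, b}$. Then a straightforward induction on the structure of a $\{\cup, \cdot, *\}$-expression, invoking Proposition~\ref{closurePropForRegExtractorsProposition} at each inductive step to handle $\cup$, $\cdot$, and $*$, shows that the extractor denoted at the root is regular.

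For the converse $\mathcal{E}^{\Sigma, \Gamma}_{\reg} \subseteq (\mathbb{A}_{\Sigma, \Gamma})^{\{\cup, \cdot, *\}}$, I would start from $E = \llbracket L \rrbracket$ for some regular $L \subseteq \Delta_{\Sigma, \Gamma}^*$ and apply classical Kleene's theorem over the finite alphabet $\Delta_{\Sigma, \Gamma}$ to obtain a regular expression $\alpha$ denoting $L$. I would then form a $\{\cup, \cdot, *\}$-expression $\alpha'$ over $\mathbb{A}_{\Sigma, \Gamma}$ by replacing each letter $X_b$ in $\alpha$ by the atom $E_{X, b}$ and interpreting the three operators as the corresponding extractor operators. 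A structural induction over $\alpha$, with base case $E_{X, b} = \llbracket \{X_b\} \rrbracket$ and inductive steps $\llbracket L_1 \cup L_2 \rrbracket = \llbracket L_1 \rrbracket \cup \llbracket L_2 \rrbracket$, $\llbracket L_1 \cdot L_2 \rrbracket = \llbracket L_1 \rrbracket \cdot \llbracket L_2 \rrbracket$, and $\llbracket L_1^* \rrbracket = (\llbracket L_1 \rrbracket)^*$ (all from Proposition~\ref{binaryLanguageOperationsLanguagesProposition}), then yields that $\alpha'$ denotes $\llbracket L \rrbracket = E$.

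The main mild obstacle I anticipate is bookkeeping around $\eword$: the paper's Kleene star on extractors is defined so that $(E^*)(\eword) = \{t^{\emptyset}_{\Gamma}\}$, which looks slightly nonstandard at the extractor level, and the analogous convention must be matched on the marker-language side (and also when $\alpha$ contains the constant $\eword$ or denotes the empty language). Since Proposition~\ref{binaryLanguageOperationsLanguagesProposition} has already verified that $\llbracket L^* \rrbracket = (\llbracket L \rrbracket)^*$ on the nose, I would just rely on it and not redo any of that bookkeeping; once the correspondence is set up, the proof reduces to routine structural induction.
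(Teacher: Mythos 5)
Your proof is correct and follows essentially the same route as the paper: both directions rest on the identity $E_{X,b} = \llbracket \{X_b\} \rrbracket$, the closure of regular extractors under $\cup$, $\cdot$, $*$ for one inclusion, and Kleene's theorem plus a bottom-up application of Proposition~\ref{binaryLanguageOperationsLanguagesProposition} for the other. The $\eword$-bookkeeping you flag is indeed absorbed by that proposition, exactly as the paper implicitly does.
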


\begin{proof}
The inclusion $(\mathbb{A}_{\Sigma, \Gamma})^{\{\cup, \cdot, *\}} \subseteq \mathcal{E}^{\Sigma, \Gamma}_{\reg}$ is obviously true, since every $\llbracket \{X_b\} \rrbracket_{\Sigma, \Gamma} \in \mathbb{A}_{\Sigma, \Gamma}$ is a regular $(\Sigma, \Gamma)$-extractor (described by the singleton $\Sigma$-signed $\Gamma$-marker language $\{X_b\}$) and the regular $(\Sigma, \Gamma)$-extractors are closed under union, concatenation and Kleene star.

Let us show the other direction. To this end, let $E \in \mathcal{E}^{\Sigma, \Gamma}_{\reg}$ be some $(\Sigma, \Gamma)$-extractor, and let $L$ be the $\Sigma$-signed $\Gamma$-marker language of $E$, i.e., $E = \llbracket L \rrbracket_{\Sigma, \Gamma}$. Since $E$ is regular, $L$ is a regular language over the alphabet $\Delta_{\Sigma, \Gamma}$. Consequently, $L$ can be described by an $\{\cup, \cdot, *\}$-expression over atoms of the form $\{X_b\}$ with $X_b \in \Delta_{\Sigma, \Gamma}$ (i.e., a regular expression). Now by replacing in this $\{\cup, \cdot, *\}$-expression each atom $\{X_b\}$ by the atomic $(\Sigma, \Gamma)$-extractor $\llbracket \{X_b\} \rrbracket_{\Sigma, \Gamma}$, we obtain a $\{\cup, \cdot, *\}$-expression over atoms from $\mathbb{A}_{\Sigma, \Gamma}$. Inductively applying Theorems~\ref{homoSetOpsTheorem}~and~\ref{homoJoinTheorem} bottom-up shows that this $\{\cup, \cdot, *\}$-expression over atoms from $\mathbb{A}_{\Sigma, \Gamma}$ evaluates to $\llbracket L \rrbracket_{\Sigma, \Gamma} = E$. 
\end{proof}

Let us now come to the representation by formulas of monadic second order logic ($\mso$ for short). To this end, we interpret strings over some alphabet $A$ as relational structures in the usual way, i.e., as relational structures $w = (\{1, 2, \ldots, n\}, <, (P_{a})_{a \in A})$, where $<$ is the linear order on $\{1, 2, \ldots, n\}$ and the $P_a$ are unary relations that describe a partition of $\{1, 2, \ldots, n\}$, i.e., $P_a \cap P_{a'} = \emptyset$ for every $a, a' \in A$ with $a \neq a'$, and $\bigcup_{a \in A} P_a = \{1, 2, \ldots, n\}$.

For any alphabet $A$, formulas of \emph{monadic second order logic for $A$-strings} ($\mso_{A}$) are just $\mso$-formulas over the signature $(<, (P_{a})_{a \in A})$. For an $\mso_{A}$-sentence $\phi$ and a string $w \in A^*$, we write $w \models \phi$ to denote that $\phi$ holds in the string $w$, and $L(\phi) = \{w \mid w \models \phi\}$ is the language over $A$ described by $\phi$. For an $\mso_{A}$-formula $\phi(X_1, X_2, \ldots, X_k)$, a string $w$ and sets $S_1, S_2, \ldots, S_k \subseteq \{1, 2, \ldots, |w|\}$, we write $w \models \phi(S_1, \ldots, S_k)$ to denote that the formula $\phi$ holds in the string $w$ if the variable $X_i$ is set to $S_i$ for every $i \in \{1, 2, \ldots, k\}$. Hence, we can interpret each $\mso_{A}$-formula $\phi(X_1, \ldots, X_k)$ as defining a \emph{result set} $\phi(w) = \{(S_1, S_2, \ldots, S_k) \mid w \models \phi(S_1, \ldots, S_k)\}$. In the following, we consider monadic second order logic for $\Sigma$-strings and $\Delta_{\Sigma, \Gamma}$-strings (i.e., marker strings).

It is well-known that a language $L \subseteq A^*$ is regular if and only if there is an $\mso_{A}$-sentence $\phi$ with $L = L(\phi)$. Consequently, we can also describe regular $(\Sigma, \Gamma)$-extractors by $\mso_{\Delta_{\Sigma, \Gamma}}$-sentences (since they represent the regular $\Sigma$-signed $\Gamma$-marker languages). However, we can also show that regular $(\Sigma, \Gamma)$-extractors can directly be represented by $\mso_{\Sigma}$-formulas (with a free set variable per attribute of $\Gamma$).

Let $\Gamma = \{\gamma_1, \gamma_2, \ldots, \gamma_m\}$ be an attribute alphabet. The \emph{$\Gamma$-$\mso_{\Sigma}$ formulas} are the $\mso_{\Sigma}$ formulas of the form $\phi(X_{\gamma_1}, X_{\gamma_2}, \ldots, X_{\gamma_m})$, i.e., $\mso_{\Sigma}$ formulas with exactly one free set-variable for each attribute. Since every $(S_1, S_2, \ldots, S_k) \in \phi(w)$ for a $\Gamma$-$\mso_{\Sigma}$ formula $\phi(X_{\gamma_1}, X_{\gamma_2}, \ldots, X_{\gamma_m})$ can be interpreted as a $\Gamma$-tuple, the result set $\phi(w)$ can be interpreted as a $\Gamma$-table. Consequently, any $\Gamma$-$\mso_{\Sigma}$ formula $\phi(X_{\gamma_1}, X_{\gamma_2}, \ldots, X_{\gamma_m})$ represents a $(\Sigma, \Gamma)$-extractor $\llbracket \phi \rrbracket_{\Sigma, \Gamma}$. Finally, $\mathcal{E}^{\Sigma, \Gamma}_{\mso}$ denotes the set of $(\Sigma, \Gamma)$-extractors described by $\Gamma$-$\mso_{\Sigma}$ formulas. 

In order to see that $\Gamma$-$\mso_{\Sigma}$ formulas describe exactly the set of regular $(\Sigma, \Gamma)$-extractors, we first have to show that any such formula $\phi$ can be transformed into a sentence that describes the signed marker language of the extractor $\llbracket \phi \rrbracket_{\Sigma, \Gamma}$, and vice versa. Then, the well-known Büchi–Elgot–Trakhtenbrot Theorem (see, e.g.,~\cite[Chapter $7$]{Libkin2004}) implies that the signed marker language is necessarily regular.

We first prove two lemmas.

\begin{lemma}\label{fromFormulasToSentencesLemma}
For every $\Gamma$-$\mso_{\Sigma}$ formula $\phi(X_{\gamma_1}, \ldots, X_{\gamma_m})$, we can construct an $\mso_{\Delta_{\Sigma, \Gamma}}$ sentence $\phi'$ such that $L(\phi')$ satisfies $\llbracket L(\phi') \rrbracket_{\Sigma, \Gamma} = \llbracket \phi \rrbracket_{\Sigma, \Gamma}$.
\end{lemma}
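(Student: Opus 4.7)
The plan is to build $\phi'$ from $\phi$ by a direct syntactic rewriting, using the key observation that a marker string $W \in \Delta_{\Sigma, \Gamma}^*$ already encodes both a sign $\sign(W) \in \Sigma^*$ and a $\Gamma$-tuple $\llbracket W \rrbracket$ for $\sign(W)$; concretely, for every $\gamma \in \Gamma$ we have $\llbracket W \rrbracket(\gamma) = \{i \mid \gamma \in W[i]\}$. Hence the free set variables $X_{\gamma_1}, \ldots, X_{\gamma_m}$ of $\phi$ can simply be ``read off'' from the marker labels, which is exactly what will turn $\phi$ into a sentence over $\Delta_{\Sigma, \Gamma}$.

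Concretely, I would define $\phi'$ by structural recursion on $\phi$: order atoms $x < y$, Boolean connectives, and first-order as well as (bound) second-order quantifiers are left unchanged; each letter predicate $P_b(x)$ with $b \in \Sigma$ is replaced by $\bigvee_{X \subseteq \Gamma} P_{X_b}(x)$; and each membership atom $x \in X_{\gamma_i}$ is replaced by $\bigvee_{X_b \in \Delta_{\Sigma, \Gamma}\,:\, \gamma_i \in X} P_{X_b}(x)$. Since these rewrite rules eliminate all occurrences of the free variables $X_{\gamma_i}$ and of the $\Sigma$-signature predicates $P_b$, the resulting formula $\phi'$ is a well-formed $\mso_{\Delta_{\Sigma, \Gamma}}$-sentence.

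Correctness reduces to a routine induction on $\phi$ establishing, for every $W \in \Delta_{\Sigma, \Gamma}^*$ and every assignment to the remaining (bound) variables, the invariant
\begin{equation*}
W \models \phi' \quad \Longleftrightarrow \quad \sign(W) \models \phi\bigl[X_{\gamma_i} \mapsto \llbracket W \rrbracket(\gamma_i)\bigr]_{i = 1}^{m}\,.
\end{equation*}
The base cases are immediate from the two rewrite rules together with the definitions of $\sign$ and $\llbracket \cdot \rrbracket$; the step cases go through because connectives and quantifiers are untouched, and because $|W| = |\sign(W)|$ implies that bound variables range over the same universe in both structures. Applying this invariant to the sentence $\phi'$ gives $L(\phi') = \{W \mid \sign(W) \models \phi(\llbracket W \rrbracket(\gamma_1), \ldots, \llbracket W \rrbracket(\gamma_m))\}$. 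Finally, unfolding $\llbracket L(\phi') \rrbracket(w) = \{\llbracket W \rrbracket \mid W \in \slice{w}(L(\phi'))\}$ and using the one-to-one correspondence between marker strings $W$ with $\sign(W) = w$ and $\Gamma$-tuples for $w$ (noted in Section~\ref{sec:extractorsFormalLanguages}) yields $\llbracket L(\phi') \rrbracket(w) = \llbracket \phi \rrbracket(w)$ for every $w \in \Sigma^*$.

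The construction is essentially bookkeeping, so I do not expect a serious obstacle. The only point requiring a bit of care is to verify that the disjunction $\bigvee_{X \subseteq \Gamma} P_{X_b}(x)$ faithfully captures ``position $x$ has sign $b$'', which holds because the predicates $\{P_{X_b}\}_{X_b \in \Delta_{\Sigma, \Gamma}}$ form a partition of the positions of any marker string, and to phrase the induction hypothesis generally enough to accommodate nested quantification over fresh first- and second-order variables (which pass through the translation unchanged).
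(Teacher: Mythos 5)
Your proof is correct, and it reaches the same intermediate fact as the paper (namely that $W \models \phi'$ iff $\sign(W) \models \phi(\llbracket W \rrbracket(\gamma_1), \ldots, \llbracket W \rrbracket(\gamma_m))$, from which $\llbracket L(\phi')\rrbracket = \llbracket \phi \rrbracket$ follows by the bijection between marker strings with sign $w$ and $\Gamma$-tuples for $w$), but the construction of $\phi'$ is genuinely different. The paper keeps the free set variables, prefixes $\phi$ with $\exists X_{\gamma_1} \cdots \exists X_{\gamma_m}$, and conjoins an auxiliary $\mso_{\Delta_{\Sigma,\Gamma}}$-formula $\psi$ that pins the witnesses to exactly $\llbracket W \rrbracket$; only the letter atoms $P_b(x)$ are rewritten to $\bigvee_{Y \subseteq \Gamma} P_{Y_b}(x)$, as in your translation. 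You instead eliminate the free variables outright by substituting each atom $x \in X_{\gamma_i}$ with $\bigvee_{X_b \,:\, \gamma_i \in X} P_{X_b}(x)$, which is sound because $j \in \llbracket W \rrbracket(\gamma_i) \Leftrightarrow \gamma_i \in W[j]$. Your route trades the auxiliary formula $\psi$ for a structural induction over $\phi$, and it has the aesthetic advantage of being exactly the inverse of the substitution the paper uses for the converse direction (Lemma~\ref{fromSentencesToFormulasLemma} there replaces $P_{Y_b}(i)$ by a conjunction of $P_b(i)$ and membership/non-membership atoms), so the two lemmas become symmetric. The only point worth making fully explicit is the one you already gesture at: the substitution must apply only to atoms involving the \emph{free} variables $X_{\gamma_1}, \ldots, X_{\gamma_m}$, so bound second-order variables should first be renamed apart from these; with that caveat the argument goes through.
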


\begin{proof}
Let $\phi(X_{\gamma_1}, X_{\gamma_2}, \ldots, X_{\gamma_m})$ be a $\Gamma$-$\mso_{\Sigma}$ formula. We define 
\begin{equation*}
\phi' = \exists X_{\gamma_1}, \ldots, \exists X_{\gamma_m}: \psi(X_{\gamma_1}, \ldots, X_{\gamma_m}) \wedge \phi''(X_{\gamma_1}, \ldots, X_{\gamma_m})\,,
\end{equation*}
where $\psi(X_{\gamma_1}, \ldots, X_{\gamma_m})$ is an $\mso_{\Delta_{\Sigma, \Gamma}}$ formula such that $W \models \psi(S_1, S_2, \ldots, S_m)$ if and only if $(S_1, S_2, \ldots, S_m) = \llbracket W \rrbracket_{\Sigma, \Gamma}$, and $\phi''$ is obtained from $\phi$ by replacing each atom $P_{b}(i)$ by the expression $\bigvee_{Y \subseteq \Gamma} P_{Y_b}(i)$. Note that the formula $\psi$ can be easily constructed, and that $\phi'$ is indeed an $\mso_{\Delta_{\Sigma, \Gamma}}$ sentence. We first prove the following claim. Recall that $\encoding_{\Sigma, \Gamma}$ is the encoding of pairs of a word over $\Sigma$ and a $\Gamma$-tuple for $w$ into a $\Sigma$-signed $\Gamma$-marker string.

\medskip

\noindent \emph{Claim}: For every $w \in \Sigma^*$ and $\Gamma$-tuple $(S_1, S_2, \ldots, S_m)$ for $w$, we have that 
\begin{equation*}
w \models \phi(S_1, S_2, \ldots, S_m) \iff \encoding_{\Sigma, \Gamma}(w, (S_1, \ldots, S_m)) \models \phi'\,.
\end{equation*}

\smallskip

\noindent \emph{Proof of claim}: Let $w$ be a word over $\Sigma$ and let $(S_1, S_2, \ldots, S_m)$ be a $\Gamma$-tuple for $w$. We first assume that $w \models \phi(S_1, S_2, \ldots, S_m)$. We observe that $\encoding_{\Sigma, \Gamma}(w, (S_1, \ldots, S_m)) \models \psi(S_{1}, \ldots, S_{m})$ since $(S_1, S_2, \ldots, S_m) = \llbracket \encoding_{\Sigma, \Gamma}(w, (S_1, \ldots, S_m)) \rrbracket_{\Sigma, \Gamma}$. Moreover, since $w = \sign(\encoding_{\Sigma, \Gamma}(w, (S_1, \ldots, S_m)))$, we have that $w \models P_{b}(i)$ if and only if $\encoding_{\Sigma, \Gamma}(w, (S_1, \ldots, S_m)) \models \bigvee_{Y \subseteq \Gamma} P_{Y_b}(i)$. This implies that $\encoding_{\Sigma, \Gamma}(w, (S_1, \ldots, S_m)) \models \phi''(S_1, \ldots, S_m)$. We can therefore conclude that $\encoding_{\Sigma, \Gamma}(w, (S_1, \ldots, S_m)) \models \phi'$.\par
Let us next assume that $\encoding_{\Sigma, \Gamma}(w, (S_1, \ldots, S_m)) \models \phi'$. We observe that this implies that there is a $\Gamma$-tuple $(S'_1, \ldots, S'_m)$ for $w$ with $\encoding_{\Sigma, \Gamma}(w, (S_1, \ldots, S_m)) \models \psi(S'_1, \ldots, S'_m)$ and $\encoding_{\Sigma, \Gamma}(w, (S_1, \ldots, S_m)) \models \phi''(S'_1, \ldots, S'_m)$. But $\encoding_{\Sigma, \Gamma}(w, (S_1, \ldots, S_m)) \models \psi(S'_1, \ldots, S'_m)$ implies that $(S_1, \ldots, S_m) = (S'_1, \ldots, S'_m)$, so we conclude that $\encoding_{\Sigma, \Gamma}(w, (S_1, \ldots, S_m)) \models \phi''(S_1, \ldots, S_m)$. As before, $w \models P_{b}(i)$ if and only if $\encoding_{\Sigma, \Gamma}(w, (S_1, \ldots, S_m)) \models \bigvee_{Y \subseteq \Gamma} P_{Y_b}(i)$, which means that $w \models \phi(S_1, \ldots, S_m)$. \par
This conclude the proof of the claim.

\medskip

We can now directly conclude that $\llbracket L(\phi') \rrbracket_{\Sigma, \Gamma} = \llbracket \phi \rrbracket_{\Sigma, \Gamma}$. To this end, let $w \in \Sigma^*$. If $(S_1, \ldots, S_m) \in \llbracket \phi \rrbracket_{\Sigma, \Gamma}(w)$, then $w \models \phi(S_1, S_2, \ldots, S_m)$, which, by the claim above, means that $\encoding_{\Sigma, \Gamma}(w, (S_1, \ldots, S_m)) \models \phi'$. This implies that $\encoding_{\Sigma, \Gamma}(w, (S_1, \ldots, S_m)) \in L(\phi')$ and therefore $(S_1, \ldots, S_m) \in \llbracket L(\phi') \rrbracket_{\Sigma, \Gamma}(w)$. On the other hand, if $(S_1, \ldots, S_m) \in \llbracket L(\phi') \rrbracket_{\Sigma, \Gamma}(w)$, then $\encoding_{\Sigma, \Gamma}(w, (S_1, \ldots, S_m)) \in L(\phi')$, which, by the claim above, means that $w \models \phi(S_1, S_2, \ldots, S_m)$. Thus, $(S_1, \ldots, S_m) \in \llbracket \phi \rrbracket_{\Sigma, \Gamma}(w)$. 
\end{proof}

\begin{lemma}\label{fromSentencesToFormulasLemma}
For every $\mso_{\Delta_{\Sigma, \Gamma}}$ sentence $\phi$, there is a $\Gamma$-$\mso_{\Sigma}$ formula $\phi'(X_{\gamma_1}, \ldots, X_{\gamma_m})$ that satisfies $\llbracket \phi' \rrbracket_{\Sigma, \Gamma} = \llbracket L(\phi) \rrbracket_{\Sigma, \Gamma}$.
\end{lemma}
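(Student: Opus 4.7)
The plan is to construct $\phi'$ from $\phi$ by a purely syntactic translation of the atomic predicates. Whereas the previous lemma used second-order quantification to \emph{guess} a tuple consistent with free marker predicates, here the setup is reversed: the free set variables $X_{\gamma_1}, \ldots, X_{\gamma_m}$ of $\phi'$ are exactly what lets us simulate the marker predicates of $\phi$. The key observation is that for any pair $(w, (S_1, \ldots, S_m))$ the $i$-th position of $W_{w, (S_1, \ldots, S_m)}$ carries marker $X_b$ if and only if $w[i] = b$ and $X = \{\gamma \in \Gamma \mid i \in S_\gamma\}$; hence each marker atom is expressible in MSO over $w$ with the available free set variables.

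Concretely, I would obtain $\phi'(X_{\gamma_1}, \ldots, X_{\gamma_m})$ from $\phi$ by replacing every occurrence of an atom $P_{X_b}(i)$ with
\begin{equation*}
P_b(i) \,\wedge\, \bigwedge_{\gamma \in X} i \in X_\gamma \,\wedge\, \bigwedge_{\gamma \in \Gamma \setminus X} \neg (i \in X_\gamma)\,,
\end{equation*}
leaving the order predicate, equality, and all quantifiers untouched. Since $\Delta_{\Sigma,\Gamma}$ is finite, this is a finite substitution and the result is a well-formed $\Gamma$-$\mso_\Sigma$ formula with exactly the intended free variables.

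Correctness would follow from a single claim, proved by a straightforward structural induction on $\phi$: for every $w \in \Sigma^*$ and every $\Gamma$-tuple $(S_1, \ldots, S_m)$ for $w$,
\begin{equation*}
w \models \phi'(S_1, \ldots, S_m) \;\Longleftrightarrow\; W_{w, (S_1, \ldots, S_m)} \models \phi\,.
\end{equation*}
The base case for the atoms $P_{X_b}(i)$ is immediate from the definition of $W_{w,(S_1,\ldots,S_m)}$; the order predicate and both first- and second-order quantifier steps transfer verbatim because $w$ and $W_{w,(S_1,\ldots,S_m)}$ share the same universe $\{1, \ldots, |w|\}$ and the same linear order. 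From the claim, the identity $\llbracket L(\phi) \rrbracket = \llbracket \phi' \rrbracket$ is obtained exactly as at the end of the proof of Lemma~\ref{fromFormulasToSentencesLemma}, using the bijection between pairs $(w, (S_1, \ldots, S_m))$ and marker strings $W_{w,(S_1,\ldots,S_m)}$.

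I expect no substantial obstacle. The only subtle point to get right is that the translation of $P_{X_b}(i)$ must encode \emph{exact} equality of marker sets, so the negative conjuncts for $\gamma \in \Gamma \setminus X$ are essential; dropping them would replace equality by mere containment and produce spurious satisfying assignments.
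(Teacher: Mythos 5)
Your proposal is correct and matches the paper's proof essentially verbatim: the paper also replaces each atom $P_{Y_b}(i)$ by $P_b(i) \wedge (\bigwedge_{\gamma_j \in Y} i \in X_{\gamma_j}) \wedge (\bigwedge_{\gamma_j \in \Gamma \setminus Y} i \notin X_{\gamma_j})$ and then proves the same equivalence claim for all pairs $(w, (S_1, \ldots, S_m))$ before concluding via the bijection with marker strings. Your remark that the negative conjuncts are essential for exact marker-set equality is a correct and worthwhile observation.
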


\begin{proof}
Let $\phi$ be an $\mso_{\Delta_{\Sigma, \Gamma}}$ sentence. Let $\phi'(X_{\gamma_1}, \ldots, X_{\gamma_m})$ be the $\Gamma$-$\mso_{\Sigma}$ formula obtained from $\phi$ by replacing every atom $P_{Y_b}(i)$ in $\phi$ by the formula 
\begin{equation*}
\psi_{i, Y, b}(X_{\gamma_1}, \ldots, X_{\gamma_m}) = P_b(i) \wedge (\bigwedge_{\gamma_j \in Y} i \in X_{\gamma_j}) \wedge (\bigwedge_{\gamma_j \in \Gamma \setminus Y} i \notin X_{\gamma_j})\,. 
\end{equation*}

\medskip

\noindent \emph{Claim}: For every $w \in \Sigma^*$ and every $\Gamma$-tuple $(S_1, S_2, \ldots, S_m)$ for $w$, we have that 
\begin{equation*}
\encoding_{\Sigma, \Gamma}(w, (S_1, \ldots, S_m)) \models \phi \iff w \models \phi'(S_1, S_2, \ldots, S_m)\,.
\end{equation*}

\smallskip

\noindent \emph{Proof of claim}: Let $w \in \Sigma^*$ and let $(S_1, S_2, \ldots, S_m)$ be a $\Gamma$-tuple for $w$. We observe that $\encoding_{\Sigma, \Gamma}(w, (S_1, \ldots, S_m)) \models P_{Y_b}(i)$ if and only if $w \models \psi_{i, Y, b}(S_1, \ldots, S_m)$, which directly yields the statement of the claim. 

\medskip

We can now directly conclude that $\llbracket L(\phi) \rrbracket_{\Sigma, \Gamma} = \llbracket \phi' \rrbracket_{\Sigma, \Gamma}$. To this end, let $w \in \Sigma^*$. If $(S_1, \ldots, S_m) \in \llbracket L(\phi) \rrbracket_{\Sigma, \Gamma}(w)$, then $\encoding_{\Sigma, \Gamma}(w, (S_1, \ldots, S_m)) \in L(\phi)$, which means that $\encoding_{\Sigma, \Gamma}(w, (S_1, \ldots, S_m)) \models \phi$. By the claim from above, this implies that we have $w \models \phi'(S_1, S_2, \ldots, S_m)$. Thus, $(S_1, S_2, \ldots, S_m) \in \llbracket \phi' \rrbracket_{\Sigma, \Gamma}(w)$. If, on the other hand, we have that $(S_1, \ldots, S_m) \in \llbracket \phi' \rrbracket_{\Sigma, \Gamma}(w)$, then $w \models \phi'(S_1, S_2, \ldots, S_m)$, which, by the claim above, means that we have $\encoding_{\Sigma, \Gamma}(w, (S_1, \ldots, S_m)) \models \phi$. Hence, $\encoding_{\Sigma, \Gamma}(w, (S_1, \ldots, S_m)) \in L(\phi)$ and therefore $(S_1, \ldots, S_m) \in \llbracket L(\phi) \rrbracket_{\Sigma, \Gamma}(w)$. 
\end{proof}

Now we can show that $\Gamma$-$\mso_{\Sigma}$ formulas describe exactly the set of regular $\Gamma$-extractors.

\begin{lemma}\label{MSOandRegEqualityLemma}
$\mathcal{E}^{\Sigma, \Gamma}_{\reg} = \mathcal{E}^{\Sigma, \Gamma}_{\mso}$. 
\end{lemma}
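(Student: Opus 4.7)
The plan is to prove the two inclusions separately, using Lemmas~\ref{fromFormulasToSentencesLemma} and~\ref{fromSentencesToFormulasLemma} as the bridges between $\Gamma$-$\mso_{\Sigma}$ formulas and $\mso_{\Delta_{\Sigma, \Gamma}}$ sentences, and invoking the Büchi--Elgot--Trakhtenbrot theorem as the bridge between $\mso_{\Delta_{\Sigma, \Gamma}}$ sentences and regular languages over $\Delta_{\Sigma, \Gamma}$.

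For the inclusion $\mathcal{E}^{\Sigma, \Gamma}_{\mso} \subseteq \mathcal{E}^{\Sigma, \Gamma}_{\reg}$, I start with an arbitrary $E \in \mathcal{E}^{\Sigma, \Gamma}_{\mso}$, so $E = \llbracket \phi \rrbracket$ for some $\Gamma$-$\mso_{\Sigma}$ formula $\phi$. Applying Lemma~\ref{fromFormulasToSentencesLemma} yields an $\mso_{\Delta_{\Sigma, \Gamma}}$ sentence $\phi'$ with $\llbracket L(\phi') \rrbracket = \llbracket \phi \rrbracket = E$. By Büchi--Elgot--Trakhtenbrot, $L(\phi')$ is a regular language over $\Delta_{\Sigma, \Gamma}$, so $E = \llbracket L(\phi') \rrbracket \in \mathcal{E}^{\Sigma, \Gamma}_{\reg}$ by definition of the latter class.

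For the reverse inclusion $\mathcal{E}^{\Sigma, \Gamma}_{\reg} \subseteq \mathcal{E}^{\Sigma, \Gamma}_{\mso}$, I start with $E \in \mathcal{E}^{\Sigma, \Gamma}_{\reg}$, so there exists a regular marker language $L \subseteq \Delta_{\Sigma, \Gamma}^*$ with $E = \llbracket L \rrbracket$. By Büchi--Elgot--Trakhtenbrot (applied to the alphabet $\Delta_{\Sigma, \Gamma}$), there is an $\mso_{\Delta_{\Sigma, \Gamma}}$ sentence $\phi$ with $L(\phi) = L$. Applying Lemma~\ref{fromSentencesToFormulasLemma} to $\phi$ produces a $\Gamma$-$\mso_{\Sigma}$ formula $\phi'$ with $\llbracket \phi' \rrbracket = \llbracket L(\phi) \rrbracket = \llbracket L \rrbracket = E$, so $E \in \mathcal{E}^{\Sigma, \Gamma}_{\mso}$.

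There is no real obstacle here: all the substantive work has been done in Lemmas~\ref{fromFormulasToSentencesLemma} and~\ref{fromSentencesToFormulasLemma}, which take care of converting between logic over terminal strings with free set variables and logic over marker strings as sentences. The only extra ingredient is the classical Büchi--Elgot--Trakhtenbrot characterization of regular languages by $\mso$ sentences, which we apply to the alphabet $\Delta_{\Sigma, \Gamma}$ (finite because $\Sigma$ and $\Gamma$ are finite). The whole proof is therefore a two-line argument in each direction and can be presented in a few sentences.
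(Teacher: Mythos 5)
Your proposal is correct and follows essentially the same route as the paper's own proof: both directions combine Lemmas~\ref{fromFormulasToSentencesLemma} and~\ref{fromSentencesToFormulasLemma} with the B\"uchi--Elgot--Trakhtenbrot theorem over the alphabet $\Delta_{\Sigma, \Gamma}$ in exactly the way the paper does. No gaps; nothing further to add.
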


\begin{proof}
Let $E \in \mathcal{E}^{\Sigma, \Gamma}_{\reg}$, which means that there is a regular $\Sigma$-signed $\Gamma$-marker language $L$ with $E = \llbracket L \rrbracket_{\Sigma, \Gamma}$. By the Büchi–Elgot–Trakhtenbrot theorem (see, e.\,g.,~\cite[Chapter $7$]{Libkin2004}), there is an $\mso_{\Delta_{\Sigma, \Gamma}}$ sentence $\phi$ with $L(\phi) = L$. By Lemma~\ref{fromSentencesToFormulasLemma}, we can transform $\phi$ into a $\Gamma$-$\mso_{\Sigma}$ formula $\phi'$ with $\llbracket L(\phi) \rrbracket_{\Sigma, \Gamma} = \llbracket \phi' \rrbracket_{\Sigma, \Gamma}$. We conclude: $E = \llbracket L \rrbracket_{\Sigma, \Gamma} = \llbracket L(\phi) \rrbracket_{\Sigma, \Gamma} = \llbracket \phi' \rrbracket_{\Sigma, \Gamma}$ for a $\Gamma$-$\mso_{\Sigma}$ formula $\phi'$, which means that $E \in \mathcal{E}^{\Sigma, \Gamma}_{\mso}$.

Let $E \in \mathcal{E}^{\Sigma, \Gamma}_{\mso}$, so there is a $\Gamma$-$\mso_{\Sigma}$ formula $\phi$ with $E = \llbracket \phi \rrbracket_{\Sigma, \Gamma}$. By Lemma~\ref{fromFormulasToSentencesLemma}, we can transform $\phi$ into an $\mso_{\Delta_{\Sigma, \Gamma}}$ sentence $\phi'$ such that $\llbracket L(\phi') \rrbracket_{\Sigma, \Gamma} = \llbracket \phi \rrbracket_{\Sigma, \Gamma}$. Again by the Büchi–Elgot–Trakhtenbrot theorem, we know that there is a regular $\Sigma$-signed $\Gamma$-marker language with $L = L(\phi')$. We conclude: $E = \llbracket \phi \rrbracket_{\Sigma, \Gamma} = \llbracket L(\phi') \rrbracket_{\Sigma, \Gamma} = \llbracket L \rrbracket_{\Sigma, \Gamma}$ for a regular $\Sigma$-signed $\Gamma$-marker language $L$, which means that $E \in \mathcal{E}^{\Sigma, \Gamma}_{\reg}$. 
\end{proof}

Hence, we have shown that regular extractors are also characterised by MSO-formulas and algebraic expressions using union, concatenation and Kleene star.

\begin{theorem}
$\mathcal{E}^{\Sigma, \Gamma}_{\reg} = \mathcal{E}^{\Sigma, \Gamma}_{\mso} = (\mathbb{A}_{\Sigma, \Gamma})^{\{\cup, \cdot, *\}}$.
\end{theorem}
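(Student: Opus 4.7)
The plan is to simply chain together the two lemmas just proved. The theorem asserts a three-way equality, and both nontrivial pairs of this equality have already been established individually: Lemma~\ref{algebraicCharacterisationRegularExtractorsLemma} gives $\mathcal{E}^{\Sigma, \Gamma}_{\reg} = (\mathbb{A}_{\Sigma, \Gamma})^{\{\cup, \cdot, *\}}$, and Lemma~\ref{MSOandRegEqualityLemma} gives $\mathcal{E}^{\Sigma, \Gamma}_{\reg} = \mathcal{E}^{\Sigma, \Gamma}_{\mso}$. Combining these two identities immediately yields the three-way equality of the theorem.

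Concretely, I would write the proof as a one-liner: by Lemma~\ref{MSOandRegEqualityLemma} we have $\mathcal{E}^{\Sigma, \Gamma}_{\mso} = \mathcal{E}^{\Sigma, \Gamma}_{\reg}$, and by Lemma~\ref{algebraicCharacterisationRegularExtractorsLemma} we have $\mathcal{E}^{\Sigma, \Gamma}_{\reg} = (\mathbb{A}_{\Sigma, \Gamma})^{\{\cup, \cdot, *\}}$, so all three classes coincide. No further construction is needed, because each individual equivalence has already been shown constructively: regular expressions over $\Delta_{\Sigma,\Gamma}$ translate into $\{\cup,\cdot,*\}$-expressions over atomic extractors via $\{X_b\} \mapsto E_{X,b}$, and the Büchi–Elgot–Trakhtenbrot theorem together with Lemmas~\ref{fromFormulasToSentencesLemma}~and~\ref{fromSentencesToFormulasLemma} handles the MSO direction in both ways.

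There is no real obstacle at the theorem level; the conceptual work was pushed into the lemmas. If anything, the only thing worth double-checking is that the use of $\mathcal{E}^{\Sigma, \Gamma}_{\reg}$ as the ``pivot'' class is consistent across the two lemmas (i.e.\ both lemmas are stated over the same $\Sigma$ and $\Gamma$ and employ the same definition of $\llbracket \cdot \rrbracket$), which is clear from their statements. Therefore the proof is essentially bookkeeping: cite the two lemmas and observe the transitive chain of equalities.
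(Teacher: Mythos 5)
Your proposal is correct and matches the paper exactly: the theorem is stated as an immediate consequence of Lemma~\ref{algebraicCharacterisationRegularExtractorsLemma} and Lemma~\ref{MSOandRegEqualityLemma}, with no further argument given. Chaining the two equalities through $\mathcal{E}^{\Sigma, \Gamma}_{\reg}$ is all that is needed.
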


\subsection{Closure Properties of Context-Free Extractors}

From the known closure properties of context-free languages, we can conclude the following closure properties of context-free extractors.

\begin{proposition}\label{CFLClosureProposition}
For every $E_1, E_2 \in \mathcal{E}^{\Sigma, \Gamma}_{\cfl}$ we have that $E_1 \cup E_2, (E_1)^* \in \mathcal{E}^{\Sigma, \Gamma}_{\cfl}$. For every $E_1 \in \mathcal{E}^{\Sigma, \Gamma_1}_{\cfl}$ and $E_2 \in \mathcal{E}^{\Sigma, \Gamma_2}_{\cfl}$, we have that $E_1 \cdot E_2 \in \mathcal{E}^{\Sigma, \Gamma_1 \cup \Gamma_2}_{\cfl}$.
\end{proposition}

\begin{proof}
That $L_1 \cup L_2, (L_1)^* \in \cfl_{\Delta_{\Sigma, \Gamma}}$ for every $L_1, L_2 \in \cfl_{\Delta_{\Sigma, \Gamma}}$ and that $L_1 \cdot L_2 \in \cfl_{\Delta_{\Sigma, \Gamma_1 \cup \Gamma_2}}$ for every $L_1 \in \cfl_{\Delta_{\Sigma, \Gamma_1}}$ and $L_2 \in \cfl_{\Delta_{\Sigma, \Gamma_2}}$ follows directly from the known closure properties of context-free language. Hence, the statement of the proposition follows then with Propositions~\ref{languageOperationsBooleanProposition}~and~\ref{binaryLanguageOperationsLanguagesProposition}.
\end{proof}

The non-closure of context-free languages under intersection and complement implies the following non-closure properties for context-free extractors.

\begin{proposition}\label{CFLNonClosureProposition}
Let $E_1, E_2 \in \mathcal{E}^{\Sigma, \Gamma}_{\cfl}$. Then $E_1 \cap E_2$, $E_1 \setminus E_2$ and $\neg E_1$ are not necessarily context-free. Let $E_1 \in \mathcal{E}^{\Sigma, \Gamma_1}_{\cfl}$ and $E_2 \in \mathcal{E}^{\Sigma, \Gamma_2}_{\cfl}$. Then $E_1 \circ E_2$ for $\circ \in \{\bowtie, \bowtie_{\cup}, \bowtie_{\cap}, \bowtie_{\setminus}\}$ are not necessarily context-free. 
\end{proposition}

\begin{proof}
Let $L_1$ and $L_2$ be languages over $\{\emptyset_b \mid b \in \Sigma\}$, i.e., the marker set of every marker is the empty set. Note that this means that $L_1$ and $L_2$ are $\Sigma$-signed $\Gamma$-marker languages for any arbitrary attribute alphabet $\Gamma$. By Proposition~\ref{languageOperationsBooleanProposition}, $\llbracket L_1 \rrbracket_{\Sigma, \Gamma} \cap \llbracket L_2 \rrbracket_{\Sigma, \Gamma}= \llbracket L_1 \cap L_2 \rrbracket_{\Sigma, \Gamma}$, $\llbracket L_1 \rrbracket_{\Sigma, \Gamma}\setminus \llbracket L_2 \rrbracket_{\Sigma, \Gamma}= \llbracket L_1 \setminus L_2 \rrbracket_{\Sigma, \Gamma}$, and $\neg \llbracket L_1 \rrbracket_{\Sigma, \Gamma}= \llbracket \overline{L_1} \rrbracket_{\Sigma, \Gamma}$. Furthermore, by Proposition~\ref{binaryLanguageOperationsLanguagesProposition}, $\llbracket L_1 \rrbracket_{\Sigma, \Gamma}\circ \llbracket L_2 \rrbracket_{\Sigma, \Gamma}= \llbracket L_1 \circ L_2 \rrbracket_{\Sigma, \Gamma}$ for every $\circ \in \{\bowtie, \bowtie_{\cup}, \bowtie_{\cap}, \bowtie_{\setminus}\}$.

It is a well-known fact that for context-free languages $L_1$ and $L_2$ the languages $L_1 \cap L_2$, $L_1 \setminus L_2$ and $\overline{L_1}$ are not necessarily context-free; thus, for such $L_1$ and $L_2$, the extractors $\llbracket L_1 \cap L_2 \rrbracket_{\Sigma, \Gamma}$, $\llbracket L_1 \setminus L_2 \rrbracket_{\Sigma, \Gamma}$ and $\llbracket \overline{L_1} \rrbracket_{\Sigma, \Gamma}$ are not necessarily context-free.

By definition, for every $\circ \in \{\bowtie, \bowtie_{\cup}, \bowtie_{\cap}, \bowtie_{\setminus}\}$ we have that $\emptyset_b \circ \emptyset_b = \emptyset_b$ for every $b \in \Sigma$, and $L_1 \circ L_2 = \bigcup_{w \in \Sigma^*} \{W_1 \circ W_2 \mid W_1 \in \slice{w}(L_1), W_2 \in \slice{w}(L_2)\}$. Since for every $w = a_1 \ldots a_n \in \Sigma^*$ either $\slice{w}(L_1) = \{\emptyset_{a_1} \ldots \emptyset_{a_n}\}$ or $\slice{w}(L_1) = \emptyset$ (and analogously for $L_2$), we can conclude that $L_1 \circ L_2 = L_1 \cap L_2$. Consequently, for $L_1$ and $L_2$ with $L_1 \cap L_2$ not context-free, we can conclude that $\llbracket L_1 \rrbracket_{\Sigma, \Gamma} \circ \llbracket L_2 \rrbracket_{\Sigma, \Gamma} = \llbracket L_1 \circ L_2 \rrbracket_{\Sigma, \Gamma} = \llbracket L_1 \cap L_2 \rrbracket_{\Sigma, \Gamma}$ is not a context-free extractor, for $\circ \in \{\bowtie, \bowtie_{\cup}, \bowtie_{\cap}, \bowtie_{\setminus}\}$. 
\end{proof}

Finally, we observe that the context-free extractors are closed under the unary operators from Section~\ref{sec:otherOperators}.

\begin{proposition}\label{CFLClosurePropositionUnary}
Let $E \in \mathcal{E}^{\Sigma, \Gamma}_{\cfl}$, let $\Gamma' \subseteq \Gamma$, let $y, y' \in \Gamma$, and let $z \notin \Gamma$. Then $\pi_{\Gamma'}(E) \in \mathcal{E}^{\Sigma, \Gamma'}_{\cfl}$, $f(E) \in \mathcal{E}^{\Sigma, \Gamma \setminus \{y'\}}_{\cfl}$ for every $f \in \{\colmerge_{y, y', \cup}, \colmerge_{y, y', \cap}, \colmerge_{y, y', \setminus}\}$, and $\colrename_{y \to z}(E) \in \mathcal{E}^{\Sigma, (\Gamma \setminus \{y\}) \cup \{z\}}_{\cfl}$.
\end{proposition}

\begin{proof}
Let $L \in \cfl_{\Delta_{\Sigma, \Gamma}}$. Recall that the operation $\pi_{\Gamma'}(\cdot)$ simply removes all occurrences of markers from $\Gamma \setminus \Gamma'$ from $L$; the operation $f(\cdot)$ for every $f \in \{\colmerge_{y, y', \cup}, \colmerge_{y, y', \cap}, \colmerge_{y, y', \setminus}\}$ changes $L$ only insofar that $y'$ is removed from every $\Sigma$-signed $\Gamma$-marker, and $y$ stays in any $\Sigma$-signed $\Gamma$-marker $X_b$ only if $y \in X$ or $y' \in X$ (case $f = \colmerge_{y, y', \cup}$), or only if $y \in X$ and $y' \in X$ (case $f = \colmerge_{y, y', \cap}$), or only if $y \in X$ and $y' \notin X$ (case $f = \colmerge_{y, y', \setminus}$); and the operation $\colrename_{y \to z}(\cdot)$ simply renames every $y$ in some $\Sigma$-signed $\Gamma$-marker of $L$ into $z$. Consequently, by accordingly manipulating all $\Sigma$-signed $\Gamma$-markers of the context-free grammar of $L$, we obtain a context-free grammar for the $\Sigma$-signed $\Gamma'$-marker language $\pi_{\Gamma'}(L)$, a context-free grammar for the $\Sigma$-signed $(\Gamma \setminus \{y'\})$-marker language $f(L)$ for every $f \in \{\colmerge_{y, y', \cup}, \colmerge_{y, y', \cap}, \colmerge_{y, y', \setminus}\}$, and a context-free grammar for the $\Sigma$-signed $((\Gamma \setminus \{y\}) \cup \{z\})$-marker language $\colrename_{y \to z}(L)$. 
\end{proof}

\section{Computational Problems}\label{sec:compProblems}

In this section, we need some more details and conventions about finite automata. In general, we write NFA as $M = (Q, \Sigma, \delta, q_0, F)$, where $Q$ is the set of states, $\Sigma$ is the input alphabet, $\delta : Q \times \Sigma \to \powerset{Q}$ is the transition function (or $\delta : Q \times \Sigma \to Q$ if $M$ is deterministic), $q_0$ is the start state and $F$ is the set of final states. For $i \in \mathbb{N}$, by $M_i$ we mean an NFA of the form $M_i = (Q_i, \Sigma, \delta_i, q_{0, i}, F_i)$.

Context-free grammars are tuples $G = (V, \Sigma, P, S)$, where $V$ is the set of non-terminals, $\Sigma$ is the set of terminals, $P \subseteq V \times (V \cup \Sigma)^*$ is the set of rules and $S$ is the start non-terminal. We denote rules $(A, v) \in S$ also by $A \to v$.

We will now investigate certain natural computational problems for extractors, and we will mainly concentrate on regular and context-free extractors, which, if not stated otherwise, are represented as NFAs and CFGs, respectively. For convenience, we simply write $|E|$ to denote the size of the NFA or CFG that represents the extractor $E$. We first discuss several problems that trivially reduce to well-known formal language problems on the marker languages. 

The \emph{tuple membership problem} (for a class $\mathcal{E}$ of extractors) is to decide for a given extractor $E \in \mathcal{E}$, a string $w$ and a tuple $t$ for $w$ whether $t \in E(w)$.

Obviously, checking $t \in E(w)$ boils down to checking $\encoding_{\Sigma, \Gamma}(w, t) \in \encoding_{\Sigma, \Gamma}(E)$, which means that the tuple membership problem for regular or context-free extractors inherits the upper bounds of the membership problem for regular or context-free languages (and similarly for any other language class). Moreover, we can interpret any $L \subseteq \Sigma^*$ as a $\Sigma$-signed $\emptyset$-marker language and then check $w \in L$ by checking $t^{\emptyset}_{\Gamma} \in \llbracket L \rrbracket_{\Sigma, \Gamma}(w)$. Hence, the conditional lower bounds for the membership problem for regular and context-free languages (see~\cite{BackursIndyk2016,AbboudEtAl2018}) carry over to the tuple membership problem for regular and context-free extractors.

In principle, we could also compute the full set $E(w)$ by testing $\encoding_{\Sigma, \Gamma}(w, t) \in \encoding_{\Sigma, \Gamma}(E)$ for every $\Gamma$-tuple $t$ for $w$. This is of course rather inefficient, since there are $2^{|w| |\Gamma|}$ different $\Gamma$-tuples for $w$ that we have to consider. However, for regular and context-free extractors, we can do much better, which has been thoroughly investigated in terms of an enumeration problem (i.e., we wish to enumerate all elements from $E(w)$ without repetition and with a guaranteed upper bound on the delay between two elements). See, e.\,g.,~\cite{AmarilliEtAl2021,GawrychowskiEtAl2024,MunozRiveros2025} for regular extractors and~\cite{AmarilliEtAl2022} for context-free extractors (in addition, there are several papers investigating the enumeration problem for regular and context-free document spanners (e.\,g.~\cite{Peterfreund2019PhD,Peterfreund2023,FlorenzanoEtAl2020}), which are a subset of our regular and context-free extractors). For self-containment, we shall discuss in Section~\ref{sec:enum} below these known results and how they apply to the framework presented here.

Let us move on to more complex decision problems. For $(\Sigma, \Gamma)$-extractors $E_1, E_2$, we write $E_1 \subseteq E_2$ if and only if $E_1(w) \subseteq E_2(w)$ for every $w \in \Sigma^*$ (note that $E_1 = E_2 \iff E_1 \subseteq E_2 \wedge E_2 \subseteq E_1$). The \emph{containment} and \emph{equivalence problem} is to decide for given $(\Sigma, \Gamma)$-extractors $E_1, E_2$ whether $E_1 \subseteq E_2$ or $E_1 = E_2$, respectively, and the \emph{emptiness problem} is to decide for a given $(\Sigma, \Gamma)$-extractor $E$ whether there exists some $w \in \Sigma^*$ with $E(w) \neq \emptyset$. These decision problems are obviously identical to the corresponding problems on the marker languages.

\begin{observation}
For $(\Sigma, \Gamma)$-marker languages $L_1$ and $L_2$, we have that $\llbracket L_1 \rrbracket_{\Sigma, \Gamma} \subseteq \llbracket L_2 \rrbracket_{\Sigma, \Gamma}$ if and only if $L_1 \subseteq L_2$, $\llbracket L_1 \rrbracket_{\Sigma, \Gamma} = \llbracket L_2 \rrbracket_{\Sigma, \Gamma}$ if and only if $L_1 = L_2$, and $\llbracket L_1 \rrbracket_{\Sigma, \Gamma}(w) \neq \emptyset$ for some $w$ if and only if $L_1 \neq \emptyset$.
\end{observation}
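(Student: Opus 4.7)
The plan is to reduce all three equivalences to the one-to-one correspondence between $\Sigma$-signed $\Gamma$-marker strings and pairs $(w, t)$ established earlier in the excerpt, namely that $W \mapsto (\sign(W), \llbracket W \rrbracket)$ is a bijection with inverse $(w, t) \mapsto W_{w, t}$. This bijection is the only tool needed; the rest is routine unfolding of definitions.

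For the first equivalence, I would prove the two directions separately. The forward direction $L_1 \subseteq L_2 \Rightarrow \llbracket L_1 \rrbracket \subseteq \llbracket L_2 \rrbracket$ is immediate: for any $w \in \Sigma^*$ and any $t \in \llbracket L_1 \rrbracket(w)$, the definition $\llbracket L_1 \rrbracket(w) = \{\llbracket W \rrbracket \mid W \in \slice{w}(L_1)\}$ yields a witness $W \in \slice{w}(L_1) \subseteq \slice{w}(L_2)$, giving $t \in \llbracket L_2 \rrbracket(w)$. The converse is the only slightly delicate step: starting from $W \in L_1$, I set $w = \sign(W)$ and $t = \llbracket W \rrbracket$, so $t \in \llbracket L_1 \rrbracket(w) \subseteq \llbracket L_2 \rrbracket(w)$, which yields some $W' \in \slice{w}(L_2)$ with $\llbracket W' \rrbracket = t$. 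Here I invoke injectivity of $W \mapsto (\sign(W), \llbracket W \rrbracket)$ to conclude $W' = W$, hence $W \in L_2$.

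The second equivalence follows from the first by antisymmetry: $\llbracket L_1 \rrbracket = \llbracket L_2 \rrbracket$ iff $\llbracket L_1 \rrbracket \subseteq \llbracket L_2 \rrbracket$ and $\llbracket L_2 \rrbracket \subseteq \llbracket L_1 \rrbracket$, iff $L_1 \subseteq L_2$ and $L_2 \subseteq L_1$, iff $L_1 = L_2$. For the third equivalence, I unfold definitions: there exists $w \in \Sigma^*$ with $\llbracket L_1 \rrbracket(w) \neq \emptyset$ iff there exists $w$ and $W \in \slice{w}(L_1)$ iff $L_1 = \bigcup_{w} \slice{w}(L_1)$ is nonempty.

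The main (minor) obstacle is being precise about the use of injectivity in the converse of part one: one has to avoid the tempting but wrong argument that merely extracting a tuple from $\llbracket L_2 \rrbracket(w)$ directly places $W$ in $L_2$; the injectivity of the correspondence is exactly what bridges this gap. Since the observation environment is marked \textbf{remark}-style, a short proof suffices and no new definitions are required.
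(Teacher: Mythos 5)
Your proof is correct and follows exactly the route the paper intends: the observation is stated without proof precisely because it reduces to the one-to-one correspondence $W \leftrightarrow (\sign(W), \llbracket W \rrbracket)$ established earlier, and your argument fills in the details faithfully, including the one genuinely necessary step of invoking injectivity of this correspondence to pass from $\llbracket W' \rrbracket = \llbracket W \rrbracket$ and $\sign(W') = \sign(W)$ to $W' = W$ in the converse of the first equivalence. Nothing is missing.
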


This implies, for example, that the containment problem for regular extractors is PSPACE-complete, the emptiness problem for regular extractors is in P, the equivalence problem of context-free extractors is undecidable, etc.

\subsection{Enumeration of the Table}\label{sec:enum}

Given a $(\Sigma, \Gamma)$-extractor $E$ and a string $w \in \Sigma^*$, it is a reasonable task to compute the full table $E(w)$. However, since this table can be rather large, it makes sense to enumerate it row by row, i.e., we aim at an algorithm that first performs some preprocessing on $E$ and $w$, and then it produces $E(w)$ row by row without repetitions and with a guaranteed upper bound on the delay between two rows. In the database theory perspective, the best-case scenario is that this can be done with a preprocessing of $O(|w| f(|E|))$ and a delay of $O(f(|E|) |t|)$, where $t$ is the next row to produce, for some computable function $f$. This perspective covers the practically reasonable assumption that $w$ represents data and is therefore very large, while $E$ is a query of comparatively small size (after all, $E$ is assumed to be formulated by and should be understandable by human users). In other words, the preprocessing should be linear in $|w|$, while the delay should be linear in the size of the output element (i.e. a row of the output table) that is produced. This complexity measure that neglects the query size $|E|$ is called \emph{data complexity} in the database theory literature, and an algorithm with the aforementioned guarantees is called a linear preprocessing and output-linear delay algorithm. 

For the extractor framework of regular document spanners (which is covered by our extractor model), linear preprocessing and output-linear delay algorithms have been developed over the last decade. Note that since in the document spanner framework the rows of the output table have constant size, these algorithms are called linear preprocessing and \emph{constant} delay algorithms. However, in our more general setting, a row of the output table contains sets of positions as entries and therefore can have size $|w| |\Gamma|$, which depends on the data size $|w|$; thus, linear preprocessing and output-linear delay algorithms are the best we can hope for. It follows more or less directly from known results that for regular $(\Sigma, \Gamma)$-extractors linear preprocessing and output-linear delay algorithms for enumerating the output table exist. Let us explain this in some more detail.

In~\cite{Bagan2006}, it is shown that the result set of an MSO-query over a labelled tree can be enumerated with linear preprocessing and output-linear delay (in data complexity).\footnote{Note that~\cite{LohreySchmid2026}, which extends the algorithm of~\cite{Bagan2006} to the compressed setting, also contains an explanation of the approach of~\cite{Bagan2006}.} By our observations of Section~\ref{sec:regularRepresentations} (especially Lemma~\ref{MSOandRegEqualityLemma}), we know that we can transform any $(\Sigma, \Gamma)$-extractor into an MSO-formula $\phi$ such that, for every $w \in \Sigma^*$, the result set of $\phi$ on $w$ coincides with $E(w)$. Since we can interpret any string $w \in \Sigma^*$ as a labelled tree, the algorithm of~\cite{Bagan2006} can be used in order to enumerate $E(w)$ with linear preprocessing and output-linear delay (in data complexity). Let us in the following discuss a more direct approach.

Observe that a row of our output table $E(w)$, i.e., a $\Gamma$-tuple $t$ for $w$, can as well be represented as a sequence of pairs $(i, X)$ where $i \in \{1, 2, \ldots, |w|\}$ and $X$ is a non-empty subset of $\Gamma$ as follows: We simply move through $i = 1, 2, \ldots, |w|$ and if $\{x \mid i \in t(x)\} \neq \emptyset$, then we append the pair $(i, \{x \mid i \in t(x)\})$ to our sequence. For example, the first row of the table in Section~\ref{sec:intuition} is represented as $(1, \{z\}) (3, \{z\}) (4, \{x\}) (7, \{x\}) (8, \{y\}) (9, \{x, y\})$, and the third row is represented as $(2, \{y\}) (5, \{y, z\})$. This representation can also be seen as a condensed way of writing the $\Sigma$-signed $\Gamma$-marker string $\encoding_{\Sigma, \Gamma}(t)$, i.e., we delete markers of the form $\emptyset_{b}$ and represent markers $X_b$ as $(i, X)$, where $i$ is the position of the marker in the signed marker string. Obviously, this sequence representation uniquely describes $\Gamma$-tuples and has the same size as the $\Gamma$-tuples. Hence, enumerating these sequences is as good as enumerating the actual rows of the table. Let us also observe that these sequences are elements of the free monoid generated by the symbols $\{1, 2, \ldots, |w|\} \times (\powerset{\Gamma} \setminus \{\emptyset\})$, and with an identity element $\eword$.

Now if we are given an NFA $M$ for a $\Sigma$-signed $\Gamma$-marker language $L$, then the task of enumerating the sequences corresponding to $\llbracket L \rrbracket_{\Sigma, \Gamma}(w)$ boils down to enumerating the labels of certain paths in a DAG whose edges are labelled by elements from $(\{1, 2, \ldots, |w|\} \times (\powerset{\Gamma} \setminus \{\emptyset\})) \cup \{\eword\}$. More precisely, we can combine $M$ and $w$ into a DAG with nodes of the form $(q, i)$, where $q$ is a state of $M$ and $i \in \{0, 1, \ldots, |w|\}$ (similar to the product automaton construction with $w$ interpreted as an automaton). Moreover, if $M$ has a transition from state $p$ to state $q$ labelled with $X_{w[i]}$ and $X \neq \emptyset$, then we add a $(i, X)$-labelled arc to the DAG from $(p, i-1)$ to $(q, i)$, and if it has a transition $\emptyset_{w[i]}$, then we add an arc from $(p, i-1)$ to $(q, i)$ labelled with the identity element $\eword$. We can also add a unique source node $s$ and a unique sink node $t$, such that the labels from the $s$-to-$t$ paths correspond to the desired output sequences that represent the rows of $\llbracket L \rrbracket_{\Sigma, \Gamma}(w)$. Consequently, we just have to enumerate all labels from $s$-to-$t$ paths in this DAG, which is not trivial due to the identity element $\eword$ as arc label (so we cannot naively enumerate the actual paths, since they might be much longer than the actual labels). 

There exist algorithmic techniques to efficiently enumerate all the labels of the $s$-to-$t$ paths in this DAG. For example, \cite[Section~$3$]{LohreySchmid2026} contains a general meta-theorem for enumerating path labels of a DAG whose nodes are labelled by the objects and edges are labelled by the morphisms of a category. A version of this (described in~\cite[Section~$3.3.1$]{LohreySchmid2026}) can be used for the enumeration of the labels of the $s$-to-$t$ paths as described above.\footnote{As a small caveat, the technique of~\cite{LohreySchmid2026} requires $M$ to be unambiguous, i.e., no accepted input can be accepted by two different accepting computations. However, if $M$ is not unambiguous, then we can make it first unambiguous in time that only depends on $|M|$ and therefore vanishes in the data complexity perspective.} Moreover, the general algorithmic technique that has been used in~\cite{AmarilliEtAl2021} in the context of regular document spanners can also be used in order to enumerate the labels of the $s$-to-$t$ paths of the DAG described above (more precisely, the algorithm described in Sections~$3$,~$4$~and~$5$ of~\cite{AmarilliEtAl2021} can be applied).

\subsection{Table Problems}\label{sec:tableProbs}

Let us next define the so-called \emph{table problems}. The \emph{table containment}, \emph{table equivalence} and \emph{table disjointness problem} is to decide for given $(\Sigma, \Gamma)$-extractors $E_1, E_2$ and $w \in \Sigma^*$ whether $E_1(w) \subseteq E_2(w)$, $E_1(w) = E_2(w)$ or $E_1(w) \cap E_2(w) \neq \emptyset$, respectively, and the \emph{table emptiness problem} is to decide for a given $(\Sigma, \Gamma)$-extractor $E$ and $w \in \Sigma^*$ whether $E(w) = \emptyset$. 

Unlike the problems from above, the table problems are not already covered by known language problems on the marker languages. Instead, they can be seen as problems on the $w$-slice of a marker languages, since $E_1(w) \subseteq E_2(w)$, $E_1(w) = E_2(w)$, $E_1(w) \cap E_2(w) \neq \emptyset$ and $E(w) = \emptyset$ if and only if $\slice{w}(L_{E_1}) \subseteq \slice{w}(L_{E_2})$, $\slice{w}(L_{E_1}) = \slice{w}(L_{E_2})$, $\slice{w}(L_{E_1}) \cap \slice{w}(L_{E_2}) \neq \emptyset$ and $\slice{w}(L_{E}) = \emptyset$, respectively. 

The slices of any marker language are always finite sets of strings, but this does not necessarily mean that the table problems are easy, since slices have in general exponential size. For the mere decidability of the table problems, the decidability of the tuple membership problem (which we can assume for most reasonable classes of extractors) is a sufficient condition.

\begin{theorem}
If the tuple membership problem is decidable for a class $\mathcal{E}$ of extractors, then the table problems for $\mathcal{E}$ are decidable.
\end{theorem}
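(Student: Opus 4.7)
The plan is to observe that for any fixed $w \in \Sigma^*$, the set of $\Gamma$-tuples for $w$ is finite: a $\Gamma$-tuple $t$ is determined by specifying, for each position $i \in \{1,\ldots,|w|\}$, the set $\{x \in \Gamma \mid i \in t(x)\}$, so there are at most $2^{|\Gamma| \cdot |w|}$ such tuples. Let $\mathcal{T}_w$ denote this finite set. Every $\Gamma$-table for $w$ is a subset of $\mathcal{T}_w$, hence also finite (and enumerable from $w$ and $\Gamma$).

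Given this, I would reduce each of the four table problems to finitely many tuple membership queries. For \emph{table emptiness}, given $E$ and $w$, enumerate all $t \in \mathcal{T}_w$ and answer ``$E(w) = \emptyset$'' iff none of the queries ``$t \in E(w)$?'' returns yes. For \emph{table containment}, given $E_1, E_2, w$, answer ``$E_1(w) \subseteq E_2(w)$'' iff for every $t \in \mathcal{T}_w$ either $t \notin E_1(w)$ or $t \in E_2(w)$, which requires at most $2 \cdot |\mathcal{T}_w|$ membership queries. \emph{Table equivalence} reduces to two table containment checks, and \emph{table disjointness} is decided by enumerating $t \in \mathcal{T}_w$ and testing whether both $t \in E_1(w)$ and $t \in E_2(w)$ hold for some $t$.

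In each case the algorithm performs finitely many tuple membership queries on inputs derived from $w$ and $\Gamma$ (together with $E$, $E_1$, or $E_2$), plus Boolean combinations of their outcomes. Since the tuple membership problem is decidable for $\mathcal{E}$ by assumption, each query terminates, and the overall procedure terminates as well. There is no substantive obstacle here; the only point worth stating cleanly is the finiteness of $\mathcal{T}_w$, which follows directly from the definition of a $\Gamma$-tuple for $w$ as a function $\Gamma \to \mathcal{P}(\{1,\ldots,|w|\})$, equivalently as a marker string of length $|w|$ over $\Delta_{\Sigma,\Gamma}$ with sign $w$. Note that this argument says nothing about complexity; indeed, the naive enumeration is exponential in $|w|$, and any efficient algorithm would have to exploit the structure of the representation of $E$ rather than the black-box membership oracle.
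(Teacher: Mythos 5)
Your proof is correct and matches the paper's (implicitly sketched) argument exactly: the paper likewise observes that there are only $2^{|\Gamma|\,|w|}$ tuples for $w$, so each table problem reduces to finitely many decidable tuple membership queries. Nothing to add.
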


We now investigate the table problems for regular and context-free extractors.

\subsubsection{Regular Extractors}

We first observe that the table disjointness and emptiness problems can be solved efficiently by exploiting the NFA representation.

\begin{theorem}\label{regTableDisjointmentTheorem}
For regular extractors, the table disjointness and table emptiness problem can be solved in time $O(|E_1| |E_2| |w|)$ and $O(|E| |w|)$, respectively.
\end{theorem}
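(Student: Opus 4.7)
The plan is to reduce both problems to reachability in an NFA over the marker alphabet. By the observation preceding Section~\ref{sec:tableProbs}, $E(w) = \emptyset$ iff $\slice{w}(L_E) = \emptyset$, and $E_1(w) \cap E_2(w) \neq \emptyset$ iff $\slice{w}(L_{E_1}) \cap \slice{w}(L_{E_2}) \neq \emptyset$. Hence in both cases the question reduces to whether some NFA over $\Delta_{\Sigma, \Gamma}$ (the given one, or an intersection construction) accepts some marker string whose sign is exactly $w$.

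For table emptiness, let $M = (Q, \Delta_{\Sigma, \Gamma}, \delta, q_0, F)$ be an NFA for $L_E$. I would compute, for $i = 0, 1, \ldots, |w|$, the set $R_i \subseteq Q$ of states reachable from $q_0$ by some marker string $W$ of length $i$ with $\sign(W) = w[1..i]$, via $R_0 = \{q_0\}$ and $R_{i+1} = \bigcup_{X \subseteq \Gamma} \bigcup_{q \in R_i} \delta(q, X_{w[i+1]})$, and then answer ``non-empty'' iff $R_{|w|} \cap F \neq \emptyset$. If the transition list of $M$ is pre-indexed by sign in a single linear-time pass, each step only scans those transitions whose label has sign $w[i+1]$, and the per-step work is bounded by $|E|$, giving the claimed $O(|E| \cdot |w|)$ bound.

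For table disjointness I apply the same reachability scheme to the standard product NFA $M_\cap$ for $L_{E_1} \cap L_{E_2}$, whose state set is $Q_1 \times Q_2$ and which contains a transition $(p_1, p_2) \xrightarrow{X_b} (q_1, q_2)$ precisely when $M_1$ and $M_2$ have the respective $X_b$-transitions. Because each transition of $M_\cap$ arises from a pair of same-labelled transitions in $M_1, M_2$, the total number of transitions is bounded by $\sum_{X_b} t_1(X_b) \cdot t_2(X_b) \leq |E_1| \cdot |E_2|$, where $t_i(X_b)$ counts the $X_b$-transitions in $M_i$; this lets me either materialise $M_\cap$ or enumerate its transitions on the fly (grouped by sign) within $O(|E_1| \cdot |E_2|)$ per position, for an overall bound of $O(|E_1| \cdot |E_2| \cdot |w|)$.

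The main pitfall will be to avoid iterating over all $2^{|\Gamma|}$ possible markers at each position of $w$, which would introduce a spurious exponential factor in $|\Gamma|$. Indexing transitions by their sign during preprocessing and scanning only the transitions whose sign matches $w[i+1]$ at step $i+1$ is the key ingredient that keeps both algorithms within the claimed time bounds; once this bookkeeping is in place, correctness follows immediately from the definition of $\slice{w}(\cdot)$ and the standard subset-construction invariant that $R_i$ contains exactly the states reachable on marker strings of length $i$ signed by $w[1..i]$.
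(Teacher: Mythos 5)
Your proposal is correct and follows essentially the same approach as the paper: the layered reachability sets $R_i$ you compute are exactly the reachable nodes in the position-indexed product DAG that the paper constructs, and checking $R_{|w|} \cap F \neq \emptyset$ is the paper's path-existence test. Your additional remark about indexing transitions by sign to avoid a spurious $2^{|\Gamma|}$ factor is a sensible refinement of the complexity bookkeeping that the paper leaves implicit, but it does not change the underlying argument.
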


\begin{proof}
Let us start with the table disjointness problem. Let $E_1, E_2$ be regular extractors represented by NFAs $M_1$ and $M_2$, and let $w$ be a string. We construct a DAG $G_{M_1, M_2, w}$ with nodes $(p_1, p_2, i)$ for every $p_1 \in Q_1, p_2 \in Q_2$ and $i \in \{0, 1, \ldots, w\}$, and there is an arc from $(p_1, p_2, i)$ to $(q_1, q_2, i+1)$ if $q_1 \in \delta_1(p_1, X_{w[i+1]})$ and $q_2 \in \delta_2(p_2, X_{w[i+1]})$ for some $X \subseteq \Gamma$. It is obvious that $G_{M_1, M_2, w}$ can be constructed in time $O(|M_1| |M_2| |w|)$ and has size $O(|M_1| |M_2| |w|)$. We observe that there is a path from $(q_{0, 1}, q_{0, 2}, 0)$ to some $(q_{f, 1}, q_{f, 2}, |w|)$ with $q_{f, 1} \in F_1$ and $q_{f, 2} \in F_2$ if and only if there is a marker string $W \in L(M_1) \cap L(M_2)$ with $\sign(W) = w$. This latter property is characteristic for $E_1(w) \cap E_2(w) \neq \emptyset$. We can check whether such a path exists in time $O(|G_{M_1, M_2, w}|) = O(|M_1| |M_2| |w|)$.\par

Now let us consider the table emptiness problem. Let $E$ be a regular extractor represented by an NFA $M$ and let $w$ be a string. We construct a DAG $G_{M, w}$ with nodes $(p, i)$ for every $p \in Q$ and $i \in \{0, 1, \ldots, w\}$, and there is an arc from $(p, i)$ to $(q, i+1)$ if $q \in \delta(p, X_{w[i+1]})$ for some $X \subseteq \Gamma$. It is obvious that $G_{M, w}$ can be constructed in time $O(|M| |w|)$ and has size $O(|M| |w|)$. Similar as before, we can observe that there is a path from $(q_0, 0)$ to some $(q_f, |w|)$ with $q_f \in F$ if and only if there is a marker string $W \in L(M)$ with $\sign(W) = w$, which is characteristic for $E(w) \neq \emptyset$. We can check whether such a path exists in time $O(|G_{M, w}|) = O(|M| |w|)$.
\end{proof}

We can complement the upper bounds of Theorem~\ref{regTableDisjointmentTheorem} with conditional lower bounds. To this end, first recall that, conditional to the strong exponential time hypothesis (SETH), the membership problem for NFAs cannot be solved in time $O((|M||w|)^{1-\epsilon})$ for any $\epsilon > 0$ (see~\cite{BackursIndyk2016}). For a given NFA $M$ and string $w$, $w \in L(M)$ if and only if $\llbracket L(\widehat{M}) \rrbracket_{\Sigma, \Gamma}(w) \neq \emptyset$, where $\widehat{M}$ is obtained from $M$ by replacing every $b$-transition by a $\emptyset_b$-transition. Hence, $\llbracket E \rrbracket_{\Sigma, \Gamma}(w) \neq \emptyset$ cannot be checked in time $O((|E| |w|)^{1 - \epsilon})$, unless SETH fails. Likewise, $w \in L(M)$ if and only if $\llbracket L(\widehat{M}) \rrbracket_{\Sigma, \Gamma}(w) \cap \llbracket L(\widehat{M}') \rrbracket_{\Sigma, \Gamma}(w) \neq \emptyset$, where $L(\widehat{M}') = (\{\emptyset_b \mid b \in \Sigma\})^*$. Since $|\widehat{M}'| = |\Sigma| = O(1)$ (recall that we made the assumption that $\Sigma$ is of constant size), this means that if $E_1(w) \cap E_2(w) \neq \emptyset$ can be checked in time $O((|E_1| |E_2| |w|)^{1-\epsilon})$, then we can check $w \in L(M)$ in time $O((|M| |M'| |w|)^{1-\epsilon}) = O((|M| |w|)^{1-\epsilon})$, which contradicts SETH.

In contrast to table disjointness and emptiness, the table containment and equivalence problems are intractable. However, the complexity changes if extractors are represented by DFAs instead of NFAs. Recall that a $(\Sigma, \Gamma)$-extractor $E$ has finite support if the set $\{w \in \Sigma^* \mid E(w) \neq \emptyset\}$ is finite.

\begin{theorem}\label{coNPHardnessRegularContainment}
The table containment and table equivalence problem for regular extractors is coNP-complete, even if $|\Sigma| = 1$, $|\Gamma| = 2$, both $E_1$ and $E_2$ have finite support, and $E_1$ is given by a DFA. The table containment problem for regular extractors can be solved in time $O(|E_1| |E_2| |w|)$, provided that $E_2$ is given by a DFA, and the table equivalence problem for regular extractors can be solved in time $O(|E_1| |E_2| |w|)$, provided that both $E_1$ and $E_2$ are given by DFAs.
\end{theorem}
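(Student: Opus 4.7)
The theorem has two parts, and the plan is to handle them in turn: first the polynomial-time algorithms when the right-hand side (or both sides) is deterministic, then the coNP-completeness under the listed restrictions.

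For the polynomial bounds, represent $E_1$ by an NFA $M_1$ and $E_2$ by a DFA $M_2$, and reuse the product DAG $G_{M_1, M_2, w}$ from Theorem~\ref{regTableDisjointmentTheorem}: nodes $(p_1, p_2, i) \in Q_1 \times Q_2 \times \{0, \ldots, |w|\}$, with an arc $(p_1, p_2, i) \to (q_1, q_2, i+1)$ whenever $q_1 \in \delta_1(p_1, X_{w[i+1]})$ and $q_2 = \delta_2(p_2, X_{w[i+1]})$ for some $X \subseteq \Gamma$. Since $M_2$ is deterministic, any path from $(q_{0,1}, q_{0,2}, 0)$ in this DAG uniquely encodes a marker string $W$ of sign $w$ together with an accepting run of $M_1$ on $W$ and the unique run of $M_2$ on $W$; hence $E_1(w) \not\subseteq E_2(w)$ iff some path reaches a layer-$|w|$ node $(p, q, |w|)$ with $p \in F_1$ and $q \notin F_2$, which is a single reachability check. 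Arc-counting: determinism of $M_2$ pins $q_2$ given $(p_2, X)$, so arcs per layer are bounded by $|Q_2|$ times the number of transitions of $M_1$, summing to $O(|E_1| |E_2| |w|)$. For table equivalence when both are DFAs, the same DAG works: flag any reachable layer-$|w|$ node $(p, q, |w|)$ where the acceptance bits of $p$ in $M_1$ and $q$ in $M_2$ disagree, i.e., $W$ witnesses the symmetric difference $E_1(w) \triangle E_2(w)$.

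For coNP-membership, the complement problem asks to exhibit a marker string $W$ with $\sign(W) = w$ belonging to $L_{E_1} \setminus L_{E_2}$ (or to the symmetric difference for equivalence); such a $W$ has bit size $O(|w| |\Gamma|)$ and can be verified in polynomial time by standard NFA simulation. For coNP-hardness, I would reduce from UNSAT. Given a 3-CNF $\phi = C_1 \wedge \cdots \wedge C_m$ on variables $v_1, \ldots, v_n$, set $\Sigma = \{a\}$, $\Gamma = \{x, y\}$, $w = a^n$. Take $M_1$ to be an $(n+1)$-state path DFA accepting exactly the marker strings of length $n$, so $E_1(a^n)$ is the full set of $4^n$ $\Gamma$-tuples. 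Take $M_2$ to be the NFA that, upon a length-$n$ marker string $W$, interprets $v_i$ as true iff $x \in W[i]$ (with $y$-bits irrelevant), nondeterministically picks a clause $C_j$, and follows a linear $(n+1)$-state branch whose $i$th transition is enabled only on markers consistent with the literal of $C_j$ at position $i$ (if any) being falsified. Then $E_2(a^n)$ is exactly the set of tuples whose $x$-component encodes a non-satisfying assignment of $\phi$. Because $E_2(a^n) \subseteq E_1(a^n)$ trivially, we get $E_1 \subseteq E_2 \iff E_1 = E_2 \iff$ every assignment falsifies $\phi$, i.e., $\phi \in \text{UNSAT}$. Both extractors have support $\subseteq \{a^n\}$ (finite), $E_1$ is a DFA, $|\Sigma|=1$, $|\Gamma|=2$, and $|M_1| + |M_2| = O(nm)$, so the reduction meets every restriction stated in the theorem.

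The main obstacle is the arc count of the product DAG: the marker alphabet $\Delta_{\Sigma, \Gamma}$ has exponential size $|\Sigma| \cdot 2^{|\Gamma|}$, and one must argue that only transitions actually present in $M_1$ and $M_2$ contribute arcs, with determinism of $M_2$ ensuring that arcs from $(p_1, p_2, i)$ are bounded by the transitions out of $p_1$ on sign $w[i+1]$ alone. Once this bookkeeping is in place, everything else — the reachability analysis, the guess-and-verify coNP upper bound, and the per-clause falsification gadget — is routine.
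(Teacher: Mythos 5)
Your proposal is correct and follows essentially the same route as the paper: the same guess-and-verify argument for coNP-membership, the same style of reduction from 3-CNF (un)satisfiability with a unary terminal alphabet and two attributes (the paper encodes each assignment by a single marker string over $\Gamma=\{t,f\}$ whereas you let the second attribute be a don't-care, which changes nothing), and the same layered product construction for the tractable cases — the paper just phrases it as ``complement the DFA $M_2$, then invoke the table disjointness algorithm,'' which is exactly your search for a reachable endpoint in $F_1 \times (Q_2 \setminus F_2)$, and it handles equivalence by two containment checks rather than your single symmetric-difference pass.
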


\begin{proof}
The coNP-membership of the table containment problem is obvious: Let $M_1$ and $M_2$ be NFA that represent regular extractors $E_1$ and $E_2$, respectively. Then we can guess a marker string $W$ with $\sign(W) = w$ (note that for this we only have to guess $|w|$ markers, each of which can be guessed by $|\Gamma|$ guesses) and then check whether $W \in L(M_1)$ and $W \notin L(M_2)$, which is characteristic for $E_1(w) \nsubseteq E_2(w)$. From the coNP-membership of the table containment problem, we can directly conclude the coNP-membership of the table equivalence problem.

In order to prove coNP-hardness, we reduce from $3$-CNF-Satisfiability. Let $F$ be a 3-CNF formula over variables $V = \{v_1, v_2, \ldots, v_n\}$. Let $\Sigma = \{\ta\}$ and let $\Gamma = \{t, f\}$ (hence, we have $|\Sigma| = 1$, $|\Gamma| = 2$). We note that any $\Sigma$-signed $\Gamma$-marker string $W$ of size $n$ that does not contain occurrences of $\{t, f\}_{\ta}$ or $\emptyset_{\ta}$ can be interpreted as an assignment $\pi : V \to \{0, 1\}$, i.e., for every $i \in \{1, 2, \ldots, n\}$, we have $\pi(v_i) = 0$ if $W[i] = \{f\}_{\ta}$ and $\pi(v_i) = 1$ if $W[i] = \{t\}_{\ta}$. 

We construct in polynomial time a DFA $M_1$ that accepts all possible $\Sigma$-signed $\Gamma$-marker strings $W$ of size $n$ that do not contain occurrences of $\{t, f\}_{\ta}$ or $\emptyset_{\ta}$ (i.e., all $\Sigma$-signed $\Gamma$-marker string $W$ that represent an assignment $\pi : V \to \{0, 1\}$). Moreover, in polynomial time we can construct an NFA $M_2$ that accepts all $\Sigma$-signed $\Gamma$-marker strings of size $n$ that represent non-accepting assignments of $F$. Indeed, for every clause $c$ of $F$, $M_2$ has a branch that accepts all $\Sigma$-signed $\Gamma$-marker strings $W$ that represent an assignment that does not satisfy $c$. For example, if $c = \{v_4, \neg v_7, v_{12}\}$, then the corresponding branch of $M_2$ accepts all strings $R_1 \cdot R_2 \cdot \ldots \cdot R_n$, where $R_4 = R_{12} = \{f\}_{\ta}$, $R_7 = \{t\}_{\ta}$ and $R_i \in \{\{t\}_{\ta}, \{f\}_{\ta}\}$ for every $i \in \{1, 2, \ldots, n\} \setminus \{4, 7, 12\}$. Note that $M_2$ has polynomial size. 

We observe that $\llbracket L(M_1) \rrbracket_{\Sigma,\Gamma}(\ta^n) \subseteq \llbracket L(M_2) \rrbracket_{\Sigma,\Gamma}(\ta^n)$ if and only if $\llbracket L(M_1) \rrbracket_{\Sigma,\Gamma}(\ta^n) = \llbracket L(M_2) \rrbracket_{\Sigma,\Gamma}(\ta^n)$ if and only if $F$ is not satisfiable. Since both $L(M_1)$ and $L(M_2)$ are finite languages, the extractors $\llbracket L(M_1) \rrbracket_{\Sigma,\Gamma}$ and $\llbracket L(M_2) \rrbracket_{\Sigma,\Gamma}$ have finite support.

It remains to discuss the tractable cases mentioned in the statement of the theorem. 

We first consider the table containment problem. Let $E_1, E_2$ be regular extractors represented by an NFA $M_1$ and a DFA $M_2$, and let $w$ be a string. In time $O(|M_2|)$, we can construct a DFA $M'_2$ with $L(M'_2) = \overline{L(M_2)}$, which means that $\llbracket L(M'_2) \rrbracket_{\Sigma,\Gamma} = \neg E_2$ and therefore $\llbracket L(M'_2) \rrbracket_{\Sigma,\Gamma}(w) = \overline{E_2(w)}$. Hence, $E_1(w) \nsubseteq E_2(w)$ if and only if $E_1(w) \cap \llbracket L(M'_2) \rrbracket_{\Sigma,\Gamma}(w) \neq \emptyset$. Consequently, in order to decide whether $E_1(w) \nsubseteq E_2(w)$ it is sufficient to decide whether $E_1(w) \cap \llbracket L(M'_2) \rrbracket_{\Sigma,\Gamma}(w) \neq \emptyset$, which, according to Theorem~\ref{regTableDisjointmentTheorem}, can be done in time $O(|M_1| |M'_2| |w|) = O(|M_1| |M_2| |w|)$.

If both $M_1$ and $M_2$ are DFAs, then we can decide whether $E_1(w) \subseteq E_2(w)$ and $E_2(w) \subseteq E_1(w)$ in time $O(|M_1| |M_2| |w|)$, which means that we can solve the table equivalence problem in time $O(|M_1| |M_2| |w|)$.
\end{proof}

\subsubsection{Context-Free Extractors}

The table emptiness problem for context-free extractors can be solved as efficiently as the membership problem for CFGs.\footnote{The upper bound mentioned in Theorem~\ref{tableEmptinessContextFreeTheorem} can be improved by using Valiants parsing algorithm~\cite{Valiant1975}; we mention the CYK-based bound for simplicity.}

\begin{theorem}\label{tableEmptinessContextFreeTheorem}
The table emptiness problem for context-free extractors can be solved in time $O(|E| |w|^3)$.
\end{theorem}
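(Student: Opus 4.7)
My plan is to reduce the table emptiness problem to a standard CFG membership test. Given a context-free extractor $E$ represented by a CFG $G = (V, \Delta_{\Sigma, \Gamma}, P, S)$ with $L(G) = L_E$, and a terminal string $w \in \Sigma^*$, recall that by the correspondence established in Section~\ref{sec:extractorsFormalLanguages} we have
\begin{equation*}
E(w) \neq \emptyset \iff \slice{w}(L_E) \neq \emptyset \iff \exists\, W \in L(G) \text{ with } \sign(W) = w.
\end{equation*}
So emptiness of the extracted table is equivalent to the existence, in $L(G)$, of a marker string whose sign equals $w$.

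The first step is to construct, in time $O(|G|)$, a CFG $G'$ over the terminal alphabet $\Sigma$ by replacing, in every right-hand side of every rule of $G$, each occurrence of a marker $X_b \in \Delta_{\Sigma, \Gamma}$ by its sign $b \in \Sigma$ (the nonterminals, start symbol, and rule structure are kept unchanged). A straightforward induction on derivations shows that $L(G') = \{\sign(W) \mid W \in L(G)\}$: any derivation $S \Rightarrow^* W$ in $G$ translates, by applying $\sign$ letter-wise, into a derivation $S \Rightarrow^* \sign(W)$ in $G'$; conversely, any derivation of $w$ in $G'$ lifts back to a derivation in $G$ of some marker string $W$ with $\sign(W) = w$. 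Hence $E(w) \neq \emptyset$ if and only if $w \in L(G')$.

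The second step is to decide $w \in L(G')$ using the CYK algorithm, which runs in time $O(|G'| \cdot |w|^3)$ after a standard linear-size conversion to Chomsky normal form. Since $|G'| = O(|G|) = O(|E|)$, the overall running time is $O(|E| \cdot |w|^3)$, as claimed.

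The only subtle point, and the one I would verify carefully, is that the sign-projection from $G$ to $G'$ is genuinely size-preserving and does not interact badly with normal-form conversions: since the replacement acts only on terminal occurrences and leaves the rule skeleton untouched, standard CNF conversion applied to $G'$ yields the usual CYK bound; if one prefers to avoid the CNF conversion blow-up entirely, one can alternatively apply Valiant's parser to $G'$ to obtain an even better dependence on $|w|$, but the cubic bound suffices for the statement of the theorem.
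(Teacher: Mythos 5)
Your proof is correct and follows essentially the same route as the paper: project the grammar onto the signs by rewriting the terminals of each rule (the paper maps $X_b$ to the marker $\emptyset_b$ rather than to $b$ itself, an inessential difference), observe that $E(w)\neq\emptyset$ reduces to membership of the projected string in the projected grammar, and apply a cubic-time parser. The paper even shares your closing caveat, noting in a footnote that Valiant's parser would improve the bound.
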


\begin{proof}
Let $E$ be a context-free extractor represented by a context-free grammar $G$, and let $w$ be a string. Recall that $E(w) \neq \emptyset$ if and only if there is a marked string $W \in L(G)$ with $\sign(W) = w$. We obtain a context-free grammar $G'$ from $G$ by replacing each occurrence of a marker $X_{\ta}$ in any rule of $G$ by the marker $\emptyset_{\ta}$. Obviously, $G'$ can be constructed in time $O(G)$ and $|G'| = O(|G|)$. We observe that there is a marked string $W \in L(G)$ with $\sign(W) = w$ if and only if $\emptyset_{w[1]} \emptyset_{w[2]} \ldots \emptyset_{w[|w|]} \in L(G')$. Hence, we can decide whether $E(w) \neq \emptyset$ by checking $\emptyset_{w[1]} \emptyset_{w[2]} \ldots \emptyset_{w[|w|]} \in L(G')$, which can be done in time $O(|G'| |w|^3)$.
\end{proof}

We next observe that the table disjointness problem, which is tractable for regular extractors, becomes intractable for context-free extractors. 

\begin{theorem}\label{tableDisjointnessContextFreeTheorem}
The table disjointness problem for context-free extractors is NP-complete, even for extractors with finite support and if $|\Sigma| = 2$, but it can be solved in polynomial time, provided that one of the two extractors is regular.
\end{theorem}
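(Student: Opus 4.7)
The plan has three parts. For NP-membership, observe that $E_1(w) \cap E_2(w) \neq \emptyset$ iff there is a marker string $W$ of length $|w|$ with $\sign(W)=w$ lying in $L(G_1) \cap L(G_2)$. Such a $W$ has size $O(|w| \cdot |\Gamma|)$, and membership in each CFG is checkable in polynomial time via CYK, so we guess $W$ and verify. For the polynomial-time case when, say, $E_2$ is regular via an NFA $M_2$, the standard product construction yields in polynomial time a CFG $G'$ with $L(G') = L(G_1) \cap L(M_2)$; then $E_1(w) \cap E_2(w) \neq \emptyset$ iff $\llbracket L(G') \rrbracket(w) \neq \emptyset$, which is decidable in time $O(|G'| \cdot |w|^3)$ by Theorem~\ref{tableEmptinessContextFreeTheorem}.

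For NP-hardness the plan is to reduce from $3$-SAT via a palindrome-pairing construction. Given $\phi = C_1 \wedge \cdots \wedge C_m$ over variables $v_1, \ldots, v_n$, set $\Sigma = \{\ta\}$ and $\Gamma = \{B_0, B_1, H\}$, and encode candidates as marker strings of length $N = 2mn + (2m-1)$ of the form $x_1 \# x_2 \# \cdots \# x_{2m}$, where each $x_k$ is a length-$n$ block over $\{\{B_0\}_{\ta}, \{B_1\}_{\ta}\}$ (interpreted as a truth assignment) and each $\#$ is the separator marker $\{H\}_{\ta}$. The CFG $G_1$ generates exactly those length-$N$ strings for which every pair $(x_{2k-1}, x_{2k})$ is palindromic (i.e., $x_{2k-1} = x_{2k}^R$) and, in addition, the assignment encoded by $x_{2k-1}$ satisfies $C_k$; this is context-free because palindrome generation is CF and the per-clause check is regular, so it can be baked into the rules producing each odd block. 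The CFG $G_2$ requires that every pair $(x_{2k}, x_{2k+1})$ for $k=1,\ldots,m-1$ is palindromic, with $x_1$ and $x_{2m}$ otherwise free. Chaining the palindrome relations from both grammars in the intersection yields $x_1 = x_3 = \cdots = x_{2m-1} =: u$ and $x_{2k} = u^R$ for all $k$; the clause restriction in $G_1$ then forces $u$ to satisfy every $C_k$, so the intersection is nonempty iff $\phi$ is satisfiable. Taking $w = \ta^N$ completes the reduction, and since both $G_1$ and $G_2$ only generate length-$N$ strings, the associated extractors have finite support.

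The main obstacle is that a single CFG cannot enforce a copying constraint of the form $\{u \# u\}$, which would be the natural way to force all $m$ assignment-blocks to coincide. The palindrome-pairing trick circumvents this by distributing the equality constraints across $G_1$ and $G_2$: each family of palindromic pairings is individually CF, but the two families are interleaved so that their intersection transitively equates all odd-indexed blocks (and, separately, all even-indexed blocks). The remaining verifications --- that $G_1$ and $G_2$ are of polynomial size, that the per-clause regular restriction can be fused with the palindrome rules via a product with a constant-size NFA per clause, and that restricting to exactly length-$N$ strings guarantees finite support --- are routine bookkeeping.
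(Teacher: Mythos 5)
Your proposal is correct, and the membership and tractability arguments coincide with the paper's (guess a polynomial-size marker string with sign $w$ and verify membership in both grammars; for the mixed case, intersect the CFG with the NFA via the product construction and reduce to table emptiness --- the paper restricts the NFA to the $w$-slice first and checks CFG emptiness, but this is the same computation). The hardness argument, however, takes a genuinely different route: the paper reduces from the \emph{bounded Post correspondence problem}, building two grammars whose derivations lay down an index sequence $\{i_1\}_{\ta}\ldots\{i_q\}_{\ta}$ on the left and the reversed concatenation $\{u_{i_q}\}_{\ta}\ldots\{u_{i_1}\}_{\ta}$ (resp.\ with the $v$'s) on the right, padded to a fixed length so that a common marker string exists iff the BPCP instance has a solution; you reduce from $3$-SAT and use the classical palindrome-interleaving trick to transitively equate all odd-indexed assignment blocks across the intersection of two context-free languages. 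Both reductions are sound; yours requires the extra (but routine) bookkeeping of fusing the per-clause regular check into the outside-in palindrome rules with $O(n)$ nonterminals per clause, while the paper's BPCP reduction gets the "matching" constraint for free from the structure of PCP. One point in your favour: the theorem claims hardness already for $|\Sigma|=1$, and your construction genuinely achieves this by realising the separator as the marker $\{H\}_{\ta}$ over the single terminal $\ta$, whereas the paper's construction as written uses $\Sigma=\{\ta,\#\}$ with separator $\emptyset_{\#}$, i.e., two terminals (this is easily repaired in the paper by the same device you use, but as stated your reduction matches the theorem more faithfully). Both constructions also yield finite support, since all generated marker strings have a fixed length determined by the instance.
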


\begin{proof}
The NP-membership can be easily seen: For context-free extractors $E_1$ and $E_2$ represented by context-free grammars $G_1$ and $G_2$, and a string $w$, we guess a marker string $W$ with $\sign(W) = w$ and then check whether $W \in L(G_1)$ and $W \in L(G_2)$, which is characteristic for $E_1(w) \cap E_2(w) \neq \emptyset$.

For showing NP-hardness, we reduce from the bounded post correspondence problem (see~\cite[Section A4.2]{GareyJohnson1979}), which is defined as follows. As input we get a list of the form $(u_1, v_1), (u_2, v_2), \ldots, (u_n, v_n)$ of pairs of strings $(u_i, v_i) \in \Lambda^* \times \Lambda^*$ for some alphabet $\Lambda$, and a number $\kappa \leq n$. The question is whether there is a sequence $i_1, i_2, \ldots, i_{q} \in \{1, 2, \ldots, n\}$ such that $q \leq \kappa$ and $u_{i_1} u_{i_2} \ldots u_{i_{q}} = v_{i_1} v_{i_2} \ldots v_{i_{q}}$. In the following, let us fix such an instance of the bounded post correspondence problem and, for convenience, we also define $p_{\max} = \max\{|u_i|, |v_i| \mid 1 \leq i \leq n\}$.

We will construct context-free grammars $G_1$ and $G_2$ that describe $\Sigma$-signed $\Gamma$-marker languages, where $\Sigma = \{\ta, \#\}$ and $\Gamma = \{1, 2, \ldots, n\} \cup \Lambda$; moreover, all $W \in L(G_1) \cup L(G_2)$ will be such that every marker set of a marker from $W$ is the empty set or a singleton, i.e., every symbol of $W$ has the form $\emptyset_{x}$ or $\{\gamma\}_{x}$ for some $\gamma \in \Gamma$ and $x \in \{\ta, \#\}$. Such $\Sigma$-signed $\Gamma$-marker strings can be interpreted as representing a string over $\Sigma$, i.e., $\sign(W) \in \Sigma^*$, and a string over $\Gamma$, which is obtained by replacing each marker $\{\gamma\}_{\ta}$ by $\gamma$ and each marker $\emptyset_{\ta}$ by $\varepsilon$. For convenience, we also use for an arbitrary string $u = \gamma_1 \gamma_2 \ldots \gamma_m $ with $\gamma_i \in \Gamma$ for $i \in \{1, 2, \ldots, m\}$ the notation $\{u\}_{\ta}$ as a short hand for the $\Sigma$-signed $\Gamma$-marker string $\{\gamma_1\}_{\ta} \{\gamma_2\}_{\ta} \ldots \{\gamma_m\}_{\ta}$ (note that $\sign(\{u\}_{\ta}) = \ta^{|u|}$ and the string over $\Gamma$ represented by $\{u\}_{\ta}$ is $u$). 

Let us now explain how $G_1$ is constructed. For every $q \in \{1, 2, \ldots, \kappa\}$ and $r \in \{1, 2, \ldots, \kappa \, p_{\max}\}$, we use a non-terminal $B_{q, r}$ that, for every $i \in \{1, 2, \ldots, n\}$, has a rule 
\begin{equation*}
B_{q, r} \to \{i\}_{\ta} B_{q+1, r+|u_i|} \{u_i\}_{\ta}
\end{equation*}
if $q+1 \leq \kappa$, and a rule 
\begin{equation*}
B_{q, r} \to (\emptyset_{\ta})^{\kappa-q} \emptyset_{\#} (\emptyset_{\ta})^{(\kappa \, p_{\max}) - r}\,. 
\end{equation*}
Moreover, $S$ is the start non-terminal with a rule $S \to \{i\}_{\ta} B_{1, |u_i|} \{u_i\}_{\ta}$ for every $i \in \{1, 2, \ldots, n\}$. We observe that derivations of $G_1$ have the form
\begin{align*}
S &\to \{i_1\}_{\ta} B_{1, |u_{i_1}|} \{u_{i_1}\}_{\ta} \to \{i_1 i_2\}_{\ta} B_{2, |u_{i_1} u_{i_2}|} \{u_{i_2} u_{i_1}\}_{\ta}\\
&\to^* \{i_1 \ldots i_q\}_{\ta} B_{q, |u_{i_1} \ldots u_{i_q}|} \{u_{i_q} \ldots u_{i_1}\}_{\ta}\\
&\to \{i_1 \ldots i_q\}_{\ta} (\emptyset_{\ta})^{\kappa-q} \emptyset_{\#} (\emptyset_{\ta})^{(\kappa \, p_{\max}) - |u_{i_1} \ldots u_{i_q}|} \{u_{i_q} \ldots u_{i_1}\}_{\ta}
\end{align*}
for some $i_1, i_2, \ldots, i_{q} \in \{1, 2, \ldots, n\}$ and $q \leq \kappa$. Note that every $W \in L(G_1)$ satisfies $\sign(W) = \ta^{\kappa} \# \ta^{p_{\max} \, \kappa}$; thus, every $W \in L(G_1)$ describes a $\Gamma$-tuple $\llbracket W \rrbracket_{\Sigma, \Gamma}$ for the same string $w = \ta^{\kappa} \# \ta^{p_{\max} \, \kappa}$. In an analogous way, we can also construct a grammar $G_2$ that generates all marker strings 
\begin{equation*}
\{i_1 \ldots i_q\}_{\ta} (\emptyset_{\ta})^{\kappa-q} \emptyset_{\#} (\emptyset_{\ta})^{(\kappa \, p_{\max}) - |v_{i_1} \ldots v_{i_q}|} \{v_{i_q} \ldots v_{i_1}\}_{\ta}
\end{equation*}
for some $i_1, i_2, \ldots, i_{q} \in \{1, 2, \ldots, n\}$ and $q \leq \kappa$, which also represent $\Gamma$-tuples for $w$.

Let us observe that grammars $G_1$ and $G_2$ are of polynomial size and therefore can be constructed in polynomial time. The grammar $G_1$ has $\kappa^2 p_{\max}$ nonterminals and each nonterminal has at most $n + 1$ rules and each right-hand side of a rule has a size bounded by $2 \kappa p_{\max} + 1$. Since $\kappa \leq n$, the size of $G_1$ is polynomial in the size of the bounded PCP instance. The same applies to grammar $G_2$.

Now if $t \in \llbracket L(G_1) \rrbracket_{\Sigma, \Gamma}(w) \cap \llbracket L(G_2) \rrbracket_{\Sigma, \Gamma}(w)$, then the marker string $W$ with $\llbracket W \rrbracket_{\Sigma, \Gamma} = t$ satisfies $W \in L(G_1) \cap L(G_2)$. Thus, there are some $i_1, i_2, \ldots, i_{q}, j_1, j_2, \ldots, j_{q'} \in \{1, 2, \ldots, n\}$ and $q, q' \leq \kappa$ such that 
\begin{align*}
&\{i_1 \ldots i_q\}_{\ta} (\emptyset_{\ta})^{\kappa-q} \emptyset_{\#} (\emptyset_{\ta})^{(\kappa \, p_{\max}) - |u_{i_1} \ldots u_{i_q}|} \{u_{i_q} \ldots u_{i_1}\}_{\ta} =\\
&\{j_1 \ldots j_{q'}\}_{\ta} (\emptyset_{\ta})^{\kappa-{q'}} \emptyset_{\#} (\emptyset_{\ta})^{(\kappa \, p_{\max}) - |v_{j_1} \ldots v_{j_{q'}}|} \{v_{j_{q'}} \ldots v_{j_1}\}_{\ta}\,,
\end{align*}
which is only possible if $q = q'$, $(i_1, i_2, \ldots, i_{q}) = (j_1, j_2, \ldots, j_{q'})$ and $u_{i_1} \ldots u_{i_q} = v_{i_1} \ldots v_{i_q}$. 

Conversely, if there is some $i_1, i_2, \ldots, i_{q} \in \{1, 2, \ldots, n\}$ and $q \leq \kappa$ with $u_{i_1} \ldots u_{i_q} = v_{i_1} \ldots v_{i_q}$, then 
\begin{align*}
W_1 &= \{i_1 \ldots i_q\}_{\ta} (\emptyset_{\ta})^{\kappa-q} \emptyset_{\#} (\emptyset_{\ta})^{(\kappa \, p_{\max}) - |u_{i_1} \ldots u_{i_q}|} \{u_{i_q} \ldots u_{i_1}\}_{\ta} \in L(G_1)\,,\\
W_2 &= \{i_1 \ldots i_q\}_{\ta} (\emptyset_{\ta})^{\kappa-q} \emptyset_{\#} (\emptyset_{\ta})^{(\kappa \, p_{\max}) - |v_{i_1} \ldots v_{i_q}|} \{v_{i_q} \ldots v_{i_1}\}_{\ta} \in L(G_2)\,,
\end{align*}
and $W_1 = W_2$. This implies that $\llbracket W_1 \rrbracket_{\Sigma, \Gamma} \in \llbracket L(G_1) \rrbracket_{\Sigma, \Gamma}(w) \cap \llbracket L(G_2) \rrbracket_{\Sigma, \Gamma}(w)$. This concludes the proof of NP-hardness of the table disjointness problem.

It remains to show that the table disjointness problem for context-free extractors can be solved in polynomial time, provided that one of the two extractors is regular.

Let $E_1$ and $E_2$ be context-free extractors represented by a CFG $G_1$ and an NFA $M_2$, and let $w$ be a string. From $M_2$, we can easily obtain in polynomial time an NFA $M_{2, w}$ with $L(M_{2, w}) = \slice{w}(L(M_2))$ (see, e.\,g., proof of Theorem~\ref{regTableDisjointmentTheorem}). We observe that $E_1(w) \cap E_2(w) \neq \emptyset$ if and only if $L(G_1) \cap L(M_{2, w}) \neq \emptyset$. In order to decide the latter, we first construct a context-free grammar $G'$ with $L(G') = L(G_1) \cap L(M_{2, w})$, which can be done in polynomial time, and then we check whether $L(G') = \emptyset$, which is also possible in polynomial time.
\end{proof}

Finally, we consider the table containment and table equivalence problem for context-free extractors.

\begin{theorem}\label{tableContainmentEquivalenceContextFreeTheorem}
The table containment and equivalence problem for context-free extractors is coNP-complete, but the table containment problem for context-free extractors can be solved in polynomial time, provided that $E_2$ is given by a DFA.
\end{theorem}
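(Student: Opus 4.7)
The plan is to handle the three claims separately: coNP membership of both table containment and table equivalence, coNP hardness inherited from the regular case, and a polynomial-time algorithm for table containment when $E_2$ is a DFA via a reduction to table disjointness.

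For coNP membership, given context-free extractors $E_1, E_2$ represented by CFGs $G_1, G_2$ and a string $w$, I would show that non-containment $E_1(w) \nsubseteq E_2(w)$ is in NP as follows: guess a marker string $W$ of length $|w|$ with $\sign(W) = w$ (so $|\Gamma| \cdot |w|$ bits suffice) and verify in polynomial time via CYK that $W \in L(G_1)$ and $W \notin L(G_2)$. This certifies the existence of some $t = \llbracket W \rrbracket \in E_1(w) \setminus E_2(w)$. Non-equivalence reduces to checking non-containment in one of the two directions, so equivalence is also in coNP.

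For coNP hardness, I would simply invoke Theorem~\ref{coNPHardnessRegularContainment}: the hardness of table containment and table equivalence already holds for regular extractors given by a DFA and an NFA respectively, and every regular extractor is trivially a context-free extractor (any NFA can be converted in linear time to a right-linear CFG). Hence the hardness lifts verbatim to context-free extractors.

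For the tractable case, suppose $E_1$ is given by a CFG $G_1$ and $E_2$ is given by a DFA $M_2$. I would first complement $M_2$ in linear time to obtain a DFA $M_2'$ with $L(M_2') = \overline{L(M_2)}$, so that $\llbracket L(M_2') \rrbracket = \neg E_2$ by Proposition~\ref{languageOperationsBooleanProposition}, and hence $\llbracket L(M_2') \rrbracket(w) = \overline{E_2(w)}$. Then
\begin{equation*}
E_1(w) \subseteq E_2(w) \iff E_1(w) \cap \llbracket L(M_2') \rrbracket(w) = \emptyset,
\end{equation*}
which is an instance of the table disjointness problem between a context-free extractor and a regular one. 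By Theorem~\ref{tableDisjointnessContextFreeTheorem}, this disjointness check runs in polynomial time, so the whole containment test does as well.

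The only step requiring genuine thought is the tractable case, and the key insight is the reduction of containment to disjointness against the complement; once one remembers that DFAs (unlike NFAs or CFGs) can be complemented efficiently, everything else follows from results already established earlier in the paper.
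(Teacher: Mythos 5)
Your proposal is correct and follows essentially the same route as the paper's proof: guess-and-verify a witness marker string for coNP membership, inherit hardness from Theorem~\ref{coNPHardnessRegularContainment}, and reduce containment to disjointness against the complemented DFA via Theorem~\ref{tableDisjointnessContextFreeTheorem}. The only (harmless) addition is your explicit remark that NFAs convert to right-linear CFGs, which the paper leaves implicit.
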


\begin{proof}
The coNP-hardness of the table containment and table equivalence problem for context-free extractors is a direct consequence of Theorem~\ref{coNPHardnessRegularContainment}. The coNP-membership is easy to see: Let $M_1$ and $M_2$ be CFG that represent context-free extractors $E_1$ and $E_2$, respectively. Then we can guess a marker string $W$ with $\sign(W) = w$ and then check whether $W \in L(G_1)$ and $W \notin L(G_2)$, which is characteristic for $E_1(w) \nsubseteq E_2(w)$. The coNP-membership of the table equivalence problem follows directly from the coNP-membership of the table containment problem. 

It remains to show that the table containment problem for context-free extractors can be solved in polynomial time, provided that $E_2$ is given by a DFA.

Let $E_1$ and $E_2$ be context-free extractors represented by a CFG $G_1$ and an DFA $M_2$, and let $w$ be a string. We observe that $E_1(w) \nsubseteq E_2(w)$ if and only if $E_1(w) \cap \neg E_2(w) \neq \emptyset$. Since $M_2$ is a DFA, we can easily construct a DFA $M'_2$ with $L(M'_2) = \overline{L(M_2)}$ in polynomial time. Since $\llbracket L(M'_2) \rrbracket_{\Sigma, \Gamma} = \neg E_2$ (see Proposition~\ref{languageOperationsBooleanProposition}), we only have to check whether $E_1(w) \cap \neg E_2(w) \neq \emptyset$ for a context-free extractor $E_1$ and a regular extractor $\neg E_2$, which, according to Theorem~\ref{tableDisjointnessContextFreeTheorem}, can be done in polynomial time.
\end{proof}

\section{Conclusions}

This paper attempts to answer the following question: Given the fact that the work on information extraction in database theory is based on concepts and techniques of classical formal language theory, is there a unifying framework rooted purely in formal language theory (independent on specific data management tasks)? In particular, such a framework should be robust in the following sense: While existing information extraction techniques in database theory are somehow ``based on regular languages'', but ultimately designed in an ad-hoc way to solve a specific data management task at hand, our framework should allow to simply replace ``regular languages'' by just any language class. 

There are several aspects of this approach that might be beneficial for the future. Most importantly, we might identify new theoretical questions and worthwhile research tasks in formal language theory, which will allow us to progress this traditional field of theoretical research. For example, decision problems like membership, inclusion, equivalence, universality etc. are well-investigated for many language classes, but the table problems (see Section~\ref{sec:tableProbs}) are a new set of decision problems that do not arise from classical considerations in formal languages (intuitively speaking, table problems are problems concerned with ``finite slices'' of languages). Moreover, while undecidability is a very common obstacle for problems on formal languages, the table problems are all trivially decidable as long as the membership problem of the underlying class of marker languages is decidable (a property shared by virtually all useful language classes). This can be a new play area for complexity theoretical and algorithmic research within formal language theory.

Another observation is that regular extractors can also be seen as a restricted form of transducers (called \emph{annotation transducers} or \emph{annotation automata} in~\cite{MunozRiveros2025,GawrychowskiEtAl2024}), i.e., a transducer that can either erase an input symbol, or replace it by a marker set. However, in order to get the complete information of the corresponding $\Gamma$-tuple, we would also need along with the marker set the position of the input symbol in the overall input string (which, technically, requires an unbounded output alphabet). Hence, we are dealing with a setting that is close to, but not really covered by existing models in the theory of transducers. 

In summary, while techniques from formal languages can be beneficially exploited for data management tasks, one might also go in the other direction and ask whether these tools based on formal languages imply some interesting new theoretical questions to be investigated in the broader, more general setting of formal language theory. 

Finally, there might also be some practical implications. Arguably, extractor classes based on languages that are strictly more powerful than context-free languages likely suffer from intractability. However, there are many well-investigated subregular language classes as well as language classes sandwiched between regular and context-free languages. All of those are possible candidates for practically relevant extractor classes.

\section*{Acknowledgments}

The author is supported by the German Research Foundation (Deutsche Forschungsgemeinschaft, DFG) – project number 522576760 (gef\"{o}rdert durch die Deutsche Forschungsgemeinschaft (DFG) – Projektnummer 522576760).


\appendix

\section{Full Proofs for Section~\ref{sec:operations}}\label{sec:operationsAppendix}

\begin{proposition}
For $i \in \{1, 2, 3\}$, let $t_i$ be a $\Gamma_i$-tuple for $w_i$, let $T_i$ be a $\Gamma_i$-table for $w_i$, and let $E_i$ be a $\Gamma_i$-extractor. Then we have $(t_1 \cdot t_2) \cdot t_3 = t_1 \cdot (t_2 \cdot t_3)$, $(T_1 \cdot T_2) \cdot T_3 = T_1 \cdot (T_2 \cdot T_3)$ and $(E_1 \cdot E_2) \cdot E_3 = E_1 \cdot (E_2 \cdot E_3)$.
\end{proposition}

\begin{proof}
The first point follows directly from the definition of the concatenation operation.

The second point can be shown as follows (observe that this assumes the associativity for tuples):
\begin{align*}
(T_1 \cdot T_2) \cdot T_3 &= \{t_1 \cdot t_2 \mid t_1 \in T_1, t_2 \in T_2\} \cdot T_3\\
& = \{(t_1 \cdot t_2) \cdot t_3 \mid t_1 \in T_1, t_2 \in T_2, t_3 \in T_3\}\\
& = \{t_1 \cdot (t_2 \cdot t_3) \mid t_1 \in T_1, t_2 \in T_2, t_3 \in T_3\}\\
& = T_1 \cdot \{t_2 \cdot t_3 \mid t_2 \in T_2, t_3 \in T_3\}\\
& = T_1 \cdot (T_2 \cdot T_3)
\end{align*}

For the third point, let $w \in \Sigma^*$ be arbitrarily chosen. Then we have (observe that this assumes the associativity for tables):
\begin{align*}
((E_1 \cdot E_2) \cdot E_3)(w) &= \bigcup_{w = u_1 \cdot u_2} (E_1 \cdot E_2)(u_1) \cdot E_3(u_2)\\
&= \bigcup_{w = u_1 \cdot u_2} (\bigcup_{u_1 = v_1 \cdot v_2} E_1(v_1) \cdot E_2(v_2)) \cdot E_3(u_2)\\
&= \bigcup_{w = v_1 \cdot v_2 \cdot u_2} (E_1(v_1) \cdot E_2(v_2)) \cdot E_3(u_2)\\
&= \bigcup_{w = v_1 \cdot v_2 \cdot u_2} E_1(v_1) \cdot (E_2(v_2) \cdot E_3(u_2))\\
&= \bigcup_{w = v_1 \cdot u'} E_1(v_1) \cdot (\bigcup_{u' = v_2 \cdot u_2} (E_2(v_2) \cdot E_3(u_2)))\\
&= \bigcup_{w = v_1 \cdot u'} E_1(v_1) \cdot (E_2 \cdot E_3)(u')\\
&= (E_1 \cdot (E_2 \cdot E_3))(w)
\end{align*}
\end{proof}

\section{Full Proofs for Section~\ref{sec:isomorphism}}\label{sec:IsomorphismAppendix}

The theorems of Section~\ref{sec:isomorphism} directly follow from the propositions in this section. Let us first take care of the Boolean set operations.

\begin{proposition}\label{languageOperationsBooleanProposition}
Let $L_1$ and $L_2$ be $\Sigma$-signed $\Gamma$-marker languages. Then $L_1 \odot L_2$ is a $\Sigma$-signed $\Gamma$-marker language with $\llbracket L_1 \odot L_2 \rrbracket_{\Sigma, \Gamma} = \llbracket L_1 \rrbracket_{\Sigma, \Gamma} \odot \llbracket L_2 \rrbracket_{\Sigma, \Gamma}$ for every $\odot \in \{\cup, \cap, \setminus\}$. Moreover, $\overline{L_1}$ is a $\Sigma$-signed $\Gamma$-marker language with $\llbracket \overline{L_1} \rrbracket_{\Sigma, \Gamma} = \neg \llbracket L_1 \rrbracket_{\Sigma, \Gamma}$.
\end{proposition}

\begin{proof}
We first prove the case $\odot = \cup$. We note that $L_1 \cup L_2$ is obviously a $\Sigma$-signed $\Gamma$-marker language. In order to prove that $\llbracket L_1 \cup L_2 \rrbracket_{\Sigma, \Gamma} = \llbracket L_1 \rrbracket_{\Sigma, \Gamma} \cup \llbracket L_2 \rrbracket_{\Sigma, \Gamma}$, let $w \in \Sigma^*$. 
\begin{align*}
&\llbracket L_1 \cup L_2 \rrbracket_{\Sigma, \Gamma}(w) = \{\llbracket W \rrbracket_{\Sigma, \Gamma} \mid W \in \slice{w}(L_1 \cup L_2)\} = \{\llbracket W \rrbracket_{\Sigma, \Gamma} \mid W \in \slice{w}(L_1) \cup \slice{w}(L_2)\} = \\
&\{\llbracket W \rrbracket_{\Sigma, \Gamma} \mid W \in \slice{w}(L_1)\} \cup \{\llbracket W \rrbracket_{\Sigma, \Gamma} \mid W \in \slice{w}(L_2)\} = \llbracket L_1 \rrbracket_{\Sigma, \Gamma}(w) \cup \llbracket L_2 \rrbracket_{\Sigma, \Gamma}(w)
\end{align*}
It can be easily verified that the above equation is also correct for the other choices of $\odot \in \{\cup, \cap, \setminus\}$.

Note that $\overline{L_1}$ is a $\Sigma$-signed $\Gamma$-marker language, and recall that $\neg \llbracket L_1 \rrbracket_{\Sigma, \Gamma}$ is a $(\Sigma, \Gamma)$-extractor defined by $(\neg \llbracket L_1 \rrbracket_{\Sigma, \Gamma})(w) = \overline{\llbracket L_1 \rrbracket_{\Sigma, \Gamma}(w)} = \{\llbracket W \rrbracket_{\Sigma, \Gamma} \mid W \notin L_1 \wedge \sign(W) = w\}$, for every $w \in \Sigma^*$. Now let $w \in \Sigma^*$ be arbitrarily chosen.

\begin{align*}
&\llbracket \overline{L_1} \rrbracket_{\Sigma, \Gamma}(w) = \{\llbracket W \rrbracket_{\Sigma, \Gamma} \mid W \in \slice{w}(\overline{L_1})\} = \\
&\{\llbracket W \rrbracket_{\Sigma, \Gamma} \mid W \in \overline{L_1} \wedge \sign(W) = w\} = \\
&\{\llbracket W \rrbracket_{\Sigma, \Gamma} \mid W \notin L_1 \wedge \sign(W) = w\} = \\
&\overline{\llbracket L_1 \rrbracket_{\Sigma, \Gamma}(w)} = 
(\neg \llbracket L_1 \rrbracket_{\Sigma, \Gamma})(w)\,.
\end{align*}

\end{proof}

The next proposition states that the concatenation and the join variants on marker strings correspond to the respective operations on tuples. 

\begin{proposition}\label{binaryLanguageOperationsStringsProposition}
Let $W_1$ be a $\Sigma$-signed $\Gamma_1$-marker string and let $W_2$ be a $\Sigma$-signed $\Gamma_2$-marker string. Then $W_1 \cdot W_2$ is a $\Sigma$-signed $(\Gamma_1 \cup \Gamma_2)$-marker string with $\llbracket W_1 \cdot W_2 \rrbracket_{\Sigma, \Gamma_1 \cup \Gamma_2} = \llbracket W_1 \rrbracket_{\Sigma, \Gamma_1} \cdot \llbracket W_2 \rrbracket_{\Sigma, \Gamma_2}$. If further $\sign(W_1) = \sign(W_2)$, then $W_1 \circ W_2$ is a $\Sigma$-signed $(\Gamma_1 \cup \Gamma_2)$-marker string with $\llbracket W_1 \circ W_2 \rrbracket_{\Sigma, \Gamma_1 \cup \Gamma_2} = \llbracket W_1 \rrbracket_{\Sigma, \Gamma_1} \circ \llbracket W_2 \rrbracket_{\Sigma, \Gamma_2}$ for every $\circ \in \{\bowtie, \bowtie_{\cup}, \bowtie_{\cap}, \bowtie_{\setminus}\}$.
\end{proposition}

\begin{proof}
By definition, $W_1 \cdot W_2$ is a  $\Sigma$-signed $(\Gamma_1 \cup \Gamma_2)$-marker string for $\sign(W_1) \cdot \sign(W_2)$. We next proof that $\llbracket W_1 \cdot W_2 \rrbracket_{\Sigma, \Gamma_1 \cup \Gamma_2} = \llbracket W_1 \rrbracket_{\Sigma, \Gamma_1} \cdot \llbracket W_2 \rrbracket_{\Sigma, \Gamma_2}$. Let $m_1 = |W_1|$ and $m_2 = |W_2|$. For every $x \in \Gamma_1 \cup \Gamma_2$ and for every $i \in \{1, 2, \ldots, m_1 + m_2\}$, we have that
\begin{align*}
&i \in \llbracket W_1 \cdot W_2 \rrbracket_{\Sigma, \Gamma_1 \cup \Gamma_2}(x) \Leftrightarrow x \in (W_1 \cdot W_2)[i] \Leftrightarrow\\
&(1 \leq i \leq m_1 \wedge x \in W_1[i]) \vee (m_1 + 1 \leq i \leq m_1 + m_2 \wedge x \in W_2[i-m_1] \Leftrightarrow\\
&(1 \leq i \leq m_1 \wedge i \in \llbracket W_1 \rrbracket_{\Sigma, \Gamma_1}(x)) \vee (m_1 + 1 \leq i \leq m_1 + m_2 \wedge (i-m_1) \in \llbracket W_2 \rrbracket_{\Sigma, \Gamma_2}(x) \Leftrightarrow\\
&i \in (\llbracket W_1 \rrbracket_{\Sigma, \Gamma_1} \cdot \llbracket W_2 \rrbracket_{\Sigma, \Gamma_2})(x)\,.
\end{align*}
Thus, $\llbracket W_1 \cdot W_2 \rrbracket_{\Sigma, \Gamma_1 \cup \Gamma_2} = \llbracket W_1 \rrbracket_{\Sigma, \Gamma_1} \cdot \llbracket W_2 \rrbracket_{\Sigma, \Gamma_2}$.

Next, let us assume that $\sign(W_1) = \sign(W_2)$. By definition, $W_1 \bowtie_{\cup} W_2$ is a $\Sigma$-signed $(\Gamma_1 \cup \Gamma_2)$-marker string. We next show that $\llbracket W_1 \bowtie_{\cup} W_2 \rrbracket_{\Sigma, \Gamma_1 \cup \Gamma_2} = \llbracket W_1 \rrbracket_{\Sigma, \Gamma_1} \bowtie_{\cup} \llbracket W_2 \rrbracket_{\Sigma, \Gamma_2}$ holds. For every $x \in \Gamma_1 \cup \Gamma_2$ and every $i \in \{1, 2, \ldots, |W_1|\}$, we have that
\begin{align*}
&i \in \llbracket W_1 \bowtie_{\cup} W_2 \rrbracket_{\Sigma, \Gamma_1 \cup \Gamma_2}(x) \Leftrightarrow x \in (W_1 \bowtie_{\cup} W_2)[i] \Leftrightarrow x \in W_1[i] \vee x \in W_2[i] \Leftrightarrow \\
&i \in \llbracket W_1 \rrbracket_{\Sigma, \Gamma_1}(x) \vee i \in \llbracket W_2 \rrbracket_{\Sigma, \Gamma_2}(x) \Leftrightarrow i \in (\llbracket W_1 \rrbracket_{\Sigma, \Gamma_1} \bowtie_{\cup} \llbracket W_2 \rrbracket_{\Sigma, \Gamma_2})(x)\,.
\end{align*}
The cases $\circ \in \{\bowtie, \bowtie_{\cap}, \bowtie_{\setminus}\}$ follow analogously. 
\end{proof}

With the help of Proposition~\ref{binaryLanguageOperationsStringsProposition}, we can conclude that the concatenation, the join variants and Kleene-star on marker languages correspond to the respective operations on extractors.

\begin{proposition}\label{binaryLanguageOperationsLanguagesProposition}
Let $L_1$ and $L_2$ be $\Sigma$-signed $\Gamma_1$- and $\Sigma$-signed $\Gamma_2$-marker languages, respectively, and let $\circ \in \{\bowtie, \bowtie_{\cup}, \bowtie_{\cap}, \bowtie_{\setminus}, \cdot\}$. Then $L_1 \circ L_2$ is a $\Sigma$-signed $(\Gamma_1 \cup \Gamma_2)$-marker language with $\llbracket L_1 \circ L_2 \rrbracket_{\Sigma, \Gamma_1 \cup \Gamma_2} = \llbracket L_1 \rrbracket_{\Sigma, \Gamma_1} \circ \llbracket L_2 \rrbracket_{\Sigma, \Gamma_2}$, and $L_1^*$ is a $\Gamma_1$-marker language with $\llbracket L_1^* \rrbracket_{\Sigma, \Gamma_1} = (\llbracket L_1 \rrbracket_{\Sigma, \Gamma_2})^*$. 
\end{proposition}

\begin{proof}
We first show the statement for every $\circ \in \{\bowtie, \bowtie_{\cup}, \bowtie_{\cap}, \bowtie_{\setminus}\}$ (i.e., the concatenation and Kleene star is handled later). That $L_1 \circ L_2$ is a $\Sigma$-signed $(\Gamma_1 \cup \Gamma_2)$-marker language follows directly from the definition of the operation $\circ$. Recall that $L_1 \circ L_2 = \bigcup_{w \in \Sigma^*} \{W_1 \circ W_2 \mid W_1 \in \slice{w}(L_1), W_2 \in \slice{w}(L_2)\}$. In order to show $\llbracket L_1 \circ L_2 \rrbracket_{\Sigma, \Gamma_1 \cup \Gamma_2} = \llbracket L_1 \rrbracket_{\Sigma, \Gamma_1} \circ \llbracket L_2 \rrbracket_{\Sigma, \Gamma_2}$, let $w \in \Sigma^*$ be arbitrarily chosen. We have the following (note that we use Proposition~\ref{binaryLanguageOperationsStringsProposition}, i.e.,  $\llbracket W_1 \circ W_2 \rrbracket_{\Sigma, \Gamma_1 \cup \Gamma_2} = \llbracket W_1 \rrbracket_{\Sigma, \Gamma_1} \circ \llbracket W_2 \rrbracket_{\Sigma, \Gamma_2}$ for marker-strings $W_1$ and $W_2$):
\begin{align*}
&\llbracket L_1 \circ L_2 \rrbracket_{\Sigma, \Gamma_1 \cup \Gamma_2}(w) = \{\llbracket W \rrbracket_{\Sigma, \Gamma_1 \cup \Gamma_2} \mid W \in \slice{w}(L_1 \circ L_2)\} =\\
&\{\llbracket W \rrbracket_{\Sigma, \Gamma_1 \cup \Gamma_2} \mid W \in \slice{w}(\bigcup_{u \in \Sigma^*} \{W_1 \circ W_2 \mid W_1 \in \slice{u}(L_1), W_2 \in \slice{u}(L_2)\})\} =\\
&\{\llbracket W \rrbracket_{\Sigma, \Gamma_1 \cup \Gamma_2} \mid W \in \{W_1 \circ W_2 \mid W_1 \in \slice{w}(L_1), W_2 \in \slice{w}(L_2)\}\} =\\
& \{\llbracket W_1 \circ W_2 \rrbracket_{\Sigma, \Gamma_1 \cup \Gamma_2} \mid W_1 \in \slice{w}(L_1), W_2 \in \slice{w}(L_2)\}  = \\
& \{\llbracket W_1 \rrbracket_{\Sigma, \Gamma_1} \circ \llbracket W_2 \rrbracket_{\Sigma, \Gamma_2} \mid W_1 \in \slice{w}(L_1), W_2 \in \slice{w}(L_2)\} = \\
& \{\llbracket W_1 \rrbracket_{\Sigma, \Gamma_1} \mid W_1 \in \slice{w}(L_1)\} \circ \{\llbracket W_2 \rrbracket_{\Sigma, \Gamma_2} \mid W_2 \in \slice{w}(L_2)\} =\\
& \llbracket L_1\rrbracket_{\Sigma, \Gamma_1}(w) \circ \llbracket L_2 \rrbracket_{\Sigma, \Gamma_2}(w) = 
(\llbracket L_1\rrbracket_{\Sigma, \Gamma_1} \circ \llbracket L_2 \rrbracket_{\Sigma, \Gamma_2})(w)
\end{align*}
We next take care of the concatenation, and we first observe that $L_1 \cdot L_2$ is a $\Sigma$-signed $(\Gamma_1 \cup \Gamma_2)$-marker language by definition of the concatenation. Recall that $L_1 \cdot L_2 = \{W_1 \cdot W_2 \mid W_1 \in L_1, W_2 \in L_2\}$. In order to show $\llbracket L_1 \cdot L_2 \rrbracket_{\Sigma, \Gamma_1 \cup \Gamma_2} = \llbracket L_1 \rrbracket_{\Sigma, \Gamma_1} \cdot \llbracket L_2 \rrbracket_{\Sigma, \Gamma_2}$, let $w \in \Sigma^*$ be arbitrarily chosen. We have the following (note that we will use Proposition~\ref{binaryLanguageOperationsStringsProposition}, i.e.,  $\llbracket W_1 \cdot W_2 \rrbracket_{\Sigma, \Gamma_1 \cup \Gamma_2} = \llbracket W_1 \rrbracket_{\Sigma, \Gamma_1} \cdot \llbracket W_2 \rrbracket_{\Sigma, \Gamma_2}$ for a $\Sigma$-signed $\Gamma_1$-marker string $W_1$ and a $\Sigma$-signed $\Gamma_2$-marker string $W_2$; moreover, recall that $(E_1 \cdot E_2)(w) = \bigcup_{w = w_1 \cdot w_2} E_1(w_1) \cdot E_2(w_2)$ for a $(\Sigma, \Gamma_1)$-extractor $E_1$ and a $(\Sigma, \Gamma_2)$-extractor $E_2$):
\begin{align*}
&\llbracket L_1 \cdot L_2 \rrbracket_{\Sigma, \Gamma_1 \cup \Gamma_2}(w) = \{\llbracket W \rrbracket_{\Sigma, \Gamma_1 \cup \Gamma_2} \mid W \in \slice{w}(L_1 \cdot L_2)\} =\\
&\{\llbracket W \rrbracket_{\Sigma, \Gamma_1 \cup \Gamma_2} \mid W \in \slice{w}(\{W_1 \cdot W_2 \mid W_1 \in L_1, W_2 \in L_2\})\} =\\
&\{\llbracket W_1 \cdot W_2 \rrbracket_{\Sigma, \Gamma_1 \cup \Gamma_2} \mid W_1 \in L_1, W_2 \in L_2, \sign(W_1 \cdot W_2) = w\} =\\
&\bigcup_{w = w_1 w_2} \{\llbracket W_1 \cdot W_2 \rrbracket_{\Sigma, \Gamma_1 \cup \Gamma_2} \mid W_1 \in \slice{w_1}(L_1), W_2 \in \slice{w_2}(L_2)\} =\\
&\bigcup_{w = w_1 w_2} \{\llbracket W_1 \rrbracket_{\Sigma, \Gamma_1} \cdot \llbracket W_2 \rrbracket_{\Sigma, \Gamma_2} \mid W_1 \in \slice{w_1}(L_1), W_2 \in \slice{w_2}(L_2)\} =\\
& \bigcup_{w = w_1 w_2} \{\llbracket W_1 \rrbracket_{\Sigma, \Gamma_1} \mid W_1 \in \slice{w_1}(L_1)\} \cdot \{\llbracket W_2 \rrbracket_{\Sigma, \Gamma_2} \mid W_2 \in \slice{w_2}(L_2)\} =\\
& \bigcup_{w = w_1 w_2} \llbracket L_1 \rrbracket_{\Sigma, \Gamma_1}(w_1) \cdot \llbracket L_2 \rrbracket_{\Sigma, \Gamma_2}(w_2) = (\llbracket L_1 \rrbracket_{\Sigma, \Gamma_1} \cdot \llbracket L_2 \rrbracket_{\Sigma, \Gamma_2})(w)\,.
\end{align*}

It remains to take care of the Kleene star. To this end, let us first recall that $L_1^* = \bigcup_{k \geq 0} L_1^k$ for a $\Sigma$-signed $\Gamma_1$-marker language $L_1$, that $(E_1 \cdot E_2)(w) = \bigcup_{w = w_1 \cdot w_2} E_1(w_1) \cdot E_2(w_2)$ for a $(\Sigma, \Gamma_1)$-extractor $E_1$ and a $(\Sigma, \Gamma_2)$-extractor $E_2$, and that $(E_1)^* = \bigcup_{k \geq 1} (E_1)^k$ for a $(\Sigma, \Gamma_1)$-extractor $E$. Let $w \in \Sigma^*$ be arbitrarily chosen.

\begin{align*}
&\llbracket L_1^* \rrbracket_{\Sigma, \Gamma_1}(w) = \{\llbracket W \rrbracket_{\Sigma, \Gamma_1} \mid W \in \slice{w}(L_1^*)\} = \bigcup_{k \geq 1} \{\llbracket W \rrbracket_{\Sigma, \Gamma_1} \mid W \in \slice{w}(L_1^k)\} = \\
&\bigcup_{k \geq 1} \bigcup_{w = w_1 \ldots w_k} \{\llbracket W_1 \cdot \ldots \cdot W_k \rrbracket_{\Sigma, \Gamma_1} \mid W_i \in \slice{w_i}(L_1), i \in \{1, 2, \ldots, k\}\}= \\
&\bigcup_{k \geq 1} \bigcup_{w = w_1 \ldots w_k} \{\llbracket W_1 \rrbracket_{\Sigma, \Gamma_1} \cdot \ldots \cdot \llbracket W_k \rrbracket_{\Sigma, \Gamma_1} \mid W_i \in \slice{w_i}(L_1), i \in \{1, 2, \ldots, k\}\} = \\
&\bigcup_{k \geq 1} \bigcup_{w = w_1 \ldots w_k} \{\llbracket W_1 \rrbracket_{\Sigma, \Gamma_1} \mid W_1 \in \slice{w_1}(L_1)\} \cdot \ldots \cdot \{\llbracket W_k \rrbracket_{\Sigma, \Gamma_1} \mid W_k \in \slice{w_k}(L_1)\} =\\
&\bigcup_{k \geq 1} \bigcup_{w = w_1 \ldots w_k} \llbracket L_1 \rrbracket_{\Sigma, \Gamma_1}(w_1) \cdot \ldots \cdot \llbracket L_1 \rrbracket_{\Sigma, \Gamma_1}(w_k) = \\
& \bigcup_{k \geq 1} (\llbracket L_1 \rrbracket_{\Sigma, \Gamma_1}^k)(w) = (\bigcup_{k \geq 1} \llbracket L_1 \rrbracket_{\Sigma, \Gamma_1}^k)(w) = ((\llbracket L_1 \rrbracket_{\Sigma, \Gamma_1})^*)(w)
\end{align*}

\end{proof}

The next proposition establishes that the unary operations on marker strings correspond to the respective operations on tuples.

\begin{proposition}\label{unaryLanguageOperationsStringsProposition}
Let $W$ be a $\Sigma$-signed $\Gamma$-marker string. Then 
\begin{itemize}
\item $\llbracket \pi_{\Gamma'}(W) \rrbracket_{\Sigma, \Gamma'} = \pi_{\Gamma'}(\llbracket W \rrbracket_{\Sigma, \Gamma})$ for every $\Gamma' \subseteq \Gamma$. 
\item $\llbracket \colmerge_{y, y', \odot}(W) \rrbracket_{\Sigma, \Gamma \setminus \{y'\}} = \colmerge_{y, y', \odot}(\llbracket W \rrbracket_{\Sigma, \Gamma})$ for every $y, y' \in \Gamma$ and $\odot \in \{\cup, \cap, \setminus\}$.
\item $\llbracket \colrename_{y \to z}(W) \rrbracket_{\Sigma, (\Gamma \setminus \{y\}) \cup \{z\}} = \colrename_{y \to z}(\llbracket W \rrbracket_{\Sigma, \Gamma})$ for every $y \in \Gamma$ and $z \notin \Gamma$.
\end{itemize}
\end{proposition}

\begin{proof}
This follows directly from the definitions of these operators on tuples and marker strings. 
\end{proof}

Finally, we observe that these unary operations on marker languages correspond to the respective operations on extractors:

\begin{proposition}\label{unaryLanguageOperationsLanguagesProposition}
Let  $L$ be a $\Sigma$-signed $\Gamma$-marker language, let $\Gamma' \subseteq \Gamma$, let $y, y' \in \Gamma$, and let $z \notin \Gamma$.
Then $\pi_{\Gamma'}(L)$ is a $\Sigma$-signed $\Gamma'$-marker language with $\llbracket \pi_{\Gamma'}(L) \rrbracket_{\Sigma, \Gamma'} = \pi_{\Gamma'}(\llbracket L \rrbracket_{\Sigma, \Gamma})$, $f(L)$ is a $\Sigma$-signed $(\Gamma \setminus \{y'\})$-marker language with $\llbracket f(L) \rrbracket_{\Sigma, \Gamma \setminus \{y'\}} = f(\llbracket L \rrbracket_{\Sigma, \Gamma})$ for every $f \in \{\colmerge_{y, y', \cup}, \colmerge_{y, y', \cap}, \colmerge_{y, y', \setminus}\}$, and $\colrename_{y \to z}(L)$ is a $\Sigma$-signed $((\Gamma \setminus \{y\}) \cup \{z\})$-marker language with $\llbracket \colrename_{y \to z}(L) \rrbracket_{\Sigma, (\Gamma \setminus \{y\}) \cup \{z\}} = \colrename_{y \to z}(\llbracket L \rrbracket_{\Sigma, \Gamma})$.
\end{proposition}

\begin{proof}
Let $w \in \Sigma^*$ and $f \in \{\pi_{\Gamma'}, \colmerge_{y, y', \cup}, \colmerge_{y, y', \cap}, \colmerge_{y, y', \setminus}, \colrename_{y \to z}\}$. In the following, we will use Proposition~\ref{unaryLanguageOperationsStringsProposition}, i.e., $\llbracket f(W) \rrbracket_{\Sigma, \widehat{\Gamma}} = f(\llbracket W \rrbracket_{\Sigma, \Gamma})$ for $\Sigma$-signed $\widehat{\Gamma}$-marker strings $W$ (for the apropriate $\widehat{\Gamma}$, which depends on $f$). Also recall that $f(L) = \{f(W) \mid W \in L\}$ for a marker language $L$. We have that
\begin{align*}
&\llbracket f(L) \rrbracket_{\Sigma, \widehat{\Gamma}}(w) = \llbracket \{f(W) \mid W \in L\} \rrbracket_{\Sigma, \widehat{\Gamma}}(w) = \llbracket \{f(W) \mid W \in \slice{w}(L)\} \rrbracket_{\Sigma, \widehat{\Gamma}} = \\
&\{\llbracket f(W) \rrbracket_{\Sigma, \widehat{\Gamma}} \mid W \in \slice{w}(L)\} = \{f(\llbracket W \rrbracket_{\Sigma, \Gamma})  \mid W \in \slice{w}(L)\} =\\
&f(\{\llbracket W \rrbracket_{\Sigma, \Gamma}  \mid W \in \slice{w}(L)\}) = f(\llbracket L \rrbracket_{\Sigma, \Gamma}(w)) = (f(\llbracket L \rrbracket_{\Sigma, \Gamma}))(w)\,,
\end{align*}
where $\widehat{\Gamma}$ depends on $f$, i.e., $\widehat{\Gamma} = \Gamma'$ for $f = \pi_{\Gamma'}$,  $\widehat{\Gamma} = \Gamma \setminus \{y'\}$ for $f \in \{\colmerge_{y, y', \cup}, \colmerge_{y, y', \cap}, \colmerge_{y, y', \setminus}\}$, and $\widehat{\Gamma} = (\Gamma \setminus \{y\}) \cup \{z\}$ for $f = \colrename_{y \to z}$.
\end{proof}

\end{document}